\documentclass{scrartcl}

\usepackage[english]{babel}


\usepackage{authblk}
\usepackage{amsthm}
\usepackage{amsmath}
\usepackage{amssymb}
\usepackage{mathtools}
\usepackage{graphicx}
\usepackage{hyperref}
\hypersetup{colorlinks,allcolors=blue}
\usepackage{cleveref}
\usepackage{subfigure}
\usepackage{algorithm}%
\usepackage{algorithmicx}%
\usepackage{algpseudocode}%

\usepackage{biblatex}
\addbibresource{bibliography.bib}
\AtEveryBibitem{
    \clearfield{urlyear}
    \clearfield{urlmonth}
    \clearfield{url}
}

\usepackage{csquotes}

\usepackage{tikz}
\usetikzlibrary{calc, arrows, patterns, angles, quotes, decorations.pathreplacing}
\tikzstyle{vertex}=[circle, fill, inner sep=0pt, minimum size=6pt]
\newcommand{\vertex}{\node[vertex]}

\usepackage{thmtools}
\usepackage{thm-restate}

\theoremstyle{remark}

\theoremstyle{plain}
\declaretheorem[name=Theorem]{thm}
\newtheorem{corollary}[thm]{Corollary}
\newtheorem{theorem}[thm]{Theorem}
\newtheorem{lemma}[thm]{Lemma}

\let\given\givenbase

\newcommand{\poly}{\operatorname{poly}}
\def\prob{\ensuremath\mathbb{P}}
\def\expect{\ensuremath\mathbb{E}}
\newcommand*\dd{\mathop{}\!\mathrm{d}}
\def\X{\ensuremath\mathcal{X}}
\def\Y{\ensuremath\mathcal{Y}}

\newcommand{\pder}[2][]{\frac{\partial#1}{\partial#2}}
\newcommand{\der}[2][]{\frac{\dd#1}{\dd#2}}

\let\originalleft\left
\let\originalright\right
\renewcommand{\left}{\mathopen{}\mathclose\bgroup\originalleft}
\renewcommand{\right}{\aftergroup\egroup\originalright}

\title{Improved Smoothed Analysis of 2-Opt for the Euclidean TSP}
\author{Bodo Manthey\thanks{\url{b.manthey@utwente.nl}} }
\author{Jesse van Rhijn\thanks{\url{j.vanrhijn@utwente.nl}. Corresponding author. Supported by NWO grant OCENW.KLEIN.176.}}
\affil{Department of Applied Mathematics, University of Twente, Enschede, The Netherlands}

\date{}

\begin{document}

\maketitle

\begin{abstract}
The 2-opt heuristic is a simple local search heuristic for the Travelling Salesperson Problem (TSP). 
Although it usually performs well in practice, its worst-case running time is exponential in
the number of cities. 
Attempts to reconcile this difference between practice and theory
have used smoothed analysis, in which adversarial instances are perturbed probabilistically. 
We are interested in the classical model of smoothed analysis for the Euclidean TSP, in which the perturbations are Gaussian. 
This model was previously used by Manthey \& Veenstra, who obtained smoothed complexity bounds polynomial in $n$,
the dimension $d$, and the perturbation strength $\sigma^{-1}$. 
However, their analysis only works for $d \geq 4$.
The only previous analysis for $d \leq 3$ was performed by Englert, R\"oglin \& V\"ocking, 
who used a different perturbation model which can be translated to Gaussian perturbations. 
Their model yields bounds polynomial in $n$ and $\sigma^{-d}$, and super-exponential in $d$. 
As the fact that no direct analysis exists for Gaussian perturbations that yields polynomial
bounds for all $d$ is somewhat unsatisfactory, we perform this missing analysis.
Along the way, we improve all existing smoothed complexity bounds for Euclidean 2-opt with Gaussian perturbations.
\end{abstract}

\textbf{Keywords:} Travelling Salesperson Problem, local search, smoothed analysis

\section{Introduction}

The Travelling Salesperson problem is a standard combinatorial optimization problem, which
has attracted considerable interest from academic, educational and industrial directions.
It can be stated rather compactly: given a Hamiltonian graph $G = (V, E)$ and edge weights
$w: E \to \mathbb{R}$, find a minimum weight Hamiltonian cycle (tour) on $G$.

Despite this apparent simplicity, the TSP is NP-hard \cite{korteCombinatorialOptimizationTheory2000}.
A particularly interesting variant of the TSP is the Euclidean TSP, in which
the $n$ vertices of the graph are identified with a point cloud in $\mathbb{R}^d$,
and the edge weights are the Euclidean distances between these points. Even this
restricted variant is NP-hard \cite{papadimitriouEuclideanTravellingSalesman1977}. 

As a consequence of this hardness, practitioners often turn to heuristics. One commonly used
heuristic is 2-opt \cite{aartsLocalSearchCombinatorial2003}. This heuristic takes as its input a tour $T$,
and finds two sets of two edges each, $\{e_1, e_2\} \subseteq T$ and $\{f_1, f_2\}
\nsubseteq T$, such that exchanging $\{e_1, e_2\}$ for $\{f_1, f_2\}$
yields again a tour $T'$, and the total weight of $T'$ is strictly
less than the total weight of $T$. This procedure is repeated with the new tour,
and stops once no such edges exist. The resulting tour is
said to be locally optimal.

Englert, R\"oglin and V\"ocking constructed Euclidean TSP instances on which 
2-opt can take exponentially many steps to find a locally optimal tour \cite{englertWorstCaseProbabilistic2014}.
Despite this pessimistic result, 2-opt performs remarkably well in practice,
usually requiring time sub-quadratic in $n$ and obtaining tours
which are only a few percent worse than the optimum \cite[chapter 8]{aartsLocalSearchCombinatorial2003}.

To explain this discrepancy, the tools of probabilistic analysis have proved
useful
\cite{mantheySmoothedAnalysis2Opt2013,chandraNewResultsOld1999, englertSmoothedAnalysis2Opt2016,engelsAveragecaseApproximationRatio2009,englertWorstCaseProbabilistic2014}.
In particular, smoothed analysis, a hybrid framework between worst-case and average-case analysis,
has been successfully used in the analysis of 2-opt \cite{englertSmoothedAnalysis2Opt2016,englertWorstCaseProbabilistic2014,mantheySmoothedAnalysis2Opt2013}.
In the original version of this framework, the instances one considers are initially adversarial,
and then perturbed by Gaussians. The resulting smoothed time
complexity is then generally a function of the instance size $n$ and
the standard deviation of the Gaussian perturbations, $\sigma$.

Englert et al.\ obtained smoothed time complexity bounds for 2-opt on
Euclidean instances
by considering a more general model,
in which the points are chosen in the unit hypercube according to arbitrary probability densities.
The only restrictions to these densities are that (i) they are independent,
and (ii) they are all bounded from above by $\phi$. Their results
can be transferred to Gaussian perturbations roughly by
setting $\phi = \sigma^{-d}$, which yields a smoothed complexity that
is $O(\poly(n, \sigma^{-d}))$, ignoring factors depending only on $d$.

As the exponential dependence on $d$ is somewhat unsatisfactory, Manthey \& Veenstra
\cite{mantheySmoothedAnalysis2Opt2013} performed a simpler smoothed analysis yielding bounds polynomial
in $n$, $1/\sigma$, and~$d$. However, their analysis is limited to $d \geq 4$.
While polynomial bounds for all $d$ can be obtained by simply taking the result of
Englert et al.\ for $d \in \{2, 3\}$,
no smoothed analysis that directly uses Gaussian perturbations exists for these cases.
We set out to perform this missing analysis, improving the
smoothed complexity bounds for all $d \geq 2$ along the way.

Our analysis combines ideas from both Englert et al.\ and Manthey \& Veenstra.
From the former, we borrow the idea of conditioning on the outcomes of some of the
distances between points in an arbitrary 2-change. We can then analyze the 2-change
by examining the angles between certain edges in the 2-change, which are themselves random
variables. 
From the latter, we borrow the Gaussian perturbation model
(originally introduced by Spielman \& Teng for the Simplex Method
\cite{spielmanSmoothedAnalysisAlgorithms2004}).

We also note that in addition to improving the results of Manthey \& Veenstra,
our approach is significantly simpler than the analysis of Englert et al.
The crux of the simplification is a carefully constructed random experiment
to model a single 2-change, which allows us to bypass the need for
the involved convolution integrals used by Englert et al.

We will begin by introducing some definitions
and earlier results, before providing basic probability theoretical results
(\Cref{sec:prelims})
that we will make heavy use of throughout the paper. We then proceed by
analyzing a single 2-change in a similar manner as Englert et al.,
simplifying some of their analysis in the process
(\Cref{sec:single_2change}). Next, we prove
a first smoothed complexity bound by examining so-called linked pairs of
2-changes (\Cref{sec:linked_pairs}),
an idea used by both Englert et al.\ and Manthey \& Veenstra. Finally,
we improve on this bound for $d \geq 3$ (\Cref{sec:d3}), yielding the best known bounds
for all dimensions.

\section{Preliminaries}\label{sec:prelims}

\subsection{Travelling Salesperson Problem}

Let $\Y \subseteq [-1, 1]^d$ be a point set of size $n$. The Euclidean Travelling
Salesperson Problem (TSP) asks for a tour that visits each point $y \in \Y$
exactly once, such that the total length of the tour is minimized. The length of a tour
in this variant of the TSP is the sum of the Euclidean
distances between consecutive points in the
tour. Formally, if the points in $\Y$ are visited in the order
$T = (y_{\pi(i)})_{i=0}^{n-1}$ defined by
a permutation $\pi$ of $[n]$, then the length of the tour $T$ is
\[
    L(T) = \sum_{i=0}^{n-1} \|y_{\pi(i)} - y_{\pi(i+1)}\|,
\]
where the indices are taken modulo $n$, and $\|\cdot\|$ denotes the
standard Euclidean norm in $\mathbb{R}^d$. Since the Euclidean TSP is
undirected, the tour $T'$ in which the vertices are visited in the reverse order
has the same length as $T$. We consider these tours to be identical.

\subsection{Smoothed Analysis}

Smoothed analysis is a framework for the analysis of algorithms, which
was introduced in 2004 by Spielman \& Teng \cite{spielmanSmoothedAnalysisAlgorithms2004}.
The method is particularly suitable to algorithms with a fragile worst-case
input \cite{mantheySmoothedAnalysisLocal2021}. 
Since its introduction, the method has been applied
to a wide variety of algorithms
\cite{mantheySmoothedAnalysisAnalysis2011,spielmanSmoothedAnalysisAttempt2009}.

Heuristically, one imagines that an adversary chooses an input to the algorithm.
The input is then perturbed in a probabilistic fashion. The hope is that any
particularly pathological instances that the adversary might choose are destroyed by
the random perturbation. One then computes a bound on the expected number
of steps that the algorithm performs, where the expectation is taken with respect
to the perturbation.

For our model of a smoothed TSP instance, we allow the adversary to choose a
point set $\Y \subseteq [-1, 1]^d$ of size $n$. We then perturb each point $y_i \in \Y$
with an independent $d$-dimensional Gaussian random variable $g_i$, $i \in [n]$,
with mean 0 and standard deviation $\sigma$.
This yields a new point set, $\X = \{y_i + g_i \mid y_i \in \Y\}$.
We will bound
the expected number of steps taken by the 2-opt heuristic on the TSP
instance defined by $\X$, with
the expectation taken over this Gaussian perturbation.
We will refer to this quantity as the smoothed
complexity of 2-opt.

For the purposes of our analysis, we always assume that
$\sigma \leq 1$.
This is a mild restriction, as the bound for $\sigma=1$ also applies to all larger values of $\sigma$, and small perturbations are particularly interesting in smoothed analysis.

For a general outline of the strategy, consider a 2-change where the
edges $\{a, z_1\}$ and $\{b, z_2\}$ are replaced by
$\{a, z_2\}$ and $\{b, z_1\}$. The change in tour length of
this 2-change is
\[
    \Delta = \|a - z_1\| + \|b - z_2\| - \|a - z_2\| - \|b - z_1\|.
\]
Since the locations of the points $\{a, b, z_1, z_2\}$ are random variables,
so is $\Delta$. We seek to bound the probability that there exists a
2-change whose improvement is exceedingly small, enabling us to use a potential
argument. 

Let $\Delta_\mathrm{min}$ denote
the improvement of the least-improving 2-change in the instance. If $\prob(\Delta_\mathrm{min}\leq \epsilon)$
is suitably small for small $\epsilon$, then each iteration is likely to decrease the tour length
by a large amount. As long as the initial tour has bounded length,
this then provides a limit to the number of iterations that the heuristic can
perform, since the tour length is bounded from below by 0.

\subsection{Basic Results}

We state some general results that we will need at points throughout the paper.

The next lemma provides a simple framework that we can use to prove smoothed
complexity bounds for 2-opt.

Let $\Delta_\mathrm{min}$ denote the smallest improvement of any 2-change,
and let $\Delta^\mathrm{link}_\mathrm{min}$ denote the smallest improvement
of any pair of linked 2-changes (see \Cref{sec:linked_pairs} for a definition of linked pairs).

\begin{lemma}[\mbox{\cite[Lemma 2.2]{mantheySmoothedAnalysis2Opt2013}}]\label{lemma:blackbox}
    Suppose that the longest tour has a length of at most
    $L$ with probability at least $1 - 1/n!$. Let $\alpha > 1$ be a constant.
    If for all $\epsilon > 0$ it holds that
    $\prob(\Delta_\mathrm{min} \in (0, \epsilon]) = O\left(P\epsilon^\alpha\right)$,
    then the smoothed complexity of 2-opt is bounded from above by
    $O(P^{1/\alpha} L)$. The same holds if 
    we replace $\Delta_\mathrm{min}$ by $\Delta_\mathrm{min}^\mathrm{link}$,
    provided that $P^{1/\alpha}L = \Omega(n^2)$.
\end{lemma}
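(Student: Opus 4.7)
The plan is to use a potential argument based on tour length, coupled with a layer-cake integral for the expected number of iterations.

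First, I would set up the deterministic core of the argument. Let $T$ denote the number of 2-opt iterations and let $E$ be the event that the longest tour has length at most $L$. On each step the tour length strictly decreases by at least $\Delta_\mathrm{min}$, so the total decrease is at most the initial length, giving the deterministic bound $T \leq L/\Delta_\mathrm{min}$ on $E$. Off of $E$ (probability at most $1/n!$) I would use the trivial bound $T \leq n!$, contributing at most $1$ to $\expect[T]$. Therefore $\expect[T] \leq L \cdot \expect[1/\Delta_\mathrm{min}] + 1$.

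The main calculation is to bound $\expect[1/\Delta_\mathrm{min}]$ via the layer-cake formula
\[
    \expect[1/\Delta_\mathrm{min}] = \int_0^\infty \prob\left(\Delta_\mathrm{min} < 1/t\right)\, dt.
\]
I would split at a threshold $t^\star$. For $t \leq t^\star$ use the trivial bound $\prob(\cdot) \leq 1$, contributing $t^\star$. For $t > t^\star$ use the hypothesis $\prob(\Delta_\mathrm{min} \leq \epsilon) \leq CP\epsilon^\alpha$ with $\epsilon = 1/t$, contributing $\frac{CP}{\alpha-1}\,(t^\star)^{1-\alpha}$ (where $\alpha > 1$ is exactly what makes this tail integral converge). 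Choosing $t^\star \asymp P^{1/\alpha}$ balances the two contributions and yields $\expect[1/\Delta_\mathrm{min}] = O\bigl(P^{1/\alpha}\bigr)$, hence $\expect[T] = O\bigl(LP^{1/\alpha}\bigr)$, as desired.

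For the linked-pair variant the potential argument does not apply verbatim, because improvement is only guaranteed per pair of steps rather than per single step. The strategy is to show combinatorially that in any sequence of 2-opt moves, all but an $O(n^2)$ prefix of the steps can be grouped into linked pairs: each step introduces only a constant number of new edges while there are $O(n^2)$ distinct edges overall, so once $T$ exceeds some $\Omega(n^2)$ threshold, reusing of edges is forced and linked pairs must appear in abundance. This gives $T \leq O(L/\Delta_\mathrm{min}^\mathrm{link}) + O(n^2)$, and then the identical tail-integration argument produces $\expect[T] = O(LP^{1/\alpha}) + O(n^2)$. The side condition $P^{1/\alpha}L = \Omega(n^2)$ is precisely what absorbs the $O(n^2)$ correction into the main term. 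The main obstacle I anticipate is this combinatorial step: pinning down exactly how a linked pair is defined in \Cref{sec:linked_pairs} and verifying that one can indeed extract enough disjoint linked pairs from a long enough 2-opt sequence; the rest of the argument is a routine potential-plus-integration computation.
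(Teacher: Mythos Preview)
Your proposal is correct and is precisely the standard argument; note that the paper does not actually prove this lemma but cites it from Manthey \& Veenstra, so there is no in-paper proof to compare against. The one piece you flag as the main obstacle---extracting $\Omega(t)-O(n^2)$ disjoint linked pairs from any run of $t$ consecutive 2-changes---is exactly what the paper records separately as \Cref{lemma:number_of_pairs} (due to Englert, R\"oglin \& V\"ocking), so you may simply invoke that rather than redo the combinatorics.
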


\subsubsection{Probability Theory}

We provide some basic probability theoretical results.
Throughout the paper, given a random variable $X$, we denote
its probability density by $f_{X}$ and its cumulative distribution function
by $F_X$. If we furthermore condition on some event $Y$,
we write $f_{X|Y}$ for the conditional density of $X$ given
$Y$.

\subsubsection*{Chi Distributions}

Suppose we are given two points $y_1, y_2 \in \Y$ and perturb
both points with independent Gaussian random variables $g_1$ and $g_2$, resulting
in $x_i = y_i + g_i$, $i \in [2]$. Then the distance $\|x_1 - x_2\|$
between the two perturbed points
is distributed according to a noncentral $d$-dimensional chi distribution
with noncentrality parameter $s = \|y_1 - y_2\|$,
which we denote $\chi_d^s$. We call $\chi_d^0$ a central $d$-dimensional $\chi$
distribution.
We have two useful expressions for the  chi
distribution \cite{johnsonContinuousUnivariateDistributions1995}:
\begin{align}\label{eq:noncentral_chi}
    \chi_d^s(r) = \frac{e^{-\frac{r^2+s^2}{2\sigma^2}} \cdot \frac{r^{d-1}}{\sigma^d}}
        {(rs/\sigma^2)^{d/2-1}} I_{d/2 - 1}\left(\frac{rs}{\sigma^2}\right)
        = e^{-\frac{s^2}{2\sigma^2}} \sum_{i=0}^\infty \frac{1}{i!}\left(\frac{s^2}{2\sigma^2}\right)^i
            \chi_{d + 2i}(r),
\end{align}
where $\chi_d(r) = \chi_d^0(r)$, the central chi distribution.
Here, $I_\nu(x)$ denotes the modified Bessel function of
the first kind, of order $\nu > -1/2$, defined as \cite{abramowitzHandbookMathematicalFunctions1974}
\begin{align}\label{eq:bessel_definition}
    I_\nu(x) = \sum_{k=0}^\infty \frac{1}{k!\Gamma(k + \nu + 1)} \left(
        \frac{x}{2} 
    \right)^{2k + \nu}.
\end{align}

\subsubsection*{General Results}

In the following, we use the notion of stochastic dominance.
Let $X$ and $Y$ be two real-valued random variables. We say that
$X$ stochastically dominates $Y$ if for all $x$, it holds that
$\prob(X \geq x) \geq \prob(Y \geq x)$, and this inequality
is strict for some $x$. We may equivalently say that the density
of $X$ stochastically dominates the density of $Y$.

To use \Cref{lemma:blackbox}, we need to limit
the probability that any TSP tour in our smoothed instance
is too long. This was previously done
by Manthey \& Veenstra; we state their result in
\Cref{lemma:box}.

\begin{lemma}[\mbox{\cite[Lemma 2.3]{mantheySmoothedAnalysis2Opt2013}}]\label{lemma:box}
    Let $c \geq 2$ be a sufficiently large constant, and let
    $D = c \cdot (1 + \sigma \sqrt{n \log n})$. Then
    $\prob(\X \nsubseteq [-D, D]^d) \leq 1/n!$.
\end{lemma}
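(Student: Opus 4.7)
The plan is a straightforward union bound over the $nd$ coordinates, combined with a standard Gaussian tail bound.

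First, I would reduce the event $\X \nsubseteq [-D, D]^d$ to an event about the individual perturbation vectors. Since every unperturbed point $y_i$ lies in $[-1,1]^d$, we have $|x_{i,j}| = |y_{i,j} + g_{i,j}| \leq 1 + |g_{i,j}|$ for each point $i \in [n]$ and coordinate $j \in [d]$. Consequently, if $|g_{i,j}| \leq D - 1$ for every $i, j$, then $\X \subseteq [-D, D]^d$. A union bound over the $nd$ coordinates therefore gives
\[
    \prob\!\left(\X \nsubseteq [-D, D]^d\right) \leq \sum_{i=1}^n \sum_{j=1}^d \prob\!\left(|g_{i,j}| > D - 1\right).
\]

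Next, I would invoke the standard one-dimensional Gaussian tail bound: since each $g_{i,j} \sim N(0, \sigma^2)$, we have $\prob(|g_{i,j}| > t) \leq 2 \exp(-t^2/(2\sigma^2))$ for $t \geq 0$. Plugging in $t = D - 1 = (c-1) + c\sigma\sqrt{n \log n} \geq c \sigma \sqrt{n \log n}$ (using $c \geq 2$ so that $c - 1 \geq 0$) yields
\[
    \prob\!\left(\X \nsubseteq [-D, D]^d\right) \leq 2nd \cdot \exp\!\left(-\frac{(D-1)^2}{2\sigma^2}\right) \leq 2nd \cdot \exp\!\left(-\frac{c^2 n \log n}{2}\right).
\]

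Finally, I would compare against $1/n!$. Using the crude bound $n! \leq n^n$, so that $1/n! \geq \exp(-n \log n)$, it suffices to verify that $2nd \cdot \exp(-c^2 n \log n / 2) \leq \exp(-n \log n)$; taking logarithms this becomes $c^2/2 \geq 1 + \log(2nd)/(n \log n)$, which is satisfied for all sufficiently large $c$ (depending only mildly on the regime of $d$ relative to $n$, and certainly for any polynomial dependence). This proves the claim.

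The only real obstacle is checking that the dominant term $c\sigma\sqrt{n \log n}$ in $D$ is large enough to make the Gaussian tail bound beat $1/n!$, which is precisely why $D$ scales as $\sigma\sqrt{n \log n}$ rather than $\sigma\sqrt{n}$; after that, the whole argument is mechanical.
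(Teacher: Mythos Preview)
Your argument is correct and is the standard proof of this kind of statement: reduce to a coordinate-wise event, apply the Gaussian tail bound $\prob(|g_{i,j}|>t)\le 2e^{-t^2/(2\sigma^2)}$, and take a union bound over the $nd$ coordinates. The paper itself does not prove \Cref{lemma:box}; it simply imports the result from Manthey \& Veenstra, so there is no in-paper proof to compare against. Your write-up matches the expected argument, with the only caveat being the one you already flag, namely that ``sufficiently large $c$'' tacitly assumes $d$ is at most polynomial in $n$ so that $\log(2nd)/(n\log n)$ stays bounded; this is consistent with how $d$ is treated throughout the paper.
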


The next lemma is a reformulation of another result by Manthey \& Veenstra \cite{mantheySmoothedAnalysis2Opt2013}.
The lemma is very useful in conjunction with
\Cref{lemma:chi_expect}, as we will
have cause to condition on the outcome of drawing noncentral $d$-dimensional chi random
variables.

\begin{lemma}[\mbox{\cite[Lemma 2.8]{mantheySmoothedAnalysis2Opt2013}}]\label{lemma:chi_stochdom}
    The noncentral $d$-dimensional chi distribution with parameter $\mu > 0$
    and standard deviation $\sigma$
    stochastically dominates the central $d$-dimensional
    chi distribution with the same standard deviation.
\end{lemma}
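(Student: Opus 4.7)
The plan is to exploit the Poisson-mixture representation of the noncentral chi distribution that appears in the second equality of~\eqref{eq:noncentral_chi}. Writing $\lambda = \mu^2/(2\sigma^2)$, that identity says that $\chi_d^\mu$ is a convex combination of central chi distributions,
\[
    \chi_d^\mu(r) = \sum_{i=0}^\infty p_i\, \chi_{d+2i}(r), \qquad p_i = e^{-\lambda}\frac{\lambda^i}{i!},
\]
where the weights $p_i$ are the point masses of a $\mathrm{Poisson}(\lambda)$ random variable and therefore satisfy $p_i \geq 0$ and $\sum_{i \geq 0} p_i = 1$. Since stochastic dominance is preserved under taking mixtures with the same set of weights, it suffices to show that every mixture component $\chi_{d+2i}$ (for $i \geq 1$) stochastically dominates the base distribution $\chi_d$ at the common value of $\sigma$.

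For this dimension-monotonicity claim I would use a direct coupling. Let $g \in \mathbb{R}^d$ and $h \in \mathbb{R}^{2i}$ be independent Gaussian vectors, each with i.i.d.\ $N(0,\sigma^2)$ coordinates. Then $\|g\| \sim \chi_d$, the concatenation $(g, h) \sim N(0, \sigma^2 I_{d+2i})$ gives $\|(g,h)\| \sim \chi_{d+2i}$, and $\|(g,h)\| = \sqrt{\|g\|^2 + \|h\|^2} \geq \|g\|$ holds pointwise. Taking tail probabilities on both sides of this coupling yields $\prob(\chi_{d+2i} \geq r) \geq \prob(\chi_d \geq r)$ for every $r \geq 0$.

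Combining the two steps, for every $r \geq 0$,
\begin{align*}
    \prob(\chi_d^\mu \geq r) = \sum_{i=0}^\infty p_i\, \prob(\chi_{d+2i} \geq r) \geq \sum_{i=0}^\infty p_i\, \prob(\chi_d \geq r) = \prob(\chi_d \geq r).
\end{align*}
Strict inequality at some $r$ follows because $p_1 = e^{-\lambda}\lambda > 0$ (as $\mu > 0$) and the coupling inequality in step two is strict with positive probability whenever $\|h\| > 0$, so $\prob(\chi_{d+2} \geq r) > \prob(\chi_d \geq r)$ for every $r > 0$. There is no real obstacle here: the only ``trick'' is recognising that the second form of~\eqref{eq:noncentral_chi} is already of Poisson-mixture shape, which bypasses any direct manipulation of the Bessel-function expression on the left-hand side.
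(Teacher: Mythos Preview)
The paper does not give its own proof of this lemma; it simply quotes the statement from Manthey \& Veenstra~\cite{mantheySmoothedAnalysis2Opt2013}, so there is nothing in the present paper to compare your argument against.

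Your proof is correct. The Poisson-mixture identity in~\eqref{eq:noncentral_chi} reduces the claim to showing that $\chi_{d+2i}$ stochastically dominates $\chi_d$ for each $i \geq 0$, and your Gaussian-concatenation coupling does exactly that, since $\|(g,h)\| \geq \|g\|$ pointwise. The strictness argument is also fine: for any $r > 0$ the event $\{\|g\| < r \leq \|(g,h)\|\}$ has positive probability, so the tail inequality for $i \geq 1$ is strict, and since $\mu > 0$ implies $p_1 > 0$, the mixture inequality is strict as well.
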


The following lemma from Manthey \& Veenstra is slightly generalized compared
to its original statement. We do not provide a proof, since the original proof
remains valid when simply replacing the original assumption with ours.

\begin{lemma}[\mbox{\cite[Lemma 2.7]{mantheySmoothedAnalysis2Opt2013}}]\label{lemma:chi_expect}
    Assume $c \in \mathbb{R}_{\geq 0}$ is a fixed constant and $d \in \mathbb{N}$ is
    fixed and arbitrary with $d > c$. Let $\chi_d$ denote the $d$-dimensional
    chi distribution with variance $\sigma^2$. Then
    \[
        \int_0^\infty \chi_d(x) x^{-c} \dd x = \Theta\left(
            \frac{1}{d^{c/2}\sigma^c} 
        \right).
    \]
\end{lemma}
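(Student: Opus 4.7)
The plan is to reduce the integral to a ratio of Gamma functions by substituting the explicit density of the central $\chi_d$ distribution with variance $\sigma^2$,
\[
\chi_d(x) = \frac{x^{d-1}}{2^{d/2-1}\Gamma(d/2)\sigma^d} e^{-x^2/(2\sigma^2)},
\]
and then invoking standard asymptotics for that ratio.

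First I would substitute $u = x^2/(2\sigma^2)$ to convert the integral into a scaled Gamma integral. Collecting the resulting powers of $2$, $\sigma$, and $u$ gives, after simplification,
\[
\int_0^\infty \chi_d(x)\, x^{-c} \dd x \;=\; \frac{1}{(2\sigma^2)^{c/2}} \cdot \frac{\Gamma((d-c)/2)}{\Gamma(d/2)}.
\]
The hypothesis $d > c$ is exactly what is needed for integrability near the origin. After this step the entire $\sigma$-dependence is pinned down as $\sigma^{-c}$, up to a factor depending only on $c$, and the problem reduces to estimating the $d$-dependent Gamma ratio.

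The remaining task is to show $\Gamma((d-c)/2)/\Gamma(d/2) = \Theta(d^{-c/2})$ with implied constants depending only on $c$. I would split on $d$: for $d$ in any bounded range $c < d \leq d_0(c)$ this is a finite family of positive values, each trivially comparable to $d^{-c/2}$; for $d \geq d_0(c)$ with $d_0(c) > 2c$, the Stirling-based identity $\Gamma(x)/\Gamma(x + c/2) = x^{-c/2}(1 + O(1/x))$ applied with $x = (d-c)/2$ yields $(d/2)^{-c/2}(1 - c/d)^{-c/2}(1 + O(1/d))$, and the factor $(1 - c/d)^{-c/2}$ lies between two positive constants depending only on $c$. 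Combining the two regimes and multiplying by the prefactor $(2\sigma^2)^{-c/2}$ gives $\Theta(1/(d^{c/2}\sigma^c))$ as claimed. The only subtlety worth flagging, and the point the author's remark about the original proof silently covers, is that the $\Theta$-constants must be uniform in $d > c$ rather than merely asymptotic, which is exactly what the case split ensures.
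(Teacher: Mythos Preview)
Your proof is correct. The paper does not supply its own argument for this lemma: it is quoted from \cite{mantheySmoothedAnalysis2Opt2013} with the remark that the original proof carries over verbatim under the slightly generalized hypothesis, so there is nothing in the present paper to compare against. Your route---substituting $u = x^2/(2\sigma^2)$ to obtain the closed form $(2\sigma^2)^{-c/2}\,\Gamma\bigl((d-c)/2\bigr)/\Gamma(d/2)$ and then estimating the Gamma ratio via Stirling, with a separate treatment of small $d$ to secure uniformity---is the natural direct computation. Your explicit attention to making the $\Theta$-constants depend only on $c$ and not on $d$ is exactly what the paper requires, since $d$-dependence is tracked throughout the later sections.
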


\subsection{Limiting the Adversary}

In our analysis we will closely study the angles between edges in the
smoothed TSP instance. These angles can be initially specified
to our detriment by the adversary. However, the power of the adversary
is limited by the strength of the Gaussian perturbations.
We quantify the power of the adversary in \Cref{thm:angle_distr_bound}.
See \Cref{fig:angle_distribution} for a sketch accompanying the theorem.

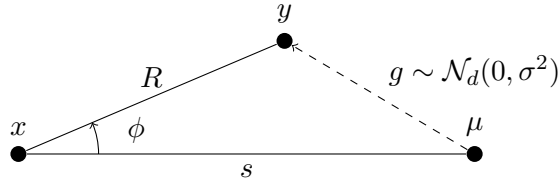
\begin{figure}[t]
\centering
\begin{tikzpicture}
    \vertex[label=$x$](x) at (0, 0) {};
    \vertex[label=$\mu$](mu) at (6, 0) {};
    \vertex[label=$\text{$y$}$](y) at (3.5, 1.5) {};

    \draw[black] (x) -- node[above] {$R$} ++ (y);
    \draw[black, dashed, ->] (mu) -- node[above, xshift=1.25cm] {$g \sim \mathcal{N}_d(0, \sigma^2)$} ++ (y);
    \draw[black] (x) -- node[below] {$s$} ++ (mu);
    
    \pic [draw, ->, "$\phi$", angle eccentricity=1.5, angle radius=30] {angle = mu--x--y};
\end{tikzpicture}
\caption{The setting of \Cref{thm:angle_distr_bound}. As mentioned in the proof of \Cref{thm:angle_distr_bound}, we may assume without loss of generality that $\mu$ lies on $L$.\label{fig:angle_distribution}}
\end{figure}

\begin{theorem}\label{thm:angle_distr_bound}
    Let $L$ be some line in $\mathbb{R}^d$, and let
    $x \in L$. Let $y$ be a point drawn from a $d$-dimensional
    Gaussian distribution with mean $\mu \in \mathbb{R}^d$
    and variance $\sigma^2$.
    Let $\phi$ denote the angle between $L$ and $x - y$,
    and let $R = \|x-y\|$ and $s = \|x - \mu\|$.
    Let $f_{\phi|R=r}$ denote the density of $\phi$, conditioned on a specific outcome
    $r > 0$ for $R$. Then for all $d \geq 2$,
    \[
        \sup_{\phi \in [0, \pi]} f_{\phi|R=r}(\phi) = O\left(\sqrt{d} + \frac{\sqrt{rs}}{\sigma}\right).
    \]
    Moreover, for $d \geq 3$,
    \[
        \sup_{\phi \in (0, \pi)} \frac{f_{\phi|R=r}(\phi)}{\sin\phi}
            = O\left(\sqrt{d} + \frac{rs}{\sigma^2\sqrt{d}}\right).
    \]
\end{theorem}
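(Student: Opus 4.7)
The plan is to write down the conditional density $f_{\phi|R=r}$ explicitly and then bound its supremum by combining a pointwise optimization with a Laplace-type estimate of the normalizing integral.

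First, I would reduce to the case $\mu \in L$. By rotational symmetry of the isotropic Gaussian around $L$, only the components of $\mu$ parallel and perpendicular to $L$ matter, and a short calculation shows that for a fixed $s = \|x-\mu\|$, placing $\mu$ entirely on $L$ maximizes $\sup_\phi f_{\phi|R=r}(\phi)$: if $\mu$ has a component perpendicular to $L$, then the Boltzmann weight $e^{z\cos\phi}$ gets replaced by a smeared-out modified-Bessel average over the azimuthal angles, which can only decrease the supremum.

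Next, with $x$ at the origin and $L$ along $e_1$ so that $\mu = se_1$, I use $d$-dimensional spherical coordinates in which $\phi$ is the polar angle from $e_1$. The Jacobian contributes $r^{d-1}\sin^{d-2}\phi$, and $\|y-\mu\|^2 = r^2+s^2-2rs\cos\phi$ depends only on $(R,\phi)$. Integrating out the remaining $d-2$ angles and dividing by the marginal chi density $\chi_d^s(r)$ gives
\[
    f_{\phi|R=r}(\phi) \;=\; \frac{\sin^{d-2}\phi\cdot e^{z\cos\phi}}{M(z,d)}, \qquad z := rs/\sigma^2,
\]
where $M(z,d) = \int_0^\pi \sin^{d-2}\phi\cdot e^{z\cos\phi}\,d\phi = \sqrt{\pi}\,\Gamma((d-1)/2)\,I_{(d-2)/2}(z)/(z/2)^{(d-2)/2}$ by the integral representation of $I_\nu$ underlying \eqref{eq:noncentral_chi}.

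For the first bound, let $g(\phi) = (d-2)\ln\sin\phi + z\cos\phi$. Its unique maximum $\phi^\ast \in (0,\pi/2)$ satisfies $(d-2)\cos\phi^\ast = z\sin^2\phi^\ast$, from which a short computation yields $|g''(\phi^\ast)| = z(1+\cos^2\phi^\ast)/\cos\phi^\ast$. Solving the critical equation explicitly gives $1/\cos\phi^\ast \leq 1 + (d-2)/z$, so $|g''(\phi^\ast)| = O(d+z)$. A Laplace-method lower bound on the normalization produces $M(z,d) \gtrsim e^{g(\phi^\ast)}/\sqrt{|g''(\phi^\ast)|}$, whence
\[
    \sup_\phi f_{\phi|R=r}(\phi) \;=\; \frac{e^{g(\phi^\ast)}}{M(z,d)} \;\lesssim\; \sqrt{|g''(\phi^\ast)|} \;=\; O(\sqrt{d} + \sqrt{rs}/\sigma).
\]
For the second bound I would rerun the same analysis with $\tilde g(\phi) = (d-3)\ln\sin\phi + z\cos\phi$ in place of $g$, whose critical point $\phi^{\ast\ast}$ satisfies $(d-3)\cos\phi^{\ast\ast} = z\sin^2\phi^{\ast\ast}$, while keeping $M(z,d)$ in the denominator. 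Equivalently, one can write $f(\phi)/\sin\phi = (M(z,d-1)/M(z,d))\,\tilde f(\phi)$ where $\tilde f$ is the analogous density in dimension $d-1$, and control the ratio $M(z,d-1)/M(z,d)$ via the Bessel representation together with Stirling. Either route delivers $\sup_\phi f_{\phi|R=r}(\phi)/\sin\phi = O(\sqrt{d} + rs/(\sigma^2\sqrt{d}))$.

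The hard part will be making the Laplace lower bound on $M(z,d)$ rigorous with absolute constants: one expands $g$ to second order on a window of width $\delta \asymp |g''(\phi^\ast)|^{-1/2}$ around $\phi^\ast$ and needs the cubic remainder $g'''(\xi)\delta^3/6$ to stay bounded there. Using the critical-point identity to simplify yields $|g'''(\phi^\ast)| = 2z/\sin\phi^\ast + z\sin\phi^\ast$, and a short case split on whether $z \lesssim d$ or $z \gtrsim d$ confirms $|g'''(\phi^\ast)| = O(|g''(\phi^\ast)|^{3/2})$, which closes the argument. For the second bound one must additionally track the index shift of the $\Gamma$ factors and the Bessel argument carefully to extract the extra $1/\sqrt{d}$ factor, which is what ultimately produces the improved dimension scaling available only for $d \geq 3$.
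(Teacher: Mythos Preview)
Your approach is correct and reaches the same conclusion as the paper, but by a genuinely different route. After writing down the same conditional density $f_{\phi|R=r}(\phi)=\sin^{d-2}\phi\,e^{z\cos\phi}/M(z,d)$, the paper expresses $M(z,d)$ through $I_{d/2-1}(z)$ and then invests substantial effort in proving pointwise lower bounds on $I_\nu$ (their Lemmas on Bessel functions, culminating in an Euler--Maclaurin argument applied to the Amos recursion) before plugging these back in. You instead lower-bound $M(z,d)$ directly by a Laplace estimate around the interior maximiser $\phi^\ast$ of $g(\phi)=(d-2)\ln\sin\phi+z\cos\phi$, so that $\sup f_{\phi|R=r}=e^{g(\phi^\ast)}/M(z,d)\lesssim\sqrt{|g''(\phi^\ast)|}$; the critical-point identity then gives $|g''(\phi^\ast)|=O(d+z)$ immediately. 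This is shorter and more transparent, and it sidesteps the Bessel machinery entirely for the first bound. The paper's route has the compensating advantage that the intermediate Bessel estimates are explicit and reusable, and they handle the case split $z\le 1$ versus $z>1$ cleanly without worrying about window containment. Two places in your outline deserve care when you write it out: for $d=2$ the maximiser of $g$ sits at the boundary $\phi^\ast=0$, so the interior-critical-point computation and the window $[\phi^\ast-\delta,\phi^\ast+\delta]\subset(0,\pi)$ need a separate (easy) treatment; and the cubic-remainder control must hold on the whole window rather than just at $\phi^\ast$, which for large $z$ forces you to check that $\phi^\ast-\delta$ stays a fixed fraction of $\phi^\ast$ away from $0$ (this is exactly where the constant in $\delta=c|g''(\phi^\ast)|^{-1/2}$ has to be chosen small). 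Finally, your reduction to $\mu\in L$ via ``the perpendicular component only smears the weight and lowers the sup'' is in fact a sharper justification than the paper's one-line rotation remark, but it is still only sketched; in higher dimensions the azimuthal integration produces a factor $I_{(d-3)/2}(z\sin\alpha\sin\phi)/(z\sin\alpha\sin\phi)^{(d-3)/2}$, and you should state precisely why this cannot increase the supremum over $\phi$ compared with $\alpha=0$.
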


\Cref{thm:angle_distr_bound} yields the following corollary,
which provides information on the angle between two
Gaussian random points in $\mathbb{R}^d$ with respect to some third point.
This corollary is especially useful when analyzing 2-changes
in smoothed TSP instances.

\begin{corollary}\label{cor:angle_distr_bound}
    Let $x \in \mathbb{R}^d$.
    Let $y$ and $z$ be drawn from $d$-dimensional Gaussian distributions with
    arbitrary means and the same variance $\sigma^2$.
    Let $\phi$ denote the angle between $y - x$ and $z - x$, and let
    $R = \|x - y\|$ and $S = \|x - z\|$.
    Let $f_{\phi|R=r,S=s}$ denote the density of $\phi$ conditioned
    on some outcome $r > 0$ for $R$ and $s > 0$ for $S$. Then for
    all $d \geq 2$,
    \[
        \sup_{\phi \in [0, \pi]} f_{\phi|R=r,S=s}(\phi) = O\left(\sqrt{d} + \frac{\sqrt{\min\{r\bar{r}, s\bar{s}}\}}{\sigma}\right),
    \]
    where $\bar{r} = \|x - \expect(y)\|$ and $\bar{s} = \|x - \expect(z)\|$.
    Moreover, for $d \geq 3$,
    \[
        \sup_{\phi \in (0, \pi)} \frac{f_{\phi|R=r,S=s}(\phi)}{\sin\phi} 
            = O\left(\sqrt{d} + \frac{\min\{r\bar{r}, s\bar{s}\}}{\sigma^2\sqrt{d}}\right).
    \]
\end{corollary}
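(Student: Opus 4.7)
The strategy is to reduce the corollary to \Cref{thm:angle_distr_bound} by conditioning on one of the two Gaussian points, which effectively freezes a line through $x$, and then integrating this conditioning away. Since the roles of $y$ and $z$ are symmetric in the corollary, the argument will give one bound and then, by swapping, a second; the minimum of the two is the claim.

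Concretely, I would first condition on a specific realization $Z = \tilde{z}$ with $\|x - \tilde{z}\| = s$ and let $L$ be the line through $x$ and $\tilde{z}$. Since $Y$ and $Z$ are independent, $Y$ remains a Gaussian with mean $\expect(Y)$ and variance $\sigma^2$ under this conditioning, and $\phi$ is exactly the angle at $x$ between $L$ and the vector $x - Y$. Applying \Cref{thm:angle_distr_bound} with this $L$, noncentrality parameter $\|x - \expect(Y)\| = \bar r$, and further conditioning on $R = r$ yields
\[
    \sup_{\phi \in [0, \pi]} f_{\phi|R=r, Z=\tilde z}(\phi) = O\left(\sqrt{d} + \frac{\sqrt{r\bar r}}{\sigma}\right),
\]
and, for $d \geq 3$, the analogous bound on $f_{\phi|R=r, Z=\tilde z}(\phi)/\sin\phi$. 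Crucially, this bound is independent of $\tilde z$. To recover the density conditioned on $S = s$, I would integrate over $\tilde z$ on the sphere $\|x - \tilde z\| = s$ against the conditional density of $Z$ given $S = s$:
\[
    f_{\phi|R=r, S=s}(\phi) = \int f_{\phi|R=r, Z=\tilde z}(\phi)\, f_{Z|S=s}(\tilde z)\, \dd\tilde z.
\]
Because the integrand is dominated by a bound uniform in $\tilde z$, the same bound applies to $f_{\phi|R=r, S=s}(\phi)$, and likewise for the $\sin\phi$-divided version. Swapping $Y$ and $Z$ gives the companion bound with $r\bar r$ replaced by $s\bar s$, and taking the minimum produces the claimed inequalities.

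The main point requiring care is the first reduction: justifying that, after conditioning on $Z=\tilde z$ and on $R=r$, the variable $Y$ has exactly the distribution on which \Cref{thm:angle_distr_bound} was proved. This follows from the independence $Y \perp Z$, but it is the pivot that makes the theorem applicable and the conditional density for $\phi$ a genuine one-Gaussian angle density. A secondary subtlety is that \Cref{thm:angle_distr_bound} allows $\mu$ to lie off $L$ (the ``WLOG'' reduction lives inside its proof), so no further adjustment is needed here. Everything else is routine integration of a uniform pointwise bound.
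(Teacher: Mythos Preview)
Your proposal is correct and takes essentially the same approach as the paper: condition on one of the two Gaussian points to fix a line through $x$, apply \Cref{thm:angle_distr_bound}, observe that the resulting bound is uniform in the conditioned point, and then swap the roles of $y$ and $z$ to obtain the companion bound. The paper phrases the conditioning step as ``let an adversary determine the position of $z$ subject to $S=s$'' rather than writing out the integral against $f_{Z\mid S=s}$, but the content is identical.
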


\begin{proof}[Proof (assuming \Cref{thm:angle_distr_bound})]
    We denote the density of $\phi$ conditioned on $R = r$ and $S = s$ by
    $f_{\phi|R=r,S=s}$. We perform a random experiment as follows. 
    
    If $r \leq s$, then we let an
    adversary determine the position of $z$, subject to $S = s$. Subsequently,
    we draw the line $L$ through $x$ and $z$.
    \Cref{thm:angle_distr_bound} then yields a bound for $f_{\phi|R=r,S=s}$ of
    $O(\sqrt{d} + \sqrt{r\bar{r}}/\sigma)$. The same process yields the bound for
    $f_{\phi|R=r,S=s}(\phi)/\sin\phi$ when $d \geq 3$.
    
    If $s \leq r$, then we use a similar argument, just swapping the roles
    of $y$ and $z$. This yields $O(\sqrt{d} + \sqrt{s\bar{s}}/\sigma)$.
    
    Combining these two bounds yields the corollary. 
\end{proof}

The remainder of this section is devoted to proving
\Cref{thm:angle_distr_bound}. Recall the formulas for
$\chi_d^s$, cf.\ \Cref{eq:noncentral_chi}. During the proof of
\Cref{thm:angle_distr_bound},
we will need to bound $\chi_d^s$ from below, for which
We require some lower bounds on $I_\nu$.
We thus spend some time in this
section proving such bounds.

The following bound on $I_\nu$ holds for all $x \geq 0$ and $\nu > -1/2$;
it results from keeping only the $k = 0$ term in \Cref{eq:bessel_definition}.
\begin{lemma}\label{lemma:besselbound}
    For all $x \geq 0$ and $\nu >-1/2$, 
    \[
        I_\nu(x) \geq \frac{(x/2)^\nu}{\Gamma\left(\nu + 1\right)}. 
    \]
\end{lemma}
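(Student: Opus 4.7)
The plan is to exploit the power series representation in \Cref{eq:bessel_definition} directly: under the stated assumptions, every term in the sum is non-negative, so the sum is bounded below by its $k=0$ term, which is exactly $(x/2)^\nu/\Gamma(\nu+1)$. This is precisely the truncation argument the text alludes to immediately above the lemma statement.

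The only work is verifying non-negativity term by term. For each $k \geq 0$, the factor $1/k!$ is positive. The factor $(x/2)^{2k+\nu}$ is non-negative whenever $x \geq 0$ (for $x=0$ and $\nu < 0$ one either restricts to $x > 0$ or reads the bound as the appropriate limit; since the inequality is used later only as a lower bound on $\chi_d^s$ in the noncentral regime, this edge case is harmless). The factor $1/\Gamma(k+\nu+1)$ is positive because the assumption $\nu > -1/2$ gives $k+\nu+1 > 1/2$ for all $k \geq 0$, so the argument of $\Gamma$ stays strictly to the right of every pole of the Gamma function and in fact inside the region where $\Gamma$ is positive. Once these three positivity observations are in place, dropping all terms with $k \geq 1$ gives the claim in one line.

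There is essentially no obstacle here; the only thing to be careful about is not accidentally invoking $\Gamma$ at a non-positive integer, which is exactly what the hypothesis $\nu > -1/2$ rules out.
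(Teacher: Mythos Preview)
Your proposal is correct and is exactly the argument the paper uses: it states outright that the bound ``results from keeping only the $k=0$ term in \Cref{eq:bessel_definition},'' which is precisely your non-negativity-then-truncate reasoning. Your additional care about the sign of $\Gamma(k+\nu+1)$ under the hypothesis $\nu > -1/2$ is a sound justification of the only point that could conceivably fail.
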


As will become apparent during the proof of \Cref{thm:angle_distr_bound},
the bound in \Cref{lemma:besselbound} is too weak for large values of $x$.
We thus need a stronger bound for this regime.

\begin{lemma}\label{lemma:besselbound_largex}
    Given $x > 1$ and $\nu \geq 0$, it holds that
    \[
        I_\nu(x) \geq c_\nu \cdot
            \frac{e^x}{\sqrt{x}},
    \]
    for some $c_\nu > 0$ that depends only on $\nu$.
\end{lemma}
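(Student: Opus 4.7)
The plan is to apply Poisson's integral representation of $I_\nu$ and extract a uniform (non-asymptotic) lower bound via a hand-tuned Laplace-type estimate. Recall the standard identity
\[
    I_\nu(x) = \frac{(x/2)^\nu}{\sqrt{\pi}\,\Gamma(\nu+1/2)}\int_{-1}^{1} e^{xt}(1-t^2)^{\nu-1/2}\,\dd t \qquad (\nu > -1/2),
\]
whose integrand is nonnegative, so any restriction of the integration domain only loses mass.

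First I would substitute $t = 1 - u$ to pull out the factor $e^x$ and obtain
\[
    I_\nu(x) \geq \frac{(x/2)^\nu\,e^x}{\sqrt{\pi}\,\Gamma(\nu+1/2)}\int_0^{1} e^{-xu}\bigl(u(2-u)\bigr)^{\nu-1/2}\,\dd u,
\]
where I have discarded the (nonnegative) contribution from $u \in [1, 2]$. On the remaining window we have $2 - u \in [1, 2]$, so $(2-u)^{\nu-1/2}$ is bounded below by a positive constant $c_\nu^{(1)} := \min\{1, 2^{\nu-1/2}\}$ depending only on $\nu$, the choice of minimand being dictated by the sign of $\nu - 1/2$.

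Next I would rescale $v = xu$:
\[
    \int_0^1 e^{-xu}u^{\nu-1/2}\,\dd u = x^{-\nu-1/2}\int_0^{x} e^{-v} v^{\nu-1/2}\,\dd v \geq x^{-\nu-1/2}\int_0^{1} e^{-v} v^{\nu-1/2}\,\dd v,
\]
where the final inequality uses $x > 1$. The remaining integral is a finite positive constant $c_\nu^{(2)}$, convergent at $0$ because $\nu \geq 0$ implies $\nu - 1/2 > -1$. Collecting all the factors yields
\[
    I_\nu(x) \geq \frac{c_\nu^{(1)}\,c_\nu^{(2)}\,2^{-\nu}}{\sqrt{\pi}\,\Gamma(\nu+1/2)}\cdot\frac{e^x}{\sqrt{x}},
\]
and the prefactor is the desired $c_\nu > 0$.

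The only real subtlety is that the bound must hold uniformly for all $x > 1$ rather than only asymptotically as $x \to \infty$. Restricting the $u$-integration to the $x$-independent window $[0,1]$, rather than a shrinking Laplace-type neighbourhood of $u = 0$, is precisely what makes both $c_\nu^{(1)}$ and $c_\nu^{(2)}$ come out free of $x$. This sidesteps the need to invoke or uniformize the classical asymptotic expansion $I_\nu(x) \sim e^x/\sqrt{2\pi x}$, at the cost of a worse (but still $\nu$-only) constant.
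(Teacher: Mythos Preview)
Your argument is correct and starts from the same Poisson integral representation as the paper, but your execution is somewhat cleaner. The paper restricts the integration to the $x$-dependent window $(1-1/x,1)$, then applies the linear estimate $e^{-xu}\geq 1-xu$ and evaluates the resulting polynomial integral explicitly; this forces a case split on the sign of $\nu-\tfrac12$ when bounding $(1-t^2)^{\nu-1/2}$. You instead restrict to the fixed window $u\in[0,1]$, absorb the sign issue into the single constant $c_\nu^{(1)}=\min\{1,2^{\nu-1/2}\}$, and rescale to leave $c_\nu^{(2)}=\int_0^1 e^{-v}v^{\nu-1/2}\,\dd v$ unevaluated. The paper's route yields a more explicit constant, while yours avoids the case distinction and the linearisation step; both are valid Laplace-type lower bounds and give the same $e^x/\sqrt{x}$ behaviour with $\nu$-only prefactors.
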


\begin{proof}
    First, suppose $\nu \geq 1/2$.
    Our starting point is the following integral representation
    of $I_\nu$, which holds for $\nu > -1/2$ \cite{abramowitzHandbookMathematicalFunctions1974}:
    \begin{align}\label{eq:bessel_integral_repr}
        I_\nu(x) = \frac{(x/2)^\nu}{\pi^{1/2}\Gamma(\nu+1/2)} 
            \int_{-1}^1 e^{xt}(1-t^2)^{\nu-\frac{1}{2}}\dd t.
    \end{align}
    Observe first that the factor in front of the integral is non-negative,
    as is the integrand. We first restrict the domain of integration
    to $(1-1/x, 1)$, which is permissible as $x > 1$. Next,
    we use the identity $(1-t^2) = (1-t)(1+t)$ to replace
    $(1-t^2)^{\nu-1/2}$ in the integrand by $(1-t)^{\nu-1/2}$.
    This yields a lower bound, since $t$ only takes positive values over
    the restricted domain of integration, and $\nu \geq 1/2$.
    
    Next, we substitute $u = 1-t$, which yields
    \[
        \int_0^{1/x} e^{x(1-u)}u^{\nu-\frac{1}{2}}\dd u
            = e^x \int_{0}^{1/x} e^{-xu}u^{\nu-\frac{1}{2}} \dd u
                \geq e^x \int_0^{1/x} (1-xu)u^{\nu-\frac{1}{2}} \dd u,
    \]
    making use of the standard inequality $e^x \geq 1 + x$. Note
    that the integrand remains non-negative for all values
    of $u$ over which we integrate.
    The remaining integral evaluates to
    \begin{align*}
       \int_0^{1/x} (1-xu)u^{\nu-\frac{1}{2}} \dd u
        &= \frac{1}{\nu+1/2} \frac{1}{x^{\nu+1/2}}
            - \frac{x}{\nu+3/2}\frac{1}{x^{\nu+3/2}} \\
        &= \left(\frac{1}{\nu + 1/2} - \frac{1}{\nu + 3/2}\right)x^{-\nu-1/2}.
    \end{align*}
    
    Thus, we are left with
    \[
        I_\nu(x) \geq \left(\frac{1}{\nu + 1/2} - \frac{1}{\nu + 3/2}\right) 
            \frac{1}{2^\nu\sqrt{\pi}\Gamma(\nu + 1/2)} \frac{e^x}{\sqrt{x}}.
    \]
    Letting $c_\nu$ be the entire prefactor of $e^x/\sqrt{x}$, we are done
    for $\nu \geq 1/2$.
    
    The case $\nu < 1/2$ can be carried out analogously; however,
    rather than using $1-t^2 = (1+t)(1-t) \geq 1 - t$, we instead
    use $1-t^2 = (1+t)(1-t) \leq 2(1-t)$, since $1-t^2$ now
    appears in the denominator of the integrand in \Cref{eq:bessel_integral_repr}.
\end{proof}

While \Cref{lemma:besselbound_largex} is useful for large values of
$x$ and constant $\nu$, it is too weak for large values of $\nu$ due to
the constant $c_\nu$. We can however
use it to obtain another bound, which we will use at a key step in the
proof of \Cref{thm:angle_distr_bound}. First, we
need the following lemma, which can be found as an equation in a paper by Amos.

\begin{lemma}[\cite{amosComputationModifiedBessel1974}]\label{lemma:bessel_ratio}
    For all $x > 0$ and $\nu \geq 1$,
    \[
        \frac{I_{\nu}(x)}{I_{\nu-1}(x)} \geq \frac{\sqrt{x^2 + \nu^2} - \nu}{x}.
    \]
\end{lemma}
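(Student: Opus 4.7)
The plan is to reduce the claim to Turán's inequality for modified Bessel functions, together with a one-line quadratic manipulation. Writing $R_\nu(x) := I_\nu(x)/I_{\nu-1}(x)$, which is strictly positive for all $x > 0$ and $\nu \geq 1$ by \Cref{lemma:besselbound}, the target inequality becomes $R_\nu(x) \geq (\sqrt{x^2 + \nu^2} - \nu)/x$.

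First, I would derive a self-referential recurrence for $R_\nu$ from the standard three-term recurrence $I_{\nu-1}(x) - I_{\nu+1}(x) = (2\nu/x)\, I_\nu(x)$. Dividing by $I_\nu(x)$ gives
\[
    \frac{1}{R_\nu(x)} = \frac{2\nu}{x} + R_{\nu+1}(x), \qquad \text{i.e.,} \qquad R_\nu(x) = \frac{1}{R_{\nu+1}(x) + 2\nu/x}.
\]

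Second, I would invoke Turán's inequality $I_\nu(x)^2 \geq I_{\nu-1}(x) I_{\nu+1}(x)$, which after dividing by $I_{\nu-1}(x) I_\nu(x) > 0$ rearranges to $R_\nu(x) \geq R_{\nu+1}(x)$, i.e., $\nu \mapsto R_\nu(x)$ is nonincreasing. Substituting $R_{\nu+1} \leq R_\nu$ into the recurrence yields
\[
    R_\nu(x) = \frac{1}{R_{\nu+1}(x) + 2\nu/x} \geq \frac{1}{R_\nu(x) + 2\nu/x},
\]
and cross-multiplying (using $R_\nu > 0$) gives the quadratic inequality $R_\nu(x)^2 + (2\nu/x)\, R_\nu(x) - 1 \geq 0$. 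The positive root of the associated quadratic is exactly $(\sqrt{x^2+\nu^2} - \nu)/x$, so since $R_\nu > 0$ and the quadratic opens upward, the inequality forces $R_\nu(x) \geq (\sqrt{x^2 + \nu^2} - \nu)/x$, as claimed.

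The main obstacle is Turán's inequality itself. I note that it does \emph{not} follow from a naive Cauchy--Schwarz application to the integral representation used in \Cref{lemma:besselbound_largex}: splitting $(1-t^2)^{\nu-1/2}$ as $\sqrt{(1-t^2)^{\nu-3/2}\cdot (1-t^2)^{\nu+1/2}}$ introduces a prefactor $\Gamma(\nu-1/2)\Gamma(\nu+3/2)/\Gamma(\nu+1/2)^2 = (\nu+1/2)/(\nu-1/2)$ pointing the wrong way. A clean route is instead to use the series in \Cref{eq:bessel_definition} and show the coefficient-wise positivity of $I_\nu^2 - I_{\nu-1} I_{\nu+1}$ after symmetrizing the Cauchy products, or to appeal directly to any standard reference, since the result is classical. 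Having Turán in hand, the rest of the argument above is purely algebraic.
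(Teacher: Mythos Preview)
The paper does not prove this lemma at all; it simply quotes it as an inequality appearing in Amos's paper and uses it as a black box. So there is no ``paper proof'' to compare against beyond the citation.

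Your argument is correct and is in fact essentially the standard derivation of this bound. The three ingredients --- the recurrence $I_{\nu-1} - I_{\nu+1} = (2\nu/x) I_\nu$, the Tur\'an-type inequality $I_\nu^2 \geq I_{\nu-1} I_{\nu+1}$ giving monotonicity of $R_\nu$ in $\nu$, and the resulting quadratic inequality $R_\nu^2 + (2\nu/x) R_\nu - 1 \geq 0$ --- combine exactly as you describe, and the positive root of the quadratic is indeed $(\sqrt{x^2+\nu^2}-\nu)/x$. Your caveat about Tur\'an's inequality is apt: it is the only nontrivial step, and your observation that the naive Cauchy--Schwarz on the integral representation of \Cref{lemma:besselbound_largex} gives the wrong direction (because of the gamma prefactor $\Gamma(\nu-\tfrac12)\Gamma(\nu+\tfrac32)/\Gamma(\nu+\tfrac12)^2 > 1$) is a nice sanity check. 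Citing a standard reference for Tur\'an, or carrying out the coefficient-wise argument on the Cauchy product of the series \eqref{eq:bessel_definition}, would complete a fully self-contained proof --- which is strictly more than the paper itself provides.
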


We can use this lemma recursively to bound $I_\nu$ from below
for all $\nu \geq 0$,
with the base case given by \Cref{lemma:besselbound_largex}.

\begin{lemma}\label{lemma:besselbound_generic}
    There exists a constant $c > 0$ such that,
    for all $x > 1$ and $\nu \geq 0$,
    \[
        I_\nu(x) \geq c\cdot \left(\frac{\sqrt{x^2 + \nu^2} - \nu}{x}\right)^{\nu+\frac{1}{2}}
            \frac{e^{\sqrt{x^2 + \nu^2}}}{\sqrt{x}}.
    \]
\end{lemma}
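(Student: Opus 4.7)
The plan is to iterate \Cref{lemma:bessel_ratio} downward from $\nu$ until we reach a value in $[0, 1)$, where \Cref{lemma:besselbound_largex} supplies the base case. Writing $\nu = k + \alpha$ with $k \in \mathbb{Z}_{\geq 0}$ and $\alpha \in [0, 1)$, applying the ratio bound $k$ times gives
\[
    I_\nu(x) \geq \left(\prod_{j=1}^{k} \frac{\sqrt{x^2+(\alpha+j)^2}-(\alpha+j)}{x}\right) I_\alpha(x).
\]
The base case $I_\alpha(x) \geq c' e^x/\sqrt{x}$ holds with a uniform constant $c' > 0$: inspecting the proof of \Cref{lemma:besselbound_largex} shows that $c_\nu$ depends continuously and positively on $\nu \in [0, 1)$, so $c' := \inf_{\alpha \in [0, 1)} c_\alpha > 0$ works.

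Setting $G(t) := \log\bigl((t+\sqrt{x^2+t^2})/x\bigr) = \operatorname{arcsinh}(t/x)$, so that $-G(\alpha + j)$ is the log of the $j$-th factor above, taking logs in the displayed recursion and cancelling $e^x/\sqrt{x}$ reduces the goal to showing
\[
    \sum_{j=1}^{k} G(\alpha+j) \leq (\nu + \tfrac{1}{2}) G(\nu) - \sqrt{x^2+\nu^2} + x + O(1).
\]
A direct antidifferentiation gives $\int G(t)\dd t = t\,G(t) - \sqrt{x^2+t^2} + C$, so the right-hand side above equals $\int_0^\nu G(t)\dd t + \tfrac{1}{2} G(\nu) + O(1)$.

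For the sum-to-integral comparison, I would exploit that $G$ is increasing and concave on $[0, \infty)$, since $G'(t) = 1/\sqrt{x^2+t^2}$ is positive and decreasing. The trapezoidal inequality for concave functions, $(b-a)(G(a)+G(b))/2 \leq \int_a^b G(t)\dd t$, applied to each unit interval $[\alpha+j, \alpha+j+1]$ for $j = 0, \ldots, k-1$ and summed, yields
\[
    \tfrac{1}{2} G(\alpha) + \sum_{j=1}^{k-1} G(\alpha+j) + \tfrac{1}{2} G(\nu) \leq \int_\alpha^\nu G(t)\dd t.
\]
Rearranging and using $G(\alpha) \geq 0$ and $\int_0^\alpha G\dd t \geq 0$ gives $\sum_{j=1}^k G(\alpha+j) \leq \int_0^\nu G\dd t + \tfrac{1}{2} G(\nu)$. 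Substituting this into the log of the product recursion and exponentiating, together with $g(\nu, x) = e^{-G(\nu)}$, recovers the claimed bound.

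The main obstacle is extracting the correct coefficient $\tfrac{1}{2}$ of $G(\nu)$ in the slack term. A naive monotonicity estimate such as $G(\alpha+j) \leq \int_{\alpha+j}^{\alpha+j+1} G\dd t$ only yields $\sum G(\alpha+j) \leq \int_0^\nu G\dd t + G(\nu) + O(1)$, which would change the target exponent to $\nu + 1$; because $g(\nu, x) \in (0, 1]$ and $g(\nu, x)^{-1/2}$ is unbounded in $\nu/x$, the lost factor cannot be absorbed into the multiplicative constant $c$. The half-weight on the endpoints delivered by the trapezoidal rule for concave $G$ is exactly what is needed to recover this factor, and identifying this as the right tool is the crux of the argument.
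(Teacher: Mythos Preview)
Your proof is correct and follows the same overall architecture as the paper: iterate \Cref{lemma:bessel_ratio} down to a base index in $[0,1)$, invoke \Cref{lemma:besselbound_largex} there with a uniform constant, and control the resulting sum $\sum_j G(\alpha+j)$ against $\int_0^\nu G + \tfrac12 G(\nu)$. The difference lies in how the sum is compared to the integral. The paper uses the Euler--Maclaurin formula with $p=2$, computing the boundary term $\tfrac12(f(1)+f(\nu))$ explicitly and then bounding the second-order remainder via $|f'(\nu)-f'(1)|\le 1$. You instead exploit directly that $G'' \le 0$ on $[0,\infty)$, so the trapezoidal rule gives a one-sided inequality on each unit subinterval and, after telescoping, delivers the endpoint weight $\tfrac12 G(\nu)$ without any remainder estimate. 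Both routes hinge on the same concavity (the paper notes $f''\le 0$ only to evaluate $|R_2|$), but your formulation is more economical: it sidesteps the Euler--Maclaurin machinery and the separate $O(1)$ bookkeeping, and in fact yields the inequality $\sum_{j=1}^k G(\alpha+j) \le \int_0^\nu G + \tfrac12 G(\nu)$ with no additive slack at all, so the final constant is simply $c'=\inf_{\alpha\in[0,1)} c_\alpha$. Your closing remark correctly isolates why the $\tfrac12$ is essential and cannot be weakened to $1$.
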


\begin{proof}
    First, we assume $\nu \in \mathbb{N}$ for the sake of clarity; fractional
    $\nu$ and $\nu < 1$ will be addressed at the end of the proof.
    We start by using \Cref{lemma:bessel_ratio}.
    Applied iteratively, it yields
    \[
        I_\nu(x) \geq I_0(x) \prod_{k=1}^\nu \frac{\sqrt{x^2 + k^2} - k}{x}
            = x^\nu I_0(x) \prod_{k=1}^\nu \frac{1}{\sqrt{x^2 + k^2} + k}.
    \]
    Equivalently,
    \[
        \frac{I_0(x) \cdot x^\nu}{I_\nu(x)} \leq \prod_{k=1}^\nu (\sqrt{x^2 + k^2} + k). 
    \]
    To bound this product, we first take its logarithm to convert
    it to a sum:
    \[
        \ln \prod_{k=1}^\nu (\sqrt{x^2 + k^2} + k)
            = \sum_{k=1}^\nu \ln (\sqrt{x^2 + k^2} + k).
    \]
    
    It is tempting to now bound this sum by integrating the summand over
    $[1, \nu + 1]$, as the summand is monotone increasing in $k$.
    However, the resulting bound turns out to be slightly too weak for
    our purposes. Instead, we refine this by using the Euler-Maclaurin formula
    \cite{apostolElementaryViewEuler1999}.
    The formula states that, for a function $f$ that is $p$-times continuously
    differentiable on $[m, n]$,
    \[
        \sum_{i=m}^n f(i) = \int_m^n f(k)\dd k + \frac{f(n) + f(m)}{2}
            + \sum_{k=1}^{\lfloor p/2 \rfloor} \frac{B_{2k}}{(2k)!} (f^{(2k-1)}(n) - f^{(2k-1)}(m))
                + R_p,
    \]
    where $B_k$ denotes the $k$th Bernoulli number with $B_1 = \frac{1}{2}$,
    and $R_p$ is a remainder term. The remainder can be bounded from above as
    \cite{abramowitzHandbookMathematicalFunctions1974}
    \[
        |R_p| \leq \frac{2\zeta(p)}{(2\pi)^p}\int_m^n |f^{(p)}(x)|\dd x,
    \]
    with $\zeta$ the Riemann zeta function.
    We apply this formula to $f(k) = \ln(\sqrt{x^2 + k^2} + k)$. It suffices
    to take $p = 2$, so that we retain only the first term of the sum. We have
    \[
        f'(k) = \frac{1}{\sqrt{x^2 + k^2}}.
    \]
    Observe that $f''(k) \leq 0$ for all $x, k \in \mathbb{R}$, so we have
    $|f''(k)| = -f''(k)$. This enables us to write the estimate for the
    remainder term as
    \[
        |R_2| \leq -\frac{2\zeta(2)}{4\pi^2} \int_1^\nu f''(k)\dd k
            = - \frac{1}{12} \left(f'(\nu) - f'(1)\right).
    \]
    Since $B_{2} = \frac{1}{6}$ \cite{abramowitzHandbookMathematicalFunctions1974}, we obtain
    \begin{align*}
        \sum_{k=1}^\nu f(k)
            &= \int_1^\nu f(k) \dd k + \frac{f(1) + f(\nu)}{2}
                + \frac{1}{12}(f'(\nu) - f'(1))
                    + R_p \\
            &\leq  
            \int_1^\nu f(k) \dd k + \frac{f(1) + f(\nu)}{2}
                + \frac{1}{6}\left|
                    f'(\nu) - f'(1) 
                \right|.
    \end{align*}
    The integral evaluates to
    \[
        \sqrt{1+x^2} - \sqrt{x^2 + \nu^2}
            + \ln\left(
                \frac{1}{1 + \sqrt{1 + x^2}} 
            \right)
            + \nu\ln\left(
                \sqrt{x^2 + \nu^2} + \nu 
            \right).
    \]
    Meanwhile, we have
    \[
        \frac{f(1) + f(\nu)}{2} = \ln \sqrt{1 + \sqrt{1 + x^2}} + \frac{1}{2}\ln(\sqrt{x^2 + \nu^2} + \nu), 
    \]
    and
    \[
        \left|
            f'(1) - f'(\nu) 
        \right|
            = \frac{1}{\sqrt{x^2 + 1}} - \frac{1}{\sqrt{x^2 + \nu^2}} \leq 1.
    \]
    Putting this all together,
    \[
        \sum_{k=1}^\nu \ln(\sqrt{x^2 + \nu^2} + \nu)
            \leq \sqrt{1+x^2} - \sqrt{x^2 + \nu^2}
                + \ln\left(\frac{(\sqrt{x^2+\nu^2}+\nu)^{\nu+\frac{1}{2}}}{\sqrt{1 + \sqrt{1 + x^2}}}\right)
                + 1.
    \]

    Exponentiating, we find
    \begin{align*}
        \frac{I_0(x)x^\nu}{I_\nu(x)}
            \leq e\cdot \frac{e^{\sqrt{1+x^2} - \sqrt{x^2 + \nu^2}}}
                {\sqrt{1 + \sqrt{1 + x^2}}}
                \left(\sqrt{x^2 + \nu^2} + \nu\right)^{\nu+\frac{1}{2}}.
    \end{align*}
    Using that $1 + \sqrt{1 + x^2} \geq x$,
    \begin{align*}
        I_{\nu}(x) &\geq \frac{1}{e}\cdot \left(
            \frac{x}{\sqrt{x^2 + \nu^2} + \nu}
        \right)^{\nu + \frac{1}{2}}
        e^{\sqrt{x^2 + \nu^2} - \sqrt{1+x^2}} I_0(x) \\
        &= \frac{1}{e} \cdot \left(\frac{\sqrt{x^2 + \nu^2} - \nu}{x}\right)^{\nu+\frac{1}{2}}
        e^{\sqrt{x^2 + \nu^2} - \sqrt{1+x^2}} I_0(x).
    \end{align*}
    
    To conclude the proof for integral
    $\nu$, we apply \Cref{lemma:besselbound_largex} for
    $\nu = 0$ to obtain $I_0(x) \geq c_0 \cdot e^x /\sqrt{x}$,
    and observe that $|\sqrt{1 + x^2} - x| \leq 1$ for
    all $x \geq 0$.
    
    For fractional $\nu$, one can follow the same proof,
    simply replacing $I_0$ by $I_{\nu'}$ for some $\nu' \in (0, 1)$
    throughout. Meanwhile, for $\nu < 1$, one can choose a suitable constant
    to match the bound from the lemma statement to the bound from
    \Cref{lemma:besselbound_largex}.
\end{proof}

The final piece of preparation for \Cref{thm:angle_distr_bound} is
now the following inequality.

\begin{lemma}\label{lemma:simple_inequality}
    Let $x \geq 0$ and $y \geq 1$. Then
    \[
        \left(\frac{\sqrt{x^2 + y^2} + y}{\sqrt{x^2 + \left(y-\frac{1}{2}\right)^2}
            + \left(y - \frac{1}{2}\right)}\right)^y \leq e.
    \]
\end{lemma}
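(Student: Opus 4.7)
The plan is to take logarithms and reduce the inequality to a mean-value argument. Let
\[
    h(t) = \ln\bigl(\sqrt{x^2+t^2}+t\bigr),
\]
so that, after taking logarithms, the claim becomes $y\bigl(h(y)-h(y-\tfrac{1}{2})\bigr) \leq 1$.

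The crucial computation is that the logarithmic derivative telescopes nicely:
\[
    h'(t) = \frac{1}{\sqrt{x^2+t^2}+t}\cdot\left(\frac{t}{\sqrt{x^2+t^2}}+1\right) = \frac{1}{\sqrt{x^2+t^2}}.
\]
This simple closed form is the only real ingredient one needs; everything afterwards is essentially routine.

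By the mean value theorem, there is $\xi \in (y-\tfrac{1}{2}, y)$ with $h(y)-h(y-\tfrac{1}{2}) = \tfrac{1}{2}h'(\xi)$. Since $h'(t)=1/\sqrt{x^2+t^2}$ is decreasing in $t\geq 0$, we get
\[
    h(y)-h\bigl(y-\tfrac{1}{2}\bigr) \leq \tfrac{1}{2} h'\bigl(y-\tfrac{1}{2}\bigr) \leq \frac{1}{2(y-\tfrac{1}{2})} = \frac{1}{2y-1},
\]
where in the last step we used $\sqrt{x^2+(y-\tfrac12)^2} \geq y - \tfrac12$ (valid since $y\geq 1$). Multiplying by $y$ yields $y\bigl(h(y)-h(y-\tfrac{1}{2})\bigr) \leq y/(2y-1)$, and a quick check shows $y/(2y-1)\leq 1$ precisely when $y\geq 1$, which is the assumed range. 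Exponentiating gives the desired bound.

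I do not foresee any real obstacle; the only step that might feel like a trick is recognising that $h'(t)$ collapses to $1/\sqrt{x^2+t^2}$, after which the rest is just monotonicity of $h'$ and a one-line inequality on $y/(2y-1)$.
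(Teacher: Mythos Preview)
Your proof is correct and is in fact cleaner than the paper's argument. Both proofs hinge on the same observation that the logarithmic derivative of $t\mapsto \sqrt{x^2+t^2}+t$ simplifies dramatically, but they exploit it in different directions. The paper differentiates with respect to $x$, shows that the bracketed ratio is nonincreasing in $x$, and then reduces to the boundary case $x=0$, where it finishes with the elementary estimate $\bigl(y/(y-\tfrac12)\bigr)^y \leq (1+1/y)^y \leq e$. You instead differentiate with respect to the $y$-variable, obtain the closed form $h'(t)=1/\sqrt{x^2+t^2}$, and apply the mean value theorem together with monotonicity of $h'$ to handle all $x$ at once; this avoids the reduction step entirely and gives a shorter, more uniform argument. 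The paper's route has the minor advantage of making the extremal case $x=0$ explicit, but your approach is more direct and arguably more robust.
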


\begin{proof}
    Let $f(x,y)$ denote the function in brackets. We first
    show that $f$ is nonincreasing in $x$. Observe that
    $f(x,y)$ is nonincreasing if and only if $\ln f(x,y)$ is nonincreasing.
    We have
    \[
        \pder{x} \ln \left(\sqrt{x^2+y^2} + y\right) 
            = \frac{1}{\sqrt{x^2 + y^2} + y}
                \cdot \frac{x}{\sqrt{x^2+y^2}}.
    \]
    Thus,
    \begin{multline*}
        \pder{x} \ln f(x,y)
        = y \cdot x \\
        \times\left(
            \frac{1}{\sqrt{x^2+y^2} \left(\sqrt{x^2+y^2} + y^2\right)}
            - \frac{1}{\sqrt{x^2 + \left(y-\frac{1}{2}\right)^2}
                \left(\sqrt{x^2 + \left(y-\frac{1}{2}\right)^2}
                    + \left(y-\frac{1}{2}\right)\right)}
        \right).
    \end{multline*}
    As the factor inside the parentheses is nonpositive and we
    assume $x \geq 0$ and $y \geq 1$, we see that
    $\ln f(x,y)$, and hence $f(x,y)$, is nonincreasing in $x$.
    
    We desire an upper bound for $f(x,y)^y$, so we set $x = 0$:
    \[
        f(x,y)^y \leq f(0,y)^y
            = \left(\frac{y}{y-\frac{1}{2}}\right)^y
            = \left(\frac{1}{1 - \frac{1}{2y}}\right)^y
            \leq \left(1 + \frac{1}{y}\right)^y \leq e,
    \]
    where the penultimate inequality holds for $y \geq 1$.
\end{proof}

We can now prove \Cref{thm:angle_distr_bound}.

\begin{proof}[Proof of \Cref{thm:angle_distr_bound}]
    Observe that the upper bound on the density 
    of $\phi$ is independent of the orientation of the line $L$. Hence, we
    rotate $L$ about $x$ such that $L$ passes through $\mu$. We begin by
    proving the first part of the theorem.
    
    Let $f_Y$ denote the density of $y$,
    \[
        f_Y(y) = \frac{1}{(2\pi)^{d/2}\sigma^d}e^{-\frac{\|y-\mu\|^2}{2\sigma^2}}. 
    \]
    We center our coordinate system on $x$, and orient the $y_1$-axis
    along $\mu - x$, so that $\mu = (s, 0, \ldots, 0)$.
    We then switch to spherical coordinates
    $(r, \phi, \theta_1, \ldots, \theta_{d-2})$, where
    \begin{align*}
        y_1 &= r \cos\phi, \\
        y_2 &= r\sin\phi\cos\theta_1, \\
        y_3 &= r\sin\phi\sin\theta_2\cos\theta_2, \\
        &\vdots \\
        y_d &= r\sin\phi \sin\theta_2 \ldots \sin\theta_{d-3}\sin\theta_{d-2}.
    \end{align*}
    Here, $r$ ranges from $0$ to $\infty$, $\theta_{d-2}$ ranges from
    $0$ to $2\pi$, while all other angles range from $0$ to $\pi$. Due to
    the orientation of our coordinate system, the coordinate angle $\phi$
    corresponds to the random variable $\phi$ from the theorem statement.
    
    To compute the density of $\phi$ conditioned on
    $R = r$, we write
    \[
        f_{\phi|R=r}(\phi) = \frac{f_{\phi,R}(\phi, r)}{f_R(r)},
    \]
    where $f_{\phi,R}$ denotes the joint density of $\phi$ and
    $R$. We obtain this density by integrating the density of 
    $f_Y$ transformed to spherical coordinates over $\theta_1$ through
    $\theta_{d-2}$. Meanwhile, $f_R$ denotes the density of
    $R$, which is a noncentral $d$-dimensional chi distributed
    random variable with parameter $s$.
    
    The joint density $f_{\phi,R}$ is
    \begin{align*}
        f_{\phi,R}(\phi, r) = \frac{1}{(2\pi)^{d/2}} \frac{r^{d-1}}{\sigma^d}
            e^{-\frac{r^2 + \sigma^2}{2\sigma^2}} e^{\frac{rs\cos\phi}{\sigma^2}}
               \sin^{d-2}\phi\int_0^{2\pi}\dd \theta \prod_{k=1}^{d-3}\int_{0}^\pi \sin^k \theta \dd \theta.
    \end{align*}
    It holds that, for $k \in \mathbb{N}$,
    \[
        \int_0^\pi \sin^k \theta \dd \theta = \frac{\sqrt{\pi}\Gamma\left(\frac{k+1}{2}\right)}
            {\Gamma\left(\frac{k+2}{2}\right)}.
    \]
    By telescoping, it follows that
    \[
        \prod_{k=1}^{d-3} \int_0^\pi \sin^k \theta\dd \theta
            = \pi^{\frac{d-3}{2}} \cdot \frac{\Gamma(1)}{\Gamma\left(\frac{d-1}{2}\right)}
            = \frac{\pi^{\frac{d-3}{2}}}{\Gamma\left(\frac{d-1}{2}\right)}.
    \]
    Inserting this into our expression for $f_{\phi,R}$,
    we obtain
    \[
         f_{\phi,R}(\phi, r)
            \leq \frac{2^{1-\frac{d}{2}}}{\sqrt{\pi}}
            \frac{r^{d-1}}{\sigma^d}\frac{\sin^{d-2}\phi}{\Gamma\left(\frac{d-1}{2}\right)}
            e^{-\frac{r^2+s^2}{2\sigma^2}}e^{\frac{rs\cos\phi}{\sigma^2}}.
    \]
    
    Next, we use the expression for $f_R$ given in \Cref{eq:noncentral_chi}.
    Combining this with the above bound for $f_{\phi,R}$, we have
    \[
        f_{\phi|R=r}(\phi) \leq \frac{2^{1-\frac{d}{2}}}{\sqrt{\pi}}
            \frac{\sin^{d-2}\phi}{\Gamma\left(\frac{d-1}{2}\right)}
        \left(\frac{rs}{\sigma^2}\right)^{\frac{d}{2}-1} \frac{e^{\frac{rs\cos\phi}{\sigma^2}}}
            {I_{d/2-1}(rs/\sigma^2)}.
    \]
    
    For brevity, let
    $x := rs/\sigma^2$, and let $\nu := d/2 - 1$.
    Then, up to a constant, $f_{\phi|R=r}$ is bounded from above by 
    \begin{align}\label{eq:conditional_distr}
        \frac{x^\nu\sin^{2\nu}(\phi) e^{x\cos\phi}}
        {2^\nu \Gamma\left(\nu + \frac{1}{2}\right)I_\nu(x)}.
    \end{align}

    For any fixed $x$ and $\nu$, \Cref{eq:conditional_distr} is maximized when
    $\phi = \phi^*$, where $\phi^*$ satisfies
    \begin{align}\label{eq:optimal_angle}
        \sin^2 \phi^* = \frac{2\nu}{x} \cos\phi^*. 
    \end{align}
    Obtaining this is a matter of ordinary calculus.
    This equation has a unique solution in $[0, \pi]$ of
    \begin{align}\label{eq:optimal_angle2}
        \phi^* = 2\arctan\left(
            \sqrt{\frac{\sqrt{\nu^2 + x^2} - x}{\nu}} 
        \right) = 
             2\arctan\left(
            \sqrt{\sqrt{x^2/\nu^2 + 1} - x/\nu}
            \right).
    \end{align}
    It can also be verified that
    \begin{align}\label{eq:cosphi_opt}
        \cos\phi^* = \sqrt{1 + \frac{\nu^2}{x^2}} - \frac{\nu}{x} = \frac{\sqrt{x^2 + \nu^2} - \nu}{x}. 
    \end{align}
    Using this identity together with \Cref{eq:optimal_angle} in
    \Cref{eq:conditional_distr}, we find
    \begin{align*}
        f_{\phi|R=r}(\phi) &\leq
            \Theta(1) \cdot \frac{\nu^\nu}{\Gamma\left(\nu +\frac{1}{2}\right)}
                \cdot \left(\frac{\sqrt{x^2 + \nu^2} - \nu}{x}\right)^\nu
                \cdot \frac{e^{x\left(\sqrt{1 + \frac{\nu^2}{x^2}} - \frac{\nu}{x}\right)}}{I_\nu(x)} \\
                &= \Theta(1) \cdot \frac{(\nu/e)^\nu}{\Gamma\left(\nu + \frac{1}{2}\right)}
                \cdot \left(\frac{\sqrt{x^2 + \nu^2} - \nu}{x}\right)^\nu
                    \cdot \frac{e^{x\left(\sqrt{1 + \frac{\nu^2}{x^2}}\right)}}{I_\nu(x)} \\
                &= \Theta(1)
                    \cdot \left(\frac{\sqrt{x^2 + \nu^2} - \nu}{x}\right)^\nu
                    \cdot \frac{e^{\sqrt{x^2 + \nu^2}}}{I_\nu(x)},
    \end{align*}
    since Stirling's Formula yields $(\nu/e)^\nu/\Gamma(\nu + 1/2) = \Theta(1)$.
    
    We consider two cases, $x \leq 1$ and $x > 1$.
    ~\paragraph*{Case 1: $x \leq 1$.}
    We apply \Cref{lemma:besselbound} to
    \Cref{eq:conditional_distr},
    and find an upper bound of
    \[
        O\left(\frac{\Gamma(\nu + 1)}{\Gamma(\nu + 1/2)}\right) = O(\sqrt{\nu}).
    \]
    ~\paragraph*{Case 2: $x > 1$.}
    We use \Cref{lemma:besselbound_generic}, which yields
    \begin{align*}
        f_{\phi|R=r}(\phi) &\leq \Theta(1) \cdot
            \sqrt{\frac{x}{\sqrt{x^2 + \nu^2} - \nu}} \cdot \sqrt{x}
            = \Theta(1) \cdot \sqrt{\frac{\sqrt{x^2 + \nu^2} + \nu}{x}}\cdot \sqrt{x} \\
            &= O\left(\sqrt{x} + \sqrt{\nu}\right).
    \end{align*}
    Inserting the definitions of $x$ and $\nu$ concludes the proof of the first part.
    
    Next, let $d \geq 3$, or equivalently, $\nu \geq \frac{1}{2}$. 
    We assume $x > 1$ in the following; the case $x \leq 1$ simply follows
    from using \Cref{lemma:besselbound} in \Cref{eq:conditional_distr} and
    dividing by $\sin\phi$.
    
    To bound
    $f_{\phi|R=r}(\phi)/\sin\phi$, we follow mostly the same process. We return
    once more to \Cref{eq:conditional_distr}, and divide by $\sin\phi$.
    For any fixed $x$ and $\nu$, the resulting equation is then maximized when $\phi=\phi^*$,
    where $\phi^*$ satisfies
    \[
        \sin^2\phi^* = \frac{2\nu-1}{x}\cos\phi^*.
    \]
    The angle $\phi^*$ satisfies \Cref{eq:optimal_angle2,eq:cosphi_opt}, with
    $\nu$ replaced by $\nu - \frac{1}{2}$. Inserting this in \Cref{eq:conditional_distr} and
    working through the algebra, we eventually obtain
    \begin{multline*}
        \frac{f_{\phi|R=r}(\phi)}{\sin\phi} \leq \Theta(1) \cdot
            \frac{\left(\frac{\nu - \frac{1}{2}}{e}\right)^{\nu-\frac{1}{2}}}{\Gamma\left(\nu + \frac{1}{2}\right)}
                \cdot \sqrt{x} \cdot \sqrt{\frac{\sqrt{x^2 + \left(\nu - \frac{1}{2}\right)^2} + \nu - \frac{1}{2}}{x}}
                     \\ \cdot \left(
                        \frac{\sqrt{x^2 + \left(\nu - \frac{1}{2}\right)^2} - \left(\nu - \frac{1}{2}\right)}{x}
                    \right)^\nu
                    \cdot \frac{\exp\left(\sqrt{x^2 + \left(\nu-\frac{1}{2}\right)^2}\right)}{I_\nu(x)}.
    \end{multline*}
    Observe that for $\nu \geq \frac{1}{2}$, we have
    \[
        \frac{\left(\frac{\nu - \frac{1}{2}}{e}\right)^{\nu - \frac{1}{2}}}{\Gamma\left(\nu + \frac{1}{2}\right)}
            \leq 
        \frac{(\nu/e)^{\nu - \frac{1}{2}}}{\Gamma\left(\nu + \frac{1}{2}\right)}
            =
        \frac{(\nu/e)^{\nu}}{\Gamma\left(\nu + \frac{1}{2}\right)} \cdot \sqrt{\frac{e}{\nu}}
            \in O\left(\frac{1}{\sqrt{\nu}}\right).
    \]

    Since we assume $x > 1$, we may apply \Cref{lemma:besselbound_generic} to find
    \begin{multline*}
        \frac{f_{\phi|R=r}(\phi)}{\sin\phi} \leq
        \Theta(1)
            \cdot 
                \sqrt{\frac{x}{\nu}} \cdot 
                    \sqrt{\frac{\sqrt{x^2 + \left(\nu - \frac{1}{2}\right)^2} + \nu - \frac{1}{2}}{x}}
                        \left(\frac{\sqrt{x^2 + \left(\nu - \frac{1}{2}\right)^2} - \left(\nu - \frac{1}{2}\right)}{x}
                    \right)^\nu
                \\ \cdot \sqrt{x} \cdot \left(
                    \frac{x}{\sqrt{x^2 + \nu^2} - \nu} 
                \right)^{\nu + \frac{1}{2}}.
    \end{multline*}
    Through some more elementary algebra, we can bound this (up to a constant) by
    \[
        \frac{\sqrt{x^2 + \nu^2} + \nu}{\sqrt{\nu}} 
            \cdot \left(\frac{\sqrt{x^2 + \nu^2} + \nu}{\sqrt{x^2 + \left(\nu - \frac{1}{2}\right)^2} + \nu - \frac{1}{2}}\right)^\nu.
    \]
    The first factor in this expression evaluates to $O(\sqrt{\nu} + x/\sqrt{\nu})$. To conclude,
    we must show that
    \[
            \left(
                \frac{\sqrt{x^2 + \nu^2} + \nu}{\sqrt{x^2 + \left(\nu-\frac{1}{2}\right)^2}
                    + \left(\nu - \frac{1}{2}\right)}
            \right)^\nu \in O(1)
    \]
    for $\nu \in \{1/2, 1, 3/2, \ldots\}$. For $\nu = \frac{1}{2}$, we have
    \[
        \left(\frac{\sqrt{x^2 + \frac{1}{4}} + \frac{1}{2}}{x}\right)^{\frac{1}{2}}
            \leq \sqrt{1 + \frac{1}{x}} < \sqrt{2},
    \]
    where the latter inequality holds for $x > 1$. For $\nu \geq 1$, we use
    \Cref{lemma:simple_inequality} to bound the given quantity by $e$.
    This then proves the second part of the theorem.
\end{proof}


\section{Analysis of Single 2-Changes}\label{sec:single_2change}

To improve upon the previous analyses,
it pays to examine where the analysis
of Euclidean 2-opt with Gaussian perturbations \cite{mantheySmoothedAnalysis2Opt2013}
fails for $d \in \{2, 3\}$. The problem is that in the course of the proof,
Manthey \& Veenstra compute
\[
    \int_0^\infty \frac{1}{x^2}\chi_{d-1}(x)\dd{x},
\]
where $\chi_d$ denotes the $d$-dimensional chi distribution. This integral is
finite only when $d \geq 4$.

This problem does not appear in the results obtained by Englert et al.\ \cite{englertWorstCaseProbabilistic2014}.
They consider a more general model of smoothed analysis
wherein the adversary specifies a probability density for each point in the
TSP instance independently. Since the only information available on the probability
densities is their upper bound, they consider a simplified model of a 2-change to
keep the analysis tractable.
The analysis is then translated to their generic model, which incurs a factor which
is super-exponential in $d$.

Even when one considers $d$ to be a constant as Englert et al.\ do, the genericity of
their model still comes at a cost when translated to a smoothed analysis with
Gaussian perturbations, eventually yielding a bound which is
polynomial in $\sigma^{-d}$.

Specifying the perturbations as Gaussian enables us to analyze the true
random experiment modeling a 2-change more closely, as we know the distributions 
of the distances between points in the smoothed instance. Combined with
\Cref{thm:angle_distr_bound}, which provides information on the angles between
edges in the instance, we can carry out an analysis that improves on both
{Englert et al.}'s as well as Manthey \& Veenstra's result when we consider
Gaussian perturbations.

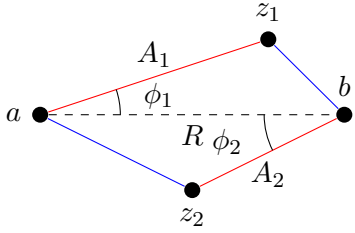
\begin{figure}
\centering
\begin{tikzpicture}
    
    \vertex[label=left:$a$](a) at (0, 0) {};
    \vertex[label=above:$z_1$](z1) at (3, 1) {};
    \vertex[label=above:$b$](b) at (4, 0) {};
    \vertex[label=below:$z_2$](z2) at (2, -1) {};
    
    \draw[red] (a) -- node[above, black] {$A_1$} ++ (z1);
    \draw[blue] (z1) -- (b); 
    \draw[blue] (z2) -- (a);
    \draw[red] (b) -- node[below, black] {$A_2$} ++ (z2);
    
    \draw[dashed] (a) -- node[below] {$R$} ++ (b);
    
    \pic [draw, -, "$\phi_1$", angle eccentricity=1.5, angle radius=30] {angle = b--a--z1};
    \pic [draw, -, "$\phi_2$", angle eccentricity=1.5, angle radius=30] {angle = a--b--z2};
    
\end{tikzpicture}
\caption{Labels of points and angles involved in a single 2-change.
\label{fig:2change}}
\end{figure}

We first set up our model of a 2-change perturbed by Gaussian random variables.
To obtain a bound for this case, we first formulate a different analysis of single
2-changes. Consider a 2-change involving the points
$\{a, b, z_1, z_2\} \subseteq [-D, D]^d$,
where the edges
$\{a, z_1\}$ and $\{b, z_2\}$ are replaced by $\{b, z_1\}$ and $\{a, z_2\}$.
The improvement to the tour length due to this 2-change is
\[
    \Delta = \|a - z_1\| - \|b - z_1\| + \|b - z_2\| - \|a - z_2\|.
\]

To analyze $\Delta$, we first define
$A_1 := \|a - z_1\|$, $A_2 := \|b - z_2\|$, and $R := \|a - b\|$.
Moreover, 
we identify the angle $\phi_1$ as the angle between $a - z_1$ and $a - b$, and
restrict it to $[0, \pi]$. The corresponding angle $\phi_2$ is defined similarly.
The restriction of these angles to $[0, \pi]$ is without loss of generality;
one may readily observe from
\Cref{fig:2change} that flipping the sign of either $\phi_1$ or $\phi_2$ does not change
the value of $\Delta$.

While \Cref{fig:2change} may give the impression that we are restricting the analysis
to the $d = 2$ case, the analysis is valid for any $d \geq 2$. The two triangles
$\triangle a z_1 b$ and $\triangle a z_2 b$ will lie in two separate planes in general.
The distances involved must thus be understood as $d$-dimensional Euclidean distances.

With these definitions, we have $\Delta = \eta_1 + \eta_2$, where for $i \in [2]$
\[
    \eta_i = A_i - \sqrt{A_i^2 + R^2 - 2A_i R \cos\phi_i},
\]
which follows from the Law of Cosines.

Suppose we condition on the events $A_1 = a_1$, $A_2 = a_2$, and
$R = r$, for some $a_1, a_2, r > 0$. Under these events,
$\eta_1$ and $\eta_2$ are independent random variables.
Moreover, $\Delta$ is completely fixed by
revealing the angles $\phi_1$ and $\phi_2$.
Since we condition on $A_i = a_i$ and $R = r$, we can
then bound the density of $\phi_i$
using \Cref{cor:angle_distr_bound}.

We can use this independence to obtain bounds for
$\prob(\Delta \in (0, \epsilon])$ for some
small $\epsilon > 0$ under these events,
for various orderings of $a_1$, $a_2$ and $r$. These
bounds are given in
\Cref{lemma:single_2change_conditioned}.

We begin by obtaining a bound to the density of $\eta_i$, $i \in [2]$, using the 
fact that all randomness in $\eta_i$ is contained in the angle 
$\phi_i$ under the conditioning that $A_i = a_i$ and $R = r$. We denote by
$f_{\phi_i|R=r,A_i=a_i}$ the density of the angle $\phi_i$, conditioned
on $R = r$ and $A_i = a_i$.

\begin{lemma}\label{lemma:eta_density}
    Let $i \in [2]$.
    The density of $\eta_i = \|a - z_i\| - \|b - z_i\|$, conditioned
    on $A_i = a_i$ and $R = r$, is bounded from above
    by
    \[
        \frac{a_i + r}{a_i r} \cdot \frac{f_{\phi_i|R=r,A_i=a_i}(\phi_i(\eta))}{|\sin\phi_i(\eta)|},
    \]
    where
    $
        \phi_i(\eta) = \arccos\left(
            \frac{a_i^2 + r^2 - (a_i - \eta)^2}{2a_ir} 
        \right).
    $
\end{lemma}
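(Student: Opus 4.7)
The plan is a straightforward change of variables from $\phi_i$ to $\eta_i$, using the fact that, conditioned on $A_i = a_i$ and $R = r$, the randomness in $\eta_i$ is entirely carried by the angle $\phi_i$.

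First, by the Law of Cosines, conditioned on $A_i = a_i$ and $R = r$ we have
\[
    \|b - z_i\| = \sqrt{a_i^2 + r^2 - 2 a_i r \cos \phi_i},
\]
so $\eta_i = a_i - \|b - z_i\|$ is a deterministic function of $\phi_i$. Inverting yields $\phi_i(\eta) = \arccos\!\bigl((a_i^2 + r^2 - (a_i - \eta)^2)/(2 a_i r)\bigr)$, as in the lemma statement. Since we restrict $\phi_i \in [0, \pi]$, this map is monotone, so the standard transformation rule for densities gives
\[
    f_{\eta_i | R=r, A_i = a_i}(\eta) = f_{\phi_i | R = r, A_i = a_i}(\phi_i(\eta)) \cdot \left|\frac{\dd \phi_i}{\dd \eta}\right|.
\]

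Next, I would compute the Jacobian by implicit differentiation of $\cos \phi_i = (a_i^2 + r^2 - (a_i - \eta)^2)/(2 a_i r)$, which gives
\[
    -\sin \phi_i \cdot \frac{\dd \phi_i}{\dd \eta} = \frac{a_i - \eta}{a_i r}.
\]
Therefore
\[
    \left|\frac{\dd \phi_i}{\dd \eta}\right| = \frac{|a_i - \eta|}{a_i r \, |\sin \phi_i|}.
\]

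Finally, observing that $a_i - \eta = \|b - z_i\|$ by definition of $\eta_i$, the triangle inequality gives $\|b - z_i\| \leq \|a - b\| + \|a - z_i\| = r + a_i$, so $|a_i - \eta| \leq a_i + r$. Substituting this bound into the transformation formula yields the claim. The only mildly delicate point is the monotonicity of the map $\phi_i \mapsto \eta_i$, which is ensured by restricting to $[0, \pi]$ where $\sin \phi_i \geq 0$; everything else is elementary.
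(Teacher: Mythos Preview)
Your proof is correct and follows essentially the same approach as the paper: a change of variables from $\phi_i$ to $\eta_i$ via the Law of Cosines, the standard density transformation rule, and the triangle inequality to bound $a_i - \eta \leq a_i + r$. Your version is arguably slightly cleaner in that you explicitly identify $a_i - \eta = \|b - z_i\|$ before invoking the triangle inequality, whereas the paper simply asserts the bound.
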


\begin{proof}
    Let the conditional density of $\eta_i$ be
    $f_{\eta_i|R=r,A_i=a_i}$. Since $\phi_i$ is restricted to $[0, \pi]$
    by assumption, there exists a bijection between $\eta_i$ and $\phi_i$. To
    be precise, we have
    \[
        \phi_i(\eta_i) = \arccos\left(
            \frac{a_i^2 + r^2 - (a_i - \eta_i)^2}{2a_ir} 
        \right).
    \]
    By standard transformation rules of probability densities, it holds that
    \[
        f_{\eta_i|R=r,A=a_i}(\eta) = \left|\der[\phi_i(\eta)]{\eta}\right|
            f_{\phi_i|R=r,A_i=a_i}(\phi_i(\eta)).
    \]
    The derivative is easily evaluated:
    \[
        \der[\phi_i(\eta)]{\eta} = \frac{-1}{\sqrt{1 - \left(\frac{a_i^2 + r^2 - (a_i-\eta)^2}{2a_i r}\right)}} \cdot
            \frac{a_i - \eta}{a_i r}
            = \frac{-1}{\sin\phi(\eta)} \cdot \frac{a_i-\eta}{a_i r}.
    \]
    
    Finally, we have
    $a_i - \eta \leq a_i + r$, which follows from the triangle
    inequality.
    This concludes the proof.
\end{proof}

With \Cref{cor:angle_distr_bound}, we have an upper bound for
$f_{\phi_i|R=r,A_i = a_i}$.
Unfortunately, simply inserting this upper bound
is not enough for us to bound $f_{\eta_i|A_i=a_i,R=r}$, since the density 
as obtained from \Cref{lemma:eta_density} diverges for $\phi = 0$
and $\phi = \pi$. There is however a way to cure this divergence.

We now consider a full 2-change (cf.\ \Cref{fig:2change}).
To analyze the improvement $\Delta$ caused by this 2-change,
we construct a random experiment, conditioned on the outcomes
$A_1 = a_1$, $A_2 = a_2$, and $R = r$. We write this random experiment
in Algorithm \ref{alg:randomexpt}, since we will need to execute different
experiments depending on the ordering of the values of
$a_1$, $a_2$ and $r$. The parameters $b_1$ and $b_2$ of this algorithm
will take values in $\{a_1, a_2, r\}$, depending on this ordering.

\begin{algorithm}
\caption{The algorithm we use to model a random 2-change with fixed $A_1 = a_1$, $A_2 = a_2$, and $R = r$.\label{alg:randomexpt}}
\begin{algorithmic}[1]
    \Function{RandomExpt}{$b_1$, $b_2$}
        \State Draw $\phi_1 \sim f_{\phi|R=r,A_1 = a_1}$
        \State Draw $\phi_2 \sim f_{\phi|R=r,A_2 = a_2}$
        \If{$\sqrt{b_1}\sin \phi_1 > \sqrt{b_2}\sin \phi_2$}
            \State \Return $(1, \phi_1)$
        \Else
            \State \Return $(2, \phi_2)$
        \EndIf
    \EndFunction
\end{algorithmic}
\end{algorithm}

The function \texttt{RandomExpt} outlined in Algorithm \ref{alg:randomexpt} branches
on the outcome of the variable
$Z_i = \sqrt{b_i}\sin \phi_i$, $i \in [2]$, where $b_i$ is some distance;
we will choose $b_i$ among $\{r, a_i\}$ in subsequent lemmas.

Note that \verb|RandomExpt| returns a tuple $(i, \phi)$, where
$i \in [2]$. We call the angle returned by \texttt{RandomExpt}
the
\emph{good angle}. Moreover, we label the event
$i = 1$ as $E_1$, and $i = 2$ by $E_2$.
The crux of the analysis is now to analyze $\eta_1$ if $E_1$ occurs,
and $\eta_2$ if $E_2$ occurs, as under $E_i$ the density of $\eta_i$ is
bounded from above.

\begin{lemma}\label{lemma:goodangle_density}
    Let $(i, \phi) = \texttt{RandomExpt}(b_1, b_2)$ for some
    $b_1, b_2 > 0$.
    Let $j = 3 - i$.
    The density of $\phi$, conditioned on $R = r$, $A_1 = a_1$, $A_2 = a_2$,
    is then bounded from above by
    \[ 
        \frac{2M_{\phi_1}M_{\phi_2}}{\prob(E_i)} \cdot
            \arcsin\left(\min\left\{1, \sqrt{\frac{b_i}{b_j}}\sin\phi\right\}\right),
    \]
    where $M_{\phi_i} = \max_{0 \leq \phi \leq \pi} f_{\phi_i|R=r,A_i=a_i}(\phi)$.
\end{lemma}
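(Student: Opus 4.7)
The plan is to apply a Bayes-style decomposition of the conditional density of the good angle. By the symmetry between the two branches of \texttt{RandomExpt}, it suffices to prove the bound for $i=1$; the case $i=2$ follows by swapping the roles of $(\phi_1, b_1)$ and $(\phi_2, b_2)$. Conditional on $R=r$, $A_1=a_1$, $A_2=a_2$, the angles $\phi_1, \phi_2$ are independent. When $E_1$ occurs, the returned angle is $\phi_1$, so
\[
    f_{\phi \given E_1}(\phi) \;=\; \frac{f_{\phi_1 \given R=r, A_1=a_1}(\phi)\cdot \prob(E_1 \given \phi_1 = \phi)}{\prob(E_1)}.
\]
The factor $f_{\phi_1 \given R=r, A_1=a_1}(\phi)$ is bounded by $M_{\phi_1}$ by definition, so the whole task reduces to bounding the conditional probability $\prob(E_1 \given \phi_1 = \phi)$.

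Next, I would analyse $\prob(E_1 \given \phi_1 = \phi) = \prob\bigl(\sqrt{b_2}\sin\phi_2 < \sqrt{b_1}\sin\phi\bigr)$. Since $\phi_2 \in [0,\pi]$, $\sin\phi_2$ is non-negative, and the defining inequality is $\sin\phi_2 < \sqrt{b_1/b_2}\,\sin\phi$. I would split into two cases depending on whether $\sqrt{b_1/b_2}\sin\phi \geq 1$. If so, the event $\{\sin\phi_2 < \sqrt{b_1/b_2}\sin\phi\}$ contains almost every $\phi_2$, and $\prob(E_1\given\phi_1=\phi)\leq 1$; since $f_{\phi_2|\cdots}$ is a density on $[0,\pi]$ we have $M_{\phi_2}\ge 1/\pi$, so the trivial upper bound $1$ is majorised by $2M_{\phi_2}\arcsin(1)=\pi M_{\phi_2}$, which matches the bound with $\min\{1,\sqrt{b_1/b_2}\sin\phi\}=1$.

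If instead $\sqrt{b_1/b_2}\sin\phi < 1$, then the set $\{\phi_2\in[0,\pi] : \sin\phi_2 < \sqrt{b_1/b_2}\sin\phi\}$ is precisely
\[
    [0,\arcsin(\sqrt{b_1/b_2}\sin\phi)] \cup [\pi - \arcsin(\sqrt{b_1/b_2}\sin\phi),\pi],
\]
a union of two intervals of total Lebesgue measure $2\arcsin(\sqrt{b_1/b_2}\sin\phi)$. Integrating $f_{\phi_2|R=r,A_2=a_2}$ over this set and bounding the integrand by $M_{\phi_2}$ gives
\[
    \prob(E_1 \given \phi_1 = \phi) \;\leq\; 2 M_{\phi_2}\arcsin\!\bigl(\sqrt{b_1/b_2}\sin\phi\bigr).
\]
Combining both cases under a single $\min\{1,\sqrt{b_1/b_2}\sin\phi\}$ and substituting back yields the claimed inequality. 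The only mild subtlety, and the step I would be most careful with, is the case $\sqrt{b_1/b_2}\sin\phi \geq 1$, where one must verify that the seemingly weaker trivial bound $\prob(E_1\given\phi_1=\phi)\le 1$ is still absorbed into the stated expression; this is precisely why the $\arcsin(\min\{\cdot,1\})$ formulation is used, and it is handled by the normalisation $M_{\phi_2}\geq 1/\pi$.
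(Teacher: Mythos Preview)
Your proof is correct, and in fact cleaner than the paper's. Both arguments ultimately rest on the same two observations: (i) conditional on $E_1$ the good angle is $\phi_1$, and (ii) the event $E_1$ given $\phi_1=\phi$ forces $\phi_2$ into a union of two arcs of total length $2\arcsin\bigl(\min\{1,\sqrt{b_1/b_2}\sin\phi\}\bigr)$, on which one bounds $f_{\phi_2}$ by $M_{\phi_2}$. You reach this directly via the Bayes identity
\[
    f_{\phi\mid E_1}(\phi)=\frac{f_{\phi_1}(\phi)\,\prob(E_1\mid \phi_1=\phi)}{\prob(E_1)},
\]
together with the elegant observation that $M_{\phi_2}\ge 1/\pi$ handles the case $\sqrt{b_1/b_2}\sin\phi\ge 1$. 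The paper instead introduces auxiliary variables $X_i=\sqrt{b_i}\sin\phi_i$ and $Z=X_1\mid X_1>X_2$, computes $f_Z$ by differentiating its distribution function, bounds the resulting $f_{X_i}$ via a change of variables (picking up the Jacobian $1/\sqrt{b_i-x^2}$), and finally transforms $Z$ back to $\phi$ by inverting the same change of variables. Your route avoids both transformations and the associated Jacobians; what the paper's detour buys is only that the $\arcsin$ arises from an explicit integral $\int_0^x(b-y^2)^{-1/2}\dd y$, whereas you obtain it as the Lebesgue measure of a sublevel set of $\sin$ on $[0,\pi]$. The two are of course the same computation, but your packaging is shorter.
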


\begin{proof}
    We omit the conditioning on $A_1 = a_1$,
    $A_2 = a_2$ and $R = r$ in the following, for the sake of clarity.
    We prove only the case $i = 1$, thus conditioning on $E_1$, as the proof for $i = 2$
    proceeds essentially identically.
    
    Let $X_i = \sqrt{b_i}\sin\phi_i$, $i \in [2]$. The event $E_1$ is then
    equivalent to $X_1 > X_2$. Let $Z$ in turn denote the random variable
    given by $X_1$ conditioned on $E_1$. The cumulative distribution
    function of $Z$ is equal to
    \[
        F_Z(x) = \prob(X_1 \leq x \given X_1 > X_2)
            = \frac{\prob(X_1 \leq x \wedge X_1 > X_2)}{\prob(E_1)}.
    \]
    By the independence of $X_1$ and $X_2$, this is equal to
    \[
        F_Z(x) = \frac{1}{\prob(E_1)} \cdot \int_0^x f_{X_1}(y) \int_0^y f_{X_2}(z)\dd z \dd y.
    \]
    Computing the density of $Z$ is then simply a matter of differentiation. Since
    $\prob(E_1)$ does not depend on $x$, we obtain
    \[
        f_Z(x) = \frac{1}{\prob(E_1)} \cdot f_{X_1}(x)\int_0^x f_{X_2}(z)\dd z. 
    \]
    
    We next require the density of $X_i = \sqrt{b_i}\sin\phi_i$. Observe that
    \begin{align}\label{eq:phitox}
        \prob(X_i \leq x) = 
            \prob\left(\phi_i \leq  \arcsin(x/\sqrt{b_i})\right)
            + \prob\left(\phi_i \geq  \pi - \arcsin(x/\sqrt{b_i})\right).
    \end{align}
    Differentiating this expression to $x$, we find for
    $x < \sqrt{b_i}$
    \begin{align*}
        f_{X_i}(x) &= \der{x} \left(
            \prob\left(\phi_i \leq \arcsin(x/\sqrt{b_i})\right)
            + 1 - \prob\left(\phi_i \geq \pi - \arcsin(x/\sqrt{b_i})\right)
        \right) \\
            &= 
            \der{x}\left(\arcsin\left(\frac{x}{\sqrt{b_i}}\right)\right) \cdot
            \left[f_{\phi_i}\left(\arcsin\left(\frac{x}{\sqrt{b_i}}\right)\right) 
        + f_{\phi_i}\left(\pi - \arcsin\left(\frac{x}{\sqrt{b_i}}\right)\right)\right] \\
        &= \frac{1}{\sqrt{b_i - x^2}} \cdot \left[
        f_{\phi_i}\left(\arcsin\left(\frac{x}{\sqrt{b_i}}\right)\right) 
        + f_{\phi_i}\left(\pi - \arcsin\left(\frac{x}{\sqrt{b_i}}\right)\right)
        \right],
    \end{align*}
    and $0$ for $x \geq \sqrt{b_i}$. Letting
    $M_{\phi_i} = \max_{0 \leq \phi \leq \pi} f_{\phi_i|R=r,A_i=a_i}(\phi)$,
    which exists by \Cref{cor:angle_distr_bound}, we obtain
    \[
        f_{X_i}(x) \leq 2M_{\phi_i}\cdot \begin{cases}
            \frac{1}{\sqrt{b_i - x^2}}, & \text{if } x < \sqrt{b_i}, \\
            0, & \text{otherwise.}
        \end{cases}
    \]
    
    Using this density, together with the identity
    $\int_0^x (\sqrt{b} - y^2)^{-1/2}\dd y = \arcsin(x/\sqrt{b})$
    for $x < \sqrt{b}$, we obtain
    \[
        f_{Z}(x) \leq \frac{2M_{\phi_1}M_{\phi_2}}{\prob(E_1)} \cdot \frac{\arcsin\left(\min\left\{1, \frac{x}{\sqrt{b_2}}\right\}\right)}
            {\sqrt{b_1 - x^2}}
    \]
   
    if $x < \sqrt{b_1}$, and $f_Z(x) = 0$ otherwise.
    It remains to convert $Z$ back to $\phi$, where $\phi$ is the good angle.
    Since we have conditioned on $E_1$, we know that
    $Z = \sqrt{b_1}\sin\phi$. Using similar considerations as used
    in \Cref{eq:phitox}, we have
    \begin{align*}
        f_Z(x) = 
            \frac{1}{\sqrt{b_1 - x^2}}f_{\phi}(\arcsin(x/\sqrt{b_1}))
            +\frac{1}{\sqrt{b_1 - x^2}}f_{\phi}(\pi - \arcsin(x/\sqrt{b_1})).
    \end{align*}
    Since this expression holds for all $x \in (0, \sqrt{b_1})$, and since
    probability densities are non-negative, it follows
    that
    \[
        f_{\phi}(\phi) \leq \frac{2M_{\phi_1}M_{\phi_2}}{\prob(E_1)} \cdot
            \arcsin\left(\min\left\{1, \sqrt{\frac{b_1}{b_2}}\sin\phi\right\}\right),
    \]
    for all $\phi \in (0, \pi)$. 
\end{proof}

For the next part, we apply \Cref{lemma:goodangle_density} to
\Cref{lemma:eta_density} to bound
the density of $\eta_i$, given that $E_i$ occurs.

\begin{lemma}\label{lemma:goodeta_density}
    Let $i \in [2]$ and
    $j = 3 - i$.
    Let $f_{\eta_i|E_i}$ denote the density
    of $\eta_i$, conditioned on
    $E_i$ as well as the outcomes $R = r$, $A_1 = a_1$, and $A_2 = a_2$. Then
    \begin{align*}
            f_{\eta_i|E_i}(\eta) \leq \frac{1}{\prob(E_i)} \cdot
                \frac{2\pi M_{\phi_1} M_{\phi_2}}{\min\{a_1, r\}\min\{a_2, r\}},
    \end{align*}
    where $M_{\phi_i} = \max_{0 \leq \phi \leq \pi} f_{\phi_i|R=r,A_i=a_i}(\phi)$.
\end{lemma}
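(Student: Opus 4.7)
The plan is to combine Lemma~\ref{lemma:eta_density}, which converts density bounds on $\phi_i$ into density bounds on $\eta_i$, with Lemma~\ref{lemma:goodangle_density}, which provides the required bound on the $E_i$-conditional density of the good angle. Since $\eta_i$ is a deterministic function of $\phi_i$ once $A_i = a_i$ and $R = r$ are fixed, the change-of-variables argument in the proof of Lemma~\ref{lemma:eta_density} goes through verbatim with $f_{\phi_i|R=r,A_i=a_i}$ replaced by its further $E_i$-conditional version. This yields
\[
f_{\eta_i|E_i}(\eta) \leq \frac{a_i + r}{a_i r} \cdot \frac{f_{\phi_i|E_i,\,R=r,\,A_i=a_i}(\phi_i(\eta))}{|\sin\phi_i(\eta)|},
\]
into which I would substitute the bound from Lemma~\ref{lemma:goodangle_density}.

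The heart of the argument is then canceling the singular $1/|\sin\phi|$ factor against the $\arcsin\bigl(\min\{1, \sqrt{b_i/b_j}\,|\sin\phi|\}\bigr)$ factor supplied by Lemma~\ref{lemma:goodangle_density}. For this I would use the elementary inequality $\arcsin(x) \leq \pi x/2$ for $x \in [0, 1]$, which holds because $\arcsin(x)/x$ is monotonically increasing on $(0, 1]$ with supremum $\pi/2$, and which extends to $\arcsin(\min\{1, x\}) \leq \pi x/2$ for all $x \geq 0$ (when $x \geq 1$ the left-hand side equals $\pi/2 \leq \pi x/2$). Applying this with $x = \sqrt{b_i/b_j}\,|\sin\phi|$ gives
\[
\frac{\arcsin\!\bigl(\min\{1, \sqrt{b_i/b_j}\,|\sin\phi|\}\bigr)}{|\sin\phi|} \leq \frac{\pi}{2}\sqrt{\frac{b_i}{b_j}},
\]
a bound that is uniform in $\phi$. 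Putting the pieces together leaves an expression of the form $\frac{\pi(a_i + r)\sqrt{b_i/b_j}\,M_{\phi_1}M_{\phi_2}}{a_i r\,\prob(E_i)}$ that no longer depends on $\eta$.

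Finally, I would invoke the harmonic-mean style bound $(a_i + r)/(a_i r) = 1/a_i + 1/r \leq 2/\min\{a_i, r\}$ and then select $b_i, b_j \in \{a_1, a_2, r\}$ according to the ordering of $a_1, a_2, r$, so that the factor $\sqrt{b_i/b_j}/\min\{a_i, r\}$ collapses into the denominator of the claimed bound. This last case-analysis step, matching the choice of $b_i, b_j$ to the particular ordering so that the final denominator takes the symmetric form $\min\{a_1, r\}\min\{a_2, r\}$, is where I expect the main obstacle to lie: the preceding steps are essentially mechanical once the $\arcsin$-versus-$\sin\phi$ cancellation is in place, but the bookkeeping for each ordering of $a_1, a_2, r$ must be done carefully to ensure that the constant is the same across all cases.
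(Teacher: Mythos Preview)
Your proposal is correct and matches the paper's proof essentially step for step: combine Lemma~\ref{lemma:eta_density} (applied under the extra conditioning on $E_i$) with Lemma~\ref{lemma:goodangle_density}, use $\arcsin(x)\le \tfrac{\pi}{2}x$ to cancel the $1/|\sin\phi|$ singularity, and then do a four-case analysis on the relative ordering of $a_1,a_2,r$ with the choice $b_k=\min\{a_k,r\}$. The only cosmetic difference is that you first invoke the harmonic-mean bound $(a_i+r)/(a_ir)\le 2/\min\{a_i,r\}$ and then simplify $\sqrt{b_i/b_j}$, whereas the paper keeps $(a_i+r)/(a_ir)$ intact and simplifies case by case; the arithmetic and the resulting constants are the same.
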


\begin{proof}
    We prove only the case $i = 1$.
    From
    \Cref{lemma:eta_density}, we know that
    \[
        f_{\eta_i|E_i}(\eta) \leq \frac{a_i + r}{a_i r} \cdot \frac{f_{\phi_i|E_i,A_1=a_1,A_2=a_2}(\phi)}
            {\sin\phi}.
    \]
    
    Let $(i, \phi) = \verb|RandomExpt|(b_1, b_2)$, for
    some $b_1, b_2 > 0$. We will choose values for
    $b_1$ and $b_2$ depending on the ordering of $a_1, a_2$
    and $r$. Note that we may do this, since
    we know the choices of $a_1$, $a_2$ and $r$ before executing
    \verb|RandomExpt|.
    
    Since we condition on $E_1$, we know that $i = 1$, and hence
    that $\phi_1$ is the good angle.
    By
    \Cref{lemma:goodangle_density}, we can obtain a bound for 
    $f_{\phi|E_i,A_1=a_1,A_2=a_2,R=r}$. We thus find
    \[
        f_{\eta_1|E_1}(\eta) \leq \frac{2M_{\phi_1}M_{\phi_2}}{\prob(E_1)}
        \cdot \frac{a_1 + r}{a_1 r} \cdot \frac{\arcsin\left(\min\left\{1, \sqrt{\frac{b_1}{b_2}}\sin\phi\right\}\right)}
            {\sin\phi}.
    \]
    First, suppose $\sin\phi \geq \sqrt{b_2/b_1}$. Then the arcsine
    evaluates to $\pi/2$, and so the above is bounded from above by
    \[
        \frac{\pi}{2}\sqrt{\frac{b_1}{b_2}}.
    \]
    Second, suppose $\sin\phi < \sqrt{b_2/b_1}$. Since $\arcsin(x) \leq \pi x/2$ for
    $x \in (0, 1)$, this case yields the same bound, and we obtain
    \[
        f_{\eta_1|E_1}(\eta) \leq \frac{\pi M_{\phi_1}M_{\phi_2}}{\prob(E_1)} 
        \cdot \frac{a_1 + r}{a_1 r} \cdot \sqrt{\frac{b_1}{b_2}}
    \]
    
    We now examine the four relevant orderings of $a_1$, $a_2$ and $r$.
    ~\paragraph*{Case 1: $a_1, a_2 \leq r$.} We let $b_1 = a_1$ and $b_2 = a_2$. Then we
    have
    \[
        \frac{a_1 + r}{a_1 r} \cdot \sqrt{\frac{a_1}{a_2}}
            = \frac{a_1 + r}{r\sqrt{a_1 a_2}}
            \leq \frac{2r}{r\sqrt{a_1 a_2}}
            = \frac{2}{\sqrt{a_1 a_2}}.
    \]
    ~\paragraph*{Case 2: $a_1, a_2 \geq r$.} We let $b_1 = b_2 = r$, and obtain
    \[
        \frac{a_1 + r}{a_1 r} \leq \frac{2a_1}{a_1 r} = \frac{2}{r}.
    \]
    ~\paragraph*{Case 3: $a_1 \geq r \geq a_2$.} We let $b_1 = r$ and $b_2 = a_2$, which
    yields
    \[
        \frac{a_1 + r}{a_1 r} \cdot \sqrt{\frac{r}{a_2}}
            = \frac{a_1 + r}{\sqrt{a_2 r} a_1} \leq \frac{2}{\sqrt{a_2 r}}.
    \]
    ~\paragraph*{Case 4: $a_2 \geq r \geq a_1$.} We let $b_1 = a_1$ and $b_2 = r$, to find
    \[
        \frac{a_1 + r}{a_1 r} \sqrt{\frac{a_1}{r}} \leq \frac{2r\sqrt{a_1}}{a_1 r\sqrt{r}}
        = \frac{2}{\sqrt{a_1 r}}.
    \]
    This final case concludes the proof.
\end{proof}

The bound on the density of $\eta_i$ from \Cref{lemma:goodeta_density}
puts us in the position to prove a bound on the
probability that $\Delta \in (0,\epsilon]$.

\begin{lemma}\label{lemma:single_2change_conditioned}
    Let $\Delta$ denote the improvement of a 2-change.
    Then
     \begin{align*}
        \prob(\Delta \in (0, \epsilon] \given A_1 = a_1, A_2 = a_2, R = r) \leq 
            \frac{\pi M_{\phi_1} M_{\phi_2}\epsilon}{\min\{a_1, r\}\min\{a_2, r\}},
    \end{align*}
    where $M_{\phi_i} = \max_{0 \leq \phi \leq \pi} f_{\phi_i|R=r,A_i=a_i}(\phi)$.
\end{lemma}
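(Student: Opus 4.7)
The plan is to split the probability via the random experiment into contributions from $E_1$ and $E_2$, and then to bound each contribution using the density bound from \Cref{lemma:goodeta_density} combined with the unconditional independence of $\phi_1$ and $\phi_2$ given the distance conditioning. Since $E_1$ and $E_2$ partition the sample space up to a measure-zero boundary, I would first write
\[
    \prob(\Delta \in (0, \epsilon] \given \cdots) \leq \prob(\Delta \in (0, \epsilon], E_1 \given \cdots) + \prob(\Delta \in (0, \epsilon], E_2 \given \cdots),
\]
so that it suffices to bound each term.

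For the $E_i$ term (with $j = 3 - i$), I would use that $\Delta = \eta_i + \eta_j$ and that $\eta_j$ is a deterministic function of $\phi_j$. Conditioning on $\phi_j$ fixes $\eta_j$, leaving the randomness entirely in $\phi_i$; the event $\Delta \in (0, \epsilon]$ then becomes the constraint that $\eta_i$ lies in an $\epsilon$-interval, while $E_i$ restricts $\phi_i$ to $\{\sin\phi_i > \sqrt{b_j/b_i}\sin\phi_j\}$. On this restricted set, \Cref{lemma:goodeta_density} provides a uniform bound on the sub-density $f_{\eta_i, E_i}(\eta) = \prob(E_i) \cdot f_{\eta_i \given E_i}(\eta)$. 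The factor $\prob(E_i)$ on the right exactly cancels the probability of the conditioning event, which is precisely the mechanism that produces the clean $\epsilon$-scaling and removes the $1/\prob(E_i)$ from the density bound. Summing the resulting per-term estimates over $i \in \{1,2\}$ and absorbing constants gives the stated lemma.

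The hard part will be the careful handling of Fubini. Although $\phi_1$ and $\phi_2$ are independent, and hence so are $\eta_1$ and $\eta_2$ unconditionally, conditioning on $E_i$ breaks this independence, so the textbook inequality $\prob(X+Y \in I) \leq |I|\sup f_X$ for independent $X, Y$ cannot be invoked directly on $(\eta_i, \eta_j) \given E_i$. Instead, I would work with the unconditional joint density $f_{\phi_1}(\phi_1)f_{\phi_2}(\phi_2)$, intersect the integration domain with $E_i$, and only afterwards invoke the sub-density bound of \Cref{lemma:goodeta_density}. Equivalently, I would bound $f_{\eta_i, E_i}$ pointwise rather than attempt to control $f_{\eta_i \given \eta_j = u, E_i}$ when integrating over $\eta_j$; this avoids the fact that the latter density can be substantially larger than $f_{\eta_i \given E_i}$ when $\prob(E_i \given \eta_j = u)$ is small.
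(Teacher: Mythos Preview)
Your overall strategy matches the paper's exactly: decompose into $E_1$ and $E_2$, apply the density bound from \Cref{lemma:goodeta_density} on the ``good'' side, and recombine via the law of total probability. The paper's own proof is just as brief on the dependence issue you raise --- it conditions on $E_1$, lets an adversary fix $\eta_2=t$, and then invokes the marginal density $f_{\eta_1\mid E_1}$ from \Cref{lemma:goodeta_density} without further comment.

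That said, the resolution you sketch does not actually close the gap you correctly identify. After your Fubini step fixes $\phi_j$, the quantity you must bound is $f_{\eta_i}(\eta)\,\mathbb{1}\bigl[\phi_i(\eta)\in E_i(\phi_j)\bigr]$ for that \emph{particular} $\phi_j$. \Cref{lemma:goodeta_density} bounds a different object: the \emph{marginal} sub-density $f_{\eta_i,E_i}(\eta)=f_{\eta_i}(\eta)\cdot\prob_{\phi_j}\bigl(E_i\mid\phi_i(\eta)\bigr)$, obtained precisely by integrating $\phi_j$ out (the $\arcsin$ factor in \Cref{lemma:goodangle_density} is exactly this integration). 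The two are not comparable pointwise in $\phi_j$: for fixed $\phi_j$ the restricted density can only be bounded by $\frac{a_i+r}{a_i r}\cdot\frac{M_{\phi_i}\sqrt{b_i}}{\sqrt{b_j}\sin\phi_j}$, which blows up as $\sin\phi_j\to 0$, and plugging this back into the outer integral over $\phi_j$ produces $\int f_{\phi_j}(\phi_j)/\sin\phi_j\,\dd\phi_j$, divergent for $d=2$. So appealing to the marginal sub-density after having already fixed $\phi_j$ conflates two different densities; it does not sidestep the dependence problem, it only moves the unjustified swap to a different line. In this respect your write-up and the paper's are at the same level of rigor on this step --- neither argument, as written, explains why fixing $\eta_j$ and then using the $\phi_j$-averaged bound is legitimate.
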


\begin{proof}
    We condition first on $E_1$, and then let an adversary
    choose an outcome for $\eta_2$, say, $\eta_2 = t$. Then we have
    $\Delta \in (0, \epsilon]$ iff $\eta_1 \in (-t, -t+\epsilon]$,
    which is an interval of size $\epsilon$. 
    
    Since the probability that $\eta_1$ falls into an interval of size $\epsilon$
    is at most $\epsilon \cdot \max_{\eta} f_{\eta_1|E_1}(\eta)$, all we need to
    conclude the proof for $E_1$ is a bound on $f_{\eta_1|E_1}(\eta)$. This is provided
    by \Cref{lemma:goodeta_density}. 
    
    We then repeat the same argument for $E_2$. The result is obtained by
    applying the Law of Total Probability.
\end{proof}

With \Cref{lemma:single_2change_conditioned}, we could prove a bound
on the smoothed complexity of 2-opt already. However, the resulting bound
would be weaker than existing results. Instead of analyzing single
2-changes, we thus use the framework of linked pairs of 2-changes
in \Cref{sec:linked_pairs}.

For the analysis in \Cref{sec:linked_pairs}, it is convenient to have
some lemmas similar to \Cref{lemma:single_2change_conditioned},
with one or more of the distances $A_1$, $A_2$ and $R$ integrated out.
These are given in
\Cref{lemma:single_2change_one_distance1,lemma:single_2change_one_distance3,lemma:single_2change_one_distance_5}.
The proofs are straightforward computations.

\begin{lemma}\label{lemma:single_2change_one_distance1}
    For $i \in [2]$,
    \[
        \prob(\Delta \in (0, \epsilon] \given A_i = a_i, R = r)
            = O\left(
                \left(
                    \frac{\sqrt{d}D}{\sigma^2} + \frac{d}{\sqrt{a_i r}}
                        + \frac{d}{r} + \frac{d^{3/4}\sqrt{D}}{\sigma} \left(
                            \frac{1}{\sqrt{a_i}} + \frac{1}{\sqrt{r}}
                        \right)
                \right) \cdot \epsilon 
            \right).
    \]
\end{lemma}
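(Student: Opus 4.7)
Without loss of generality take $i=1$; the plan is to integrate the conditional bound of \Cref{lemma:single_2change_conditioned} against the distribution of $A_2$. First I would bound $M_{\phi_k}$ via \Cref{cor:angle_distr_bound}: because $\X \subseteq [-D,D]^d$ holds with probability at least $1-1/n!$ by \Cref{lemma:box} and the adversarial means lie in $[-1,1]^d$, we have $\bar r, \bar{a}_k = O(\sqrt{d}\,D)$, so
\[
M_{\phi_k} = O\!\left(\sqrt{d} + \frac{d^{1/4}\sqrt{D\min\{a_k,r\}}}{\sigma}\right), \quad k \in [2].
\]

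Writing $m_k = \min\{a_k, r\}$ and expanding the product $M_{\phi_1}M_{\phi_2}$, the bound from \Cref{lemma:single_2change_conditioned} becomes
\[
\prob(\Delta \in (0,\epsilon] \given A_1=a_1, A_2=a_2, R=r) \leq C\epsilon\!\left(\frac{d}{m_1 m_2} + \frac{d^{3/4}\sqrt{D}}{\sigma\sqrt{m_1}\,m_2} + \frac{d^{3/4}\sqrt{D}}{\sigma\,m_1\sqrt{m_2}} + \frac{\sqrt{d}\,D}{\sigma^2\sqrt{m_1 m_2}}\right).
\]
Each term is non-increasing in $a_2$, and by \Cref{lemma:chi_stochdom} the distribution of $A_2$ is stochastically dominated by $\chi_d$, so I would integrate against $\chi_d$ rather than against the true density. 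The two moment estimates I need,
\[
\int_0^\infty \frac{\chi_d(a_2)\,\dd a_2}{\min\{a_2, r\}} \leq O\!\left(\frac{1}{\sqrt{d}\,\sigma} + \frac{1}{r}\right), \qquad \int_0^\infty \frac{\chi_d(a_2)\,\dd a_2}{\sqrt{\min\{a_2, r\}}} \leq O\!\left(\frac{1}{d^{1/4}\sqrt{\sigma}} + \frac{1}{\sqrt{r}}\right),
\]
follow from \Cref{lemma:chi_expect} after splitting the integration at $a_2 = r$ and using $\min\{a_2, r\} \leq a_2$ on $(0, r]$.

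Applying these estimates to the four terms produces eight intermediate terms, each of the form $\alpha/m_1^{p}$ with $p \in \{1/2, 1\}$. Collapsing them to the five terms in the target requires the elementary bounds $1/m_1 \leq 1/a_1 + 1/r$ and $1/\sqrt{m_1} \leq 1/\sqrt{a_1} + 1/\sqrt{r}$, together with the assumptions $\sigma \leq 1$ and $D \geq 1$ (the latter from \Cref{lemma:box}) to absorb stray $\sigma^{-3/2}$ and $\sigma^{-5/2}$ factors into $\sigma^{-1}$ and $\sigma^{-2}$. The main obstacle is this last bookkeeping step: a case analysis on the relative sizes of $a_1$, $r$, $\sigma$, and $D$ is needed to verify that every intermediate term is dominated by at least one of the five target terms in every regime.
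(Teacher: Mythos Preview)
Your plan is essentially the paper's own argument, but your expansion is built on the \emph{stated} form of \Cref{lemma:single_2change_conditioned}, with $\min\{a_1,r\}\min\{a_2,r\}$ in the denominator. If you inspect the proof of \Cref{lemma:goodeta_density} (all four cases there produce $2/\sqrt{m_1 m_2}$), and if you look at the first displayed line of the paper's own proof of the present lemma, you will see that the bound actually carried forward has $\sqrt{\min\{a_1,r\}\min\{a_2,r\}}$ in the denominator; the statement as printed is a typo.

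This is exactly why your ``main obstacle'' cannot be cleared. With $m_1 m_2$ downstairs, the term $d/(m_1 m_2)$ integrates, via your own moment estimate, to a contribution $\sqrt{d}/(m_1\sigma)$. In the regime $a_1 \le r$ this equals $\sqrt{d}/(a_1\sigma)$, which scales like $a_1^{-1}$ as $a_1 \to 0$, whereas every one of the five target terms scales like $a_1^{-1/2}$ at worst; no case analysis on $\sigma$, $D$, $r$ will absorb it. The same failure hits $d/(m_1 r)$. If you instead start from $M_{\phi_1}M_{\phi_2}/\sqrt{m_1 m_2}$, the expansion becomes
\[
\frac{d}{\sqrt{m_1 m_2}} + \frac{d^{3/4}\sqrt{D}}{\sigma\sqrt{m_1}} + \frac{d^{3/4}\sqrt{D}}{\sigma\sqrt{m_2}} + \frac{\sqrt{d}\,D}{\sigma^2},
\]
and after integrating out $a_2$ with your two moment estimates, every term is absorbed directly by the target via $1/\sqrt{m_1} \le 1/\sqrt{a_1}+1/\sqrt{r}$ together with $\sigma\le 1$, $D\ge 1$. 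That is the paper's proof, reorganised: the paper splits into the cases $a_1\le r$ and $a_1\ge r$ and delays bounding $M_{\phi_1}$ to the end, whereas your uniform treatment via $m_1$ is slightly cleaner once the correct starting bound is used.
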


\begin{proof}
    We assume $i = 1$, since by symmetry the result for $i = 2$
    follows essentially identically.
    
    Consider the cases $a_1 \leq r$ and $a_1 \geq r$ separately.
    ~\paragraph*{Case 1: $a_1 \leq r$.} For this case, we have by \Cref{lemma:single_2change_conditioned}
    for some constants $c, c', c'' > 0$,
    \begin{align*}
        \prob(\Delta \in (0, \epsilon]\given A_1 = a_1, A_2=a_2,R=r)
            &\leq c\cdot \frac{M_{\phi_1}M_{\phi_2}\epsilon}{\sqrt{a_1}} \cdot \begin{cases}
                    \frac{1}{\sqrt{r}}, & \text{if } a_2 \geq r \\
                    \frac{1}{\sqrt{a_2}}, & \text{if } a_2 \leq r
            \end{cases}\\
            &\leq c' \cdot \frac{M_{\phi_1}\epsilon}{\sqrt{a_1}} \cdot \begin{cases}
                \sqrt{\frac{d}{r}} + \frac{d^{1/4}\sqrt{D}}{\sigma}, & \text{if } a_2 \geq r \\
                \sqrt{\frac{d}{a_2}} + \frac{d^{1/4}\sqrt{D}}{\sigma}, & \text{if } a_2 \leq r
            \end{cases} \\
            &\leq c'' \cdot \frac{M_{\phi_1}\epsilon}{\sqrt{a_1}}
                \left(\sqrt{\frac{d}{r}} + \sqrt{\frac{d}{a_2}} + \frac{d^{1/4}\sqrt{D}}{\sigma}\right),
    \end{align*}
    where we use \Cref{cor:angle_distr_bound} to bound $M_{\phi_2}$.
    
    We can now use \Cref{lemma:chi_expect,lemma:chi_stochdom} to integrate out
    $a_2$, leaving us with
    \[
        O\left(
            \frac{M_{\phi_1}\epsilon}{\sqrt{a_1}} \left(
                \sqrt{\frac{d}{r}} + \frac{d^{1/4}}{\sqrt{\sigma}} + \frac{d^{1/4}\sqrt{D}}{\sigma}
            \right)
        \right).
    \]
    Using that $D \geq 1$ and $\sigma \leq 1$, we see that the third term in the inner
    brackets is at least as large as the second term, and so we obtain
    \[
        \prob(\Delta \in (0, \epsilon]\given A_1 = a_1, A_2=a_2,R=r) = O\left(
            \frac{M_{\phi_1}}{\sqrt{a_1}}\left(\sqrt{\frac{d}{r}} + \frac{d^{1/4}\sqrt{D}}{\sigma}\right) \cdot \epsilon
        \right).
    \]
    
    Now we use \Cref{cor:angle_distr_bound} to conclude
    $M_{\phi_1} = O\left(d^{1/4}\sqrt{Da_1}/\sigma\right)$, yielding
    \begin{multline*}
        O\left(
            \left(\sqrt{\frac{d}{a_1}} + \frac{d^{1/4}\sqrt{D}}{\sigma}\right)
                \cdot
            \left(\sqrt{\frac{d}{r}} + \frac{d^{1/4}\sqrt{D}}{\sigma}\right) \cdot \epsilon
        \right) \\ 
        =
        O\left(
            \left( 
                \frac{d}{\sqrt{a_1 r}} + \frac{d^{3/4}\sqrt{D}}{\sigma}\left(
                \frac{1}{\sqrt{a_1}} + \frac{1}{\sqrt{r}} 
            \right)
             + \frac{\sqrt{d}D}{\sigma^2}
             \right) \cdot \epsilon
        \right).
    \end{multline*}

    ~\paragraph*{Case 2: $a_1 \geq r$.} 
    Here, \Cref{lemma:single_2change_conditioned} tells us
    \begin{align*}
        \prob(\Delta \in (0, \epsilon]\given A_1 = a_1, A_2=a_2,R_1=r))
            &\leq c\cdot  M_{\phi_1}M_{\phi_2} \epsilon \cdot 
                \begin{cases}
                    \frac{1}{r}, & \text{if } a_2 \geq r \\
                    \frac{1}{\sqrt{r a_2}}, & \text{if } a_2 \leq r
                \end{cases} \\
            &\leq c' \cdot M_{\phi_1}\epsilon \cdot \begin{cases}
                \frac{\sqrt{d}}{r} + \frac{d^{1/4}\sqrt{D}}{\sigma\sqrt{r}}, & \text{if } a_2 \geq r, \\
                \sqrt{\frac{d}{r a_2}} + \frac{d^{1/4}\sqrt{D}}{\sigma\sqrt{r}}, & \text{if } a_2 \leq r
            \end{cases} \\
            &\leq c'' \cdot M_{\phi_1} \epsilon  \cdot
                \left(
                    \frac{\sqrt{d}}{r} + \sqrt{\frac{d}{ra_2}} + \frac{d^{1/4}\sqrt{D}}{\sigma \sqrt{r}}
                \right),
    \end{align*}
    again for some $c, c', c'' > 0$ and
    using \Cref{cor:angle_distr_bound} to bound $M_{\phi_2}$.
    
    Integrating out $a_2$ using \Cref{lemma:chi_expect,lemma:chi_stochdom}, we have
    \[
        O\left(M_{\phi_1}\epsilon \cdot \left(
            \frac{\sqrt{d}}{r} + \frac{d^{1/4}}{\sqrt{\sigma r}} + \frac{d^{1/4}\sqrt{D}}{\sigma \sqrt{r}}
        \right)
        \right) \subseteq
        O\left(M_{\phi_1}\epsilon \cdot \left(
            \frac{\sqrt{d}}{r} + \frac{d^{1/4}\sqrt{D}}{\sigma \sqrt{r}}
        \right)\right).
    \]
    Using \Cref{cor:angle_distr_bound} to insert
    $M_{\phi_1} = O\left(\sqrt{d} + d^{1/4}\sqrt{Dr}/\sigma\right)$, we find
    \[
        O\left(\left(
                \frac{\sqrt{d}}{r} + \frac{d^{1/4}\sqrt{D}}{\sigma \sqrt{r}}
            \right) \cdot \left(
                \sqrt{d} + \frac{d^{1/4}\sqrt{Dr}}{\sigma}
            \right) \cdot \epsilon
            \right)
        = O\left(
        \left(
            \frac{d}{r} + \frac{d^{3/4}\sqrt{D}}{\sigma\sqrt{r}}  +
                \frac{\sqrt{d}D}{\sigma^2}
            \right) \cdot \epsilon
        \right).
    \]

    The result follows from these two cases.
\end{proof}

\begin{lemma}\label{lemma:single_2change_one_distance3}
    For $i \in [2]$,
    \[
        \prob(\Delta \in (0,\epsilon] \given A_i = a_i)
            = O\left(
            \left(
                \frac{\sqrt{d}D}{\sigma^2} + \frac{d^{3/4}\sqrt{D}}{\sigma \sqrt{a_i}}
            \right) \cdot \epsilon
        \right).
    \]
\end{lemma}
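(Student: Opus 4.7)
The plan is to obtain the stated bound by starting from \Cref{lemma:single_2change_one_distance1} and integrating out the conditioning on $R$, then showing that after integration every surviving term is dominated by the two terms in the lemma statement.

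Concretely, I would first apply the Law of Total Probability to write
\[
    \prob(\Delta \in (0,\epsilon] \given A_i = a_i)
    = \int_0^\infty \prob(\Delta \in (0,\epsilon] \given A_i = a_i, R = r)\, f_{R \mid A_i = a_i}(r)\dd r.
\]
The distance $R = \|a - b\|$ between the two perturbed points $a$ and $b$ is noncentral $d$-dimensional chi distributed, and is independent of $A_i$, since $A_i$ depends on $z_i$ which is perturbed independently of at least one of $a,b$ (and the conditioning on $A_i = a_i$ does not change the marginal of $R$). By \Cref{lemma:chi_stochdom}, the density $f_R$ is stochastically dominated by the central $d$-dimensional chi density, so I may replace $f_R$ by $\chi_d$ for the purpose of an upper bound.

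The next step is to substitute the bound from \Cref{lemma:single_2change_one_distance1} and integrate term-by-term against $\chi_d(r)$. The first term $\tfrac{\sqrt d\,D}{\sigma^2}$ and the term $\tfrac{d^{3/4}\sqrt D}{\sigma\sqrt{a_i}}$ are independent of $r$ and pass through the integral untouched. The remaining terms are of the form $r^{-c}$ for $c \in \{1/2, 1\}$, and \Cref{lemma:chi_expect} gives
\[
    \int_0^\infty \chi_d(r) r^{-1/2}\dd r = \Theta\!\left(\tfrac{1}{d^{1/4}\sqrt\sigma}\right),
    \qquad
    \int_0^\infty \chi_d(r) r^{-1}\dd r = \Theta\!\left(\tfrac{1}{\sqrt d\,\sigma}\right),
\]
which produces the contributions $\tfrac{d^{3/4}}{\sqrt{\sigma a_i}}$, $\tfrac{\sqrt d}{\sigma}$, and $\tfrac{d^{1/2}\sqrt D}{\sigma^{3/2}}$ respectively.

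The only remaining work, and the mildly fiddly part, is to verify that each of these three contributions is dominated by $\tfrac{\sqrt d\,D}{\sigma^2} + \tfrac{d^{3/4}\sqrt D}{\sigma\sqrt{a_i}}$ under the standing assumptions $\sigma \leq 1$ and $D \geq 1$. Each check reduces to the inequality $\sigma \leq D$: for $\tfrac{d^{3/4}}{\sqrt{\sigma a_i}}$ against $\tfrac{d^{3/4}\sqrt D}{\sigma\sqrt{a_i}}$ one needs $\sqrt\sigma \leq \sqrt D$; for $\tfrac{\sqrt d}{\sigma}$ against $\tfrac{\sqrt d\,D}{\sigma^2}$ one needs $\sigma \leq D$; and for $\tfrac{d^{1/2}\sqrt D}{\sigma^{3/2}}$ against $\tfrac{\sqrt d\,D}{\sigma^2}$ one needs $\sqrt\sigma \leq \sqrt D$. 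All hold. Combining everything then yields the claimed bound. I expect the main obstacle to be purely bookkeeping in the term-by-term comparison; there are no genuinely delicate estimates beyond the straightforward application of \Cref{lemma:chi_stochdom,lemma:chi_expect}.
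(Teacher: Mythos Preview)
Your proposal is essentially the paper's proof: start from \Cref{lemma:single_2change_one_distance1}, integrate out $r$ via \Cref{lemma:chi_stochdom,lemma:chi_expect}, and absorb each resulting term into $\tfrac{\sqrt d\,D}{\sigma^2} + \tfrac{d^{3/4}\sqrt D}{\sigma\sqrt{a_i}}$ using $\sigma \leq 1 \leq D$. One minor slip worth fixing: $R = \|a-b\|$ and $A_i$ are \emph{not} independent (they share the point $a$ or $b$), so your stated justification for replacing $f_{R\mid A_i=a_i}$ by $\chi_d$ is not quite right; the paper glosses over this too, and the conclusion still holds---e.g., by first conditioning on the full position of the shared endpoint, after which $R$ is conditionally noncentral chi (hence dominated by central chi) and conditionally independent of $A_i$.
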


\begin{proof}
    From \Cref{lemma:single_2change_one_distance1},
    we have
    \[
        \prob(\Delta \in (0, \epsilon] \given A_i = a_i, R = r)
            = O\left(
                \left(
                    \frac{\sqrt{d}D}{\sigma^2} + \frac{d}{\sqrt{a_i r}}
                        + \frac{d}{r} + \frac{d^{3/4}\sqrt{D}}{\sigma} \left(
                            \frac{1}{\sqrt{a_i}} + \frac{1}{\sqrt{r}}
                        \right)
                \right) \cdot \epsilon 
            \right).
    \]
    We can then apply \Cref{lemma:chi_expect,lemma:chi_stochdom} to integrate out
    $r$. This leaves 
    \[
        O\left(
        \left(
            \frac{\sqrt{d}D}{\sigma^2}  + \frac{d^{1/4}}{\sqrt{a_i\sigma}}
                + \frac{\sqrt{d}}{\sigma} + \frac{\sqrt{dD}}{\sigma^{3/2}} 
                + \frac{d^{3/4}\sqrt{D}}{\sigma \sqrt{a_i}}
            \right) \cdot \epsilon
        \right) \subseteq O\left(
            \left(
                \frac{\sqrt{d}D}{\sigma^2} + \frac{d^{3/4}\sqrt{D}}{\sigma \sqrt{a_i}}
            \right) \cdot \epsilon
        \right),
    \]
    as claimed.
\end{proof}

\begin{lemma}\label{lemma:single_2change_one_distance_5}
    \[
        \prob(\Delta \in (0, \epsilon] \given R = r) = O\left(
            \left( 
            \frac{\sqrt{d}D}{\sigma^2} + \frac{d}{r} + \frac{d^{3/4}\sqrt{D}}{\sigma \sqrt{r}}
            \right) \cdot \epsilon
        \right).
    \]
\end{lemma}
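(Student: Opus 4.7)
The plan is to derive this bound from \Cref{lemma:single_2change_one_distance1} by integrating out the remaining distance $A_i$, exactly mirroring the structure by which \Cref{lemma:single_2change_one_distance3} was obtained (except there we integrated out $R$ instead). By symmetry, fix $i=1$.

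Starting point: \Cref{lemma:single_2change_one_distance1} gives
\[
    \prob(\Delta \in (0, \epsilon] \given A_1 = a_1, R = r)
        = O\!\left(
            \left(
                \frac{\sqrt{d}\,D}{\sigma^2} + \frac{d}{\sqrt{a_1 r}}
                    + \frac{d}{r} + \frac{d^{3/4}\sqrt{D}}{\sigma} \left(
                        \frac{1}{\sqrt{a_1}} + \frac{1}{\sqrt{r}}
                    \right)
            \right) \epsilon
        \right).
\]
Each term is either independent of $a_1$ or a decreasing function of $a_1$ (namely $1/\sqrt{a_1 r}$ and $1/\sqrt{a_1}$). To obtain the conditional probability given only $R=r$, I would take the expectation of this bound with respect to $A_1$ (conditioned on $R=r$). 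Because the terms that depend on $a_1$ are monotonically decreasing, \Cref{lemma:chi_stochdom} lets me upper-bound these expectations by replacing the distribution of $A_1$ with the central $d$-dimensional chi distribution $\chi_d$.

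Next I would evaluate the two relevant moments using \Cref{lemma:chi_expect} with $c=1/2$: both $\expect[\chi_d^{-1/2}]$ and $\expect[(\chi_d)^{-1/2}/\sqrt{r}]$ give a factor of $\Theta(1/(d^{1/4}\sqrt{\sigma}))$. Substituting, the $d/\sqrt{a_1 r}$ term becomes $\Theta(d^{3/4}/\sqrt{\sigma r})$, and the $d^{3/4}\sqrt{D}/(\sigma\sqrt{a_1})$ term becomes $\Theta(\sqrt{dD}/\sigma^{3/2})$. This yields the intermediate estimate
\[
    O\!\left(\left(\frac{\sqrt{d}\,D}{\sigma^2} + \frac{d^{3/4}}{\sqrt{\sigma r}} + \frac{d}{r} + \frac{\sqrt{dD}}{\sigma^{3/2}} + \frac{d^{3/4}\sqrt{D}}{\sigma\sqrt{r}}\right)\epsilon\right).
\]

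Finally, I would use the standing assumptions $\sigma \leq 1$ and $D \geq 1$, which imply $\sqrt{D/\sigma} \geq 1$, to absorb redundant terms: $d^{3/4}/\sqrt{\sigma r}$ is dominated by $d^{3/4}\sqrt{D}/(\sigma\sqrt{r})$, and $\sqrt{dD}/\sigma^{3/2}$ is dominated by $\sqrt{d}\,D/\sigma^2$. What remains is exactly the claimed bound. No step presents a real obstacle; the entire argument is bookkeeping, and the only subtlety is verifying monotonicity in $a_1$ so stochastic dominance is applied in the correct direction, which is immediate.
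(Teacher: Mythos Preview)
Your proposal is correct and matches the paper's approach: start from \Cref{lemma:single_2change_one_distance1}, integrate out $a_i$ via \Cref{lemma:chi_stochdom} and \Cref{lemma:chi_expect} (with $c=1/2$), and then absorb the resulting $d^{3/4}/\sqrt{\sigma r}$ and $\sqrt{dD}/\sigma^{3/2}$ terms using $D\geq 1$, $\sigma\leq 1$. The paper's one-line proof nominally cites \Cref{lemma:single_2change_one_distance3}, but that is evidently a typo for \Cref{lemma:single_2change_one_distance1}; your computation is exactly the intended one.
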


\begin{proof}
    The result follows from taking
    \Cref{lemma:single_2change_one_distance3} and integrating out $a_i$
    using \Cref{lemma:chi_expect,lemma:chi_stochdom}.
\end{proof}

\section{Linked Pairs of 2-Changes}\label{sec:linked_pairs}

To obtain bounds on the smoothed complexity of 2-opt, we consider so-called
linked pairs of 2-changes, introduced previously by Englert et al.\ \cite{englertWorstCaseProbabilistic2014}.
A pair of 2-changes is said to be linked if some edge
removed from the tour by one 2-change is added to
the tour by the other 2-change.

Such linked pairs have been considered in several previous works
\cite{englertWorstCaseProbabilistic2014, mantheySmoothedAnalysis2Opt2013}. In each case,
the distinction has been made between several types of linked pairs.
In our analysis, only two of these types are relevant, and so we
will describe only these types for the sake of brevity.

We consider 2-changes which share exactly one edge, and subdivide them into
pairs of type 0 and of type 1.
A generic 2-change removes the edges $\{z_1, z_2\}$ and
$\{z_3, z_6\}$ while adding $\{z_1, z_6\}$
and $\{z_2, z_3\}$. The other 2-change
removes $\{z_3, z_4\}$ and $\{z_5, z_6\}$
while adding $\{z_3, z_6\}$ and $\{z_4, z_5\}$.
Note that $\{z_3, z_6\}$ occurs in both 2-changes.
\begin{itemize}
    \item If $|\{z_1, \ldots, z_6\}| = 6$, then we say the
        linked pair is of type 0.
    \item If $|\{z_1, \ldots, z_6\}| = 5$, then we say the linked
        pair is of type 1.
\end{itemize}

Type 1 can itself be subdivided into two types, 1a
and 1b. We will detail this distinction in
\Cref{sec:type1}. 

Before moving on to analyzing linked pairs, we state
a useful lemma that justifies limiting the
discussion to just linked pairs of types 0 and 1.

\begin{lemma}[\mbox{\cite[Lemma 9]{englertWorstCaseProbabilistic2014}}]\label{lemma:number_of_pairs}
    In every sequence of $t$ consecutive 2-changes the number
    of disjoint pairs of 2-changes of type 0 or type 1
    is at least $\Omega(t) - O(n^2)$.
\end{lemma}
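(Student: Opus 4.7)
I would follow the approach of Englert et al.\ and exhibit $\Omega(t)-O(n^2)$ disjoint linked pairs of type 0 or 1 in the given sequence of $t$ consecutive 2-changes. The plan has three stages: account for edge lifetimes in the sequence to produce many candidate linked pairs; extract a disjoint matching of them; and prune pairs that are neither of type 0 nor of type 1.

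For the first stage, the $t$ 2-changes perform $2t$ edge additions in total, while the final tour contains only $n$ edges. Hence at most $n$ of the $2t$ added edges can be ``terminal'' (never subsequently removed), so at least $2t-n$ of them must be removed later in the sequence. For each such temporary added edge $e$, I would pair the 2-change $i$ that inserts $e$ with the next 2-change $j>i$ that removes it; this yields at least $2t-n$ candidate linked pairs, each sharing at least the edge $e$.

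For the second stage, I would form the auxiliary graph $H$ on the $t$ 2-changes with the candidate pairs from the first stage as its edges. Each 2-change contributes at most two candidate pairs as a ``producer'' (one for each of its two added edges) and at most two as a ``consumer'' (one for each of its two removed edges), so $H$ has maximum degree at most $4$. A standard greedy matching on $H$ then yields a matching of size at least $|E(H)|/(2\cdot 4 - 1) \geq (2t-n)/7 = \Omega(t) - O(n)$ disjoint linked pairs.

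For the third stage, by construction each matched pair is linked, so I only need to check that it is of type 0 or 1, i.e., that the two 2-changes share exactly one edge and thus have a vertex union of size $5$ or $6$. Pairs that share two or more edges fall outside type 0 and type 1 and must be discarded. The main obstacle will be bounding the number of such multi-sharing pairs by $O(n^2)$: I would argue that each multi-sharing pair is essentially characterised by a constant-size shared configuration (at most $4$ vertices with a bounded number of possible edge layouts), and use the strict decrease of tour length along the sequence to limit how often each such configuration can recur, ruling out in particular the sharing-four-edges case as an exact inverse. Subtracting the $O(n^2)$ multi-sharing pairs from the matching of the second stage yields the desired $\Omega(t) - O(n^2)$ disjoint pairs of type 0 or 1.
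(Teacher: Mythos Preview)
The paper does not prove this lemma; it is quoted verbatim from Englert, R\"oglin and V\"ocking and used as a black box. So there is no in-paper proof to compare against, and I can only assess your proposal on its own merits.

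Your first two stages are sound and standard: the counting of temporary edges and the degree-$4$ auxiliary graph correctly yield $(2t-n)/7 = \Omega(t)-O(n)$ disjoint linked pairs.

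The third stage, however, has two problems. First, your characterisation ``type 0 or 1 $\Leftrightarrow$ the pair shares exactly one edge'' is not correct: type~1b pairs share \emph{two} edges (one added by the first 2-change and removed by the second, the other removed by the first and added by the second) yet have five distinct vertices and are counted as type~1. What you actually need to discard are exactly the pairs in which both 2-changes act on the same four vertices.

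Second, and more seriously, the $O(n^2)$ bound on these discarded pairs does not follow from the argument you sketch. ``At most four vertices with a bounded number of edge layouts'' gives $O(n^4)$ configurations, not $O(n^2)$, and the strict decrease of tour length does \emph{not} limit each four-vertex configuration to $O(1)$ occurrences. Concretely, a bad pair $(i,j)$ on a fixed $4$-set has $i$ going from the heaviest matching to the median-weight one and $j$ from the median to the lightest; other 2-changes involving vertices outside the set can then rebuild the heaviest matching, so the same bad pair structure can recur arbitrarily often without any single step being non-improving. Thus your sketch yields at best $\Omega(t)-O(n^4)$, which is weaker than the stated lemma. Obtaining $O(n^2)$ requires a different counting, tying the number of non-type-0/1 steps to the $O(n^2)$ edges of $K_n$ rather than to four-vertex subsets; this is where the actual work in Englert et al.\ lies, and your proposal does not yet supply it.
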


\subsection{Type 0}\label{sec:type0}

We begin with type 0, as this is by far the simplest linked pair.
For clarity, see \Cref{fig:2change_pairs} (left) for an illustration of a
type 0 linked pair. It should be noted that, while \Cref{fig:2change_pairs}
shows a specific configuration of vertices in two dimensions,
the results of this section hold generally; the analysis does not
depend on any point having a particular orientation
with respect to its neighbors. The same holds
for the results in \Cref{sec:type1}.

The improvement of a  type 0
linked pair is completely specified by a small number of random variables.
We require five distances between vertices,
$R_1 = \|z_1 - z_3\|$, $A_1 = \|z_3 - z_6\|$, $A_2 = \|z_1 - z_2\|$,
$R_2 = \|z_4 - z_6\|$ $A_3 = \|z_4 - z_5\|$. Additionally,
we need the following angles:
\begin{enumerate}
    \item $\phi_1$ between $z_2-z_1$ and $z_3-z_1$,
    \item $\phi_2$ between $z_1-z_3$ and $z_6-z_3$,
    \item $\phi_1'$ between $z_3-z_6$ and $z_4-z_6$,
    \item $\phi_3$ between $z_6-z_4$ and $z_5-z_4$.
\end{enumerate}

Note that, if we condition on $A_1 = a_1$, the events
$\Delta_1 \in (0, \epsilon]$ and $\Delta_2 \in (0, \epsilon]$ are independent.
We can then apply \Cref{lemma:single_2change_conditioned}, together with several applications of
\Cref{lemma:chi_expect}.

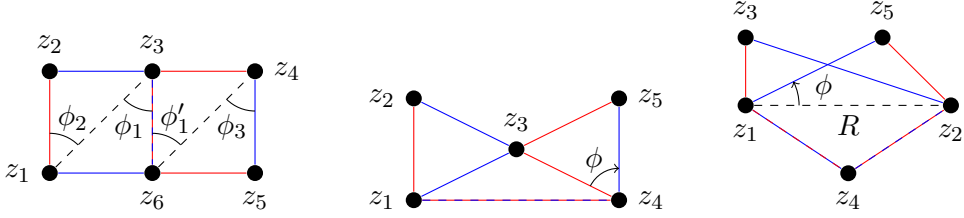
\begin{figure}
\centering
\begin{tikzpicture}[scale=0.9]
    
    \vertex[label=left:$z_1$](z1) at (0, 0) {};
    \vertex[label=above:$z_2$](z2) at (0, 1.5) {};
    \vertex[label=above:$z_3$](z3) at (1.5, 1.5) {};
    \vertex[label=right:$z_4$](z4) at (3, 1.5) {};
    \vertex[label=below:$z_5$](z5) at (3, 0) {};
    \vertex[label=below:$z_6$](z6) at (1.5, 0) {};
    
    \draw[red] (z1) -- (z2);
    \draw[blue] (z2) -- (z3); 
    \draw[red] (z3) -- (z4);
    \draw[blue] (z4) -- (z5); 
    \draw[red] (z5) -- (z6);
    \draw[blue] (z6) -- (z1);
    
    \draw[red] (z3) -- (z6);
    \draw[blue, dashed] (z3) -- (z6);
    
    \draw[dashed] (z1) -- (z3);
    
    \draw[dashed] (z6) -- (z4);

    \pic [draw, -, "$\phi_2$", angle eccentricity=1.5, angle radius=15] {angle = z3--z1--z2};
    \pic [draw, -, "$\phi_1$", angle eccentricity=1.5, angle radius=15] {angle = z1--z3--z6};
    
    \pic [draw, -, "$\phi_1'$", angle eccentricity=1.5, angle radius=15] {angle = z4--z6--z3};
    \pic [draw, -, "$\phi_3$", angle eccentricity=1.5, angle radius=15] {angle = z6--z4--z5};
    
\end{tikzpicture}
\hspace{1em}
\begin{tikzpicture}[scale=0.9]
    
    \vertex[label=left:$z_1$](z1) at (0, 0) {};
    \vertex[label=left:$z_2$](z2) at (0, 1.5) {};
    \vertex[label=above:$z_3$](z3) at (1.5, 0.75) {};
    \vertex[label=right:$z_4$](z4) at (3, 0) {};
    \vertex[label=right:$z_5$](z5) at (3, 1.5) {};
    
    \draw[red] (z1) -- (z2);
    \draw[blue] (z1) -- (z3);
    \draw[blue] (z2) -- (z3);
    
    \draw[blue] (z4) -- (z5);
    \draw[red] (z3) -- (z4);
    \draw[red] (z3) -- (z5);
    
    \draw[red] (z1) -- (z4);
    \draw[blue, dashed] (z1) -- (z4);
    
    \pic [draw, <-, "$\phi$", angle eccentricity=1.5, angle radius=12] {angle = z5--z4--z3};
    
\end{tikzpicture}
\hspace{1em}
\begin{tikzpicture}[scale=0.9]
    
    \vertex[label=below:$z_1$](z1) at (0, 0) {};
    
    \vertex[label=below:$z_2$](z2) at (3, 0) {};
    \vertex[label=above:$z_3$](z3) at (0, 1) {};
    
    \vertex[label=below:$z_4$](z4) at (1.5, -1) {};
    \vertex[label=above:$z_5$](z5) at (2, 1) {};
    
    \draw[dashed] (z1) -- node[below] {$R$} ++ (z2);
    
    \draw[red] (z1) -- (z3);
    \draw[blue] (z3) -- (z2);
    
    \draw[blue] (z1) -- (z4); 
    \draw[red, dashed] (z1) -- (z4); 
    
    \draw[red] (z4) -- (z2);
    \draw[blue, dashed] (z4) -- (z2);
    
    \draw[blue] (z1) -- (z5);
    \draw[red] (z2) -- (z5);
    
    \pic [draw, ->, "$\phi$", angle eccentricity=1.5, angle radius=20] {angle = z2--z1--z5};
    
\end{tikzpicture}
\caption{Labels of points involved in the three types of pairs of linked 2-changes. Left: type 0. Center: type 1a. Right: type 1b.\label{fig:2change_pairs}}
\end{figure}

\begin{lemma}\label{lemma:type0_2change}
    Let $\Delta^{\mathrm{link}}_\mathrm{min}$ denote the minimum improvement
    of any type 0 pair of linked 2-changes, and assume that $\X \subseteq [-D, D]^d$. Then
    \[
        \prob(\Delta^\mathrm{link}_{\mathrm{min}} \in (0, \epsilon])
            = O\left(
                \frac{d D^2 n^6 \epsilon^2}{\sigma^4}
            \right).
    \]
\end{lemma}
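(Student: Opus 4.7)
The plan is to combine a union bound over linked pairs with the conditional independence induced by the shared edge, reducing the question to two applications of the single-2-change bounds followed by an integration against the chi distribution.

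First, I would set up a union bound. There are $O(n^6)$ ordered choices of $(z_1,\dots,z_6)$ that determine a type 0 pair of linked 2-changes, so it suffices to bound
\[
    \prob\bigl(\Delta_1 \in (0,\epsilon] \wedge \Delta_2 \in (0,\epsilon]\bigr)
\]
for a fixed such 6-tuple, where $\Delta_1,\Delta_2$ are the improvements of the two 2-changes, and then multiply by $n^6$.

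Second, I would exploit the shared edge. Conditioning on $A_1 = \|z_3 - z_6\| = a_1$ (and, implicitly, on the positions of $z_3,z_6$, which may be absorbed into the adversary for each sub-2-change), the improvements $\Delta_1$ (a function of the perturbations of $z_1,z_2$ given $z_3,z_6$) and $\Delta_2$ (a function of the perturbations of $z_4,z_5$ given $z_3,z_6$) are independent, because the Gaussian perturbations of $\{z_1,z_2\}$ are independent from those of $\{z_4,z_5\}$. For each of the two 2-changes, the shared edge plays the role of a relevant conditioning distance, so \Cref{lemma:single_2change_one_distance3} applied to each factor gives
\[
    \prob\bigl(\Delta_j \in (0,\epsilon] \sgiven A_1 = a_1\bigr)
        = O\!\left(\!\left(\frac{\sqrt{d}\,D}{\sigma^2} + \frac{d^{3/4}\sqrt{D}}{\sigma\sqrt{a_1}}\right)\epsilon\right)
\]
for $j \in \{1,2\}$, and by independence the joint probability is at most the square of this expression times a constant.

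Third, I would integrate out $a_1$. The length $A_1$ is distributed as a noncentral $d$-dimensional chi with standard deviation $\sigma$, which by \Cref{lemma:chi_stochdom} stochastically dominates the central chi $\chi_d$; hence \Cref{lemma:chi_expect} yields $\expect[A_1^{-1}] = O(1/(\sqrt{d}\sigma))$ and $\expect[A_1^{-1/2}] = O(1/(d^{1/4}\sqrt{\sigma}))$ (applicable because $d \geq 2$ makes the required exponents admissible). Expanding
\[
    \left(\frac{\sqrt{d}\,D}{\sigma^2} + \frac{d^{3/4}\sqrt{D}}{\sigma\sqrt{a_1}}\right)^{\!2}
    = \frac{dD^2}{\sigma^4} + \frac{2\,dD^{3/2}}{\sigma^3\sqrt{a_1}} + \frac{d^{3/2}D}{\sigma^2 a_1}
\]
and integrating termwise, the standing assumptions $D \geq 1$ and $\sigma \leq 1$ make $dD^2/\sigma^4$ dominate both cross terms (since $D/\sigma \geq 1$), leaving a per-pair bound of $O(dD^2\epsilon^2/\sigma^4)$. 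Multiplying by the $O(n^6)$ union bound yields the claim.

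The only subtle point is step two: verifying that \Cref{lemma:single_2change_one_distance3}, proved for single 2-changes whose four vertices are all Gaussian, continues to give the same bound when two of the vertices (the endpoints of the shared edge) are effectively treated as fixed inside the conditioning. This is fine because the underlying angle bounds from \Cref{cor:angle_distr_bound} are uniform over the positions of the deterministic reference point and over adversarial means of the Gaussian points, so the whole chain of estimates in \Cref{sec:single_2change} goes through. Everything else is a routine computation.
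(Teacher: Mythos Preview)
Your proposal is correct and follows essentially the same approach as the paper: condition on the shared edge length $A_1=a_1$ to decouple $\Delta_1$ and $\Delta_2$, apply \Cref{lemma:single_2change_one_distance3} to each factor, square, and integrate out $a_1$ via \Cref{lemma:chi_stochdom,lemma:chi_expect} before the $O(n^6)$ union bound. One cosmetic slip: the cross term in your expansion should carry $d^{5/4}$ rather than $d$, but after integrating it still collapses to $O(dD^{3/2}/\sigma^{7/2})$ and is dominated by $dD^2/\sigma^4$, so the conclusion is unaffected.
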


\begin{proof}
    The result follows from the independence of $\Delta_1$ and
    $\Delta_2$ when conditioning on $A_1 = a_1$. Observe
    that $\prob(\Delta^\mathrm{link} \in (0, \epsilon])
    \leq \prob(\Delta_1 \in (0, \epsilon] \wedge \Delta_2 \in (0,\epsilon])$.
    Thus, using \Cref{lemma:single_2change_one_distance3},
    \begin{align*}
        \prob(\Delta^\mathrm{link} \in (0,\epsilon] \given A_1 = a_1)
             = O\left(
                \left(
                    \frac{\sqrt{d}D}{\sigma^2} + \frac{d^{3/4}\sqrt{D}}{\sigma \sqrt{a_1}}
                \right)^2
                \epsilon^2
            \right).
    \end{align*}
    Straightforward algebra yields
    \[
                \left(
                    \frac{\sqrt{d}D}{\sigma^2} + \frac{d^{3/4}\sqrt{D}}{\sigma \sqrt{a_1}}
                \right)^2
            = O\left(
                \frac{dD^2}{\sigma^4} + \frac{d^{3/2}D}{\sigma^2 a_1} + \frac{d^{5/4}D^{3/2}}{\sigma^3\sqrt{a_1}}
            \right).
    \]
    Using \Cref{lemma:chi_stochdom,lemma:chi_expect} to integrate out $a_1$,
    we obtain
    \[
        \frac{dD^2}{\sigma^4} + \frac{dD}{\sigma^3} + \frac{dD^{3/2}}{\sigma^{7/2}}
            = O\left(
                \frac{dD^2}{\sigma^4}
            \right).
    \]
    Taking a union bound over the $O(n^6)$ different type 0 pairs completes
    the proof.
\end{proof}

\subsection{Type 1}\label{sec:type1}

As mentioned previously, type 1 linked pairs can be subdivided into two
distinct subtypes. Subtype 1a shares exactly one edge between the two 2-changes,
while subtype 1b shares two edges.

\subsubsection{Type 1a}

We first consider type 1a. See \Cref{fig:2change_pairs} (center) for
a graphical representation of the type, as well as the labels of the
points and edges involved.

Let the 2-change replacing $\{z_1, z_2\}$ and $\{z_3, z_4\}$ by
$\{z_2, z_3\}$ and $\{z_1, z_4\}$ be called $S_1$, and the 2-change
replacing $\{z_1, z_4\}$ and $\{z_3, z_5\}$ by $\{z_1, z_3\}$ and
$\{z_4, z_5\}$ be called $S_2$.

We proceed by conditioning on $A_2 = \|z_3 - z_4\| = a_2$
and $A_3 = \|z_4 - z_5\| = a_3$. Using
\Cref{lemma:single_2change_conditioned}, we can then compute the
probability that $\Delta_1 \in (0, \epsilon]$. Moreover,
the location of $z_5$ is then still random.
Hence, the random variable
$\eta = \|z_3 - z_5\| - \|z_4 - z_5\|$ can be analyzed
independently from $\Delta_1$. 

For the density of $\eta$, we
have the following lemma from Englert et al \cite{englertWorstCaseProbabilistic2014}.

\begin{lemma}[\mbox{\cite[Lemma 15, modified]{englertWorstCaseProbabilistic2014}}]\label{lemma:feta_bound_englert}
    Let $i \in [2]$, and assume that $\X \subseteq [-D, D]^d$. For $a_2, a_3 \in (0, 2\sqrt{d}D]$ and $\eta \in (-a_2, \min\{a_2, 2a_3 - a_2\})$,
    \begin{align*}
        f_{\eta|A_2 = a_2, A_3=a_3}(\eta) \leq M_{\phi} \cdot \begin{cases}
            \sqrt{\frac{2}{a_2^2 - \eta^2}}, & \text{if } a_3 \geq a_2, \\
            \sqrt{\frac{2}{(a_2 + \eta)(2a_3 - a_2 - \eta}}, & \text{if } a_3 < a_2,
        \end{cases}
    \end{align*}
    where $M_\phi = \max_{0 \leq \phi \leq \pi} f_{\phi|A_2=a_2,A_3=a_3}(\phi)$.
    For $\eta \notin (-r, \min\{a_2, 2a_3 - a_2\})$, the density vanishes.
\end{lemma}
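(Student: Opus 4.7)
The plan is to use a change of variables to turn the angle density bound from \Cref{cor:angle_distr_bound} into the claimed density bound for $\eta$. Conditional on $A_2 = a_2$ and $A_3 = a_3$, only the angle $\phi$ at $z_4$ between $z_3 - z_4$ and $z_5 - z_4$ is still random, and the law of cosines gives $\|z_3-z_5\|^2 = a_2^2 + a_3^2 - 2 a_2 a_3 \cos\phi$. Hence $\eta$ is a strictly monotone function of $\phi$ on $[0,\pi]$, and reading off the extremes $\phi \in \{0, \pi\}$ recovers the stated admissible range after a case split on whether $a_3 \geq a_2$ or $a_3 < a_2$ (the triangle inequality $\|z_3-z_5\| \in [|a_2-a_3|, a_2+a_3]$ is exactly what produces the $\min\{a_2, 2a_3-a_2\}$ cutoff).

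Differentiating the law of cosines implicitly and combining with the pointwise bound $f_{\phi|A_2=a_2,A_3=a_3}(\phi) \leq M_\phi$, the standard density transformation rule gives
\[
    f_\eta(\eta) \leq M_\phi \cdot \frac{\|z_3-z_5\|}{a_2 a_3 \sin\phi}.
\]
To eliminate $\sin\phi$, I would apply the Heron-style identity
\[
    (2 a_2 a_3 \sin\phi)^2 = \bigl(\|z_3-z_5\|^2 - (a_2-a_3)^2\bigr)\bigl((a_2+a_3)^2 - \|z_3-z_5\|^2\bigr),
\]
which follows directly from $\sin^2\phi = 1 - \cos^2\phi$ and the law of cosines. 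The right-hand side factors as a product of four linear factors in $\|z_3-z_5\|$, and hence in $\eta$, so the bound on $f_\eta$ becomes an explicit algebraic expression with four linear factors under a square root.

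The rest is a short algebraic verification done separately in the two cases. Squaring and rearranging, the target inequality reduces in each case to a polynomial statement of the form $2\|z_3-z_5\|^2 \leq (\text{product of two of the four factors})$; choosing the appropriate pair in each case leaves either $a_2^2 - \eta^2$ (when $a_3 \geq a_2$) or $(a_2 + \eta)(2 a_3 - a_2 - \eta)$ (when $a_3 < a_2$) in the denominator, as claimed. Each of these inequalities collapses to a one-line comparison between $\eta^2$ and a combination of $a_2^2$ and $a_3^2$ that is immediate on the admissible range. The only mild obstacle is bookkeeping: one must fix the sign convention for $\eta$ (so that the Jacobian factor $\|z_3-z_5\|/(a_2 a_3 \sin\phi)$ is correct) and check that all four factors in the Heron factorization are nonnegative on the admissible range, both of which reduce to applications of the triangle inequality in $z_3 z_4 z_5$.
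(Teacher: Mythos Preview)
Your proposal is correct and is precisely the argument the paper has in mind: the paper does not give its own proof but cites Englert et al.\ and remarks that one need only replace the uniform-angle assumption by the pointwise bound $f_{\phi}\le M_\phi$, which is exactly your change-of-variables via the law of cosines followed by the Heron factorization of $(2a_2a_3\sin\phi)^2$. Your caveat about fixing the sign convention for $\eta$ is well placed, since the admissible range stated in the lemma matches $\eta = a_3 - \|z_3-z_5\|$ rather than the sign written in the surrounding text.
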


Note that the factor $M_{\phi}$ was not present in the original statement of
\Cref{lemma:feta_bound_englert}. This is because the original statement concerned a simplified random experiment,
wherein the points $z_5$ and $z_3$ are chosen uniformly from a hyperball
centered on $z_4$. As such, $\phi$ is assumed to be distributed uniformly\footnote{
This assumption is only valid for $d = 2$. 
To see this, observe that by conditioning on $A_i = a_i$, the point $z_i$
is distributed uniformly on the $(d-1)$-sphere with radius $a_i$.
For $d > 2$, the density
of $\phi$ is thus concentrated near $\phi = \pi/2$. An upper bound for this
density can be obtained by setting $s = 0$ in \Cref{thm:angle_distr_bound},
yielding $O(\sqrt{d})$. As Englert et al.\ assume $d$ to be constant,
this has no effect on their eventual result.}.
Since we do not analyze a simplified random experiment, we cannot
make this assumption. However, examining the original proof of
\Cref{lemma:feta_bound_englert}, this can be resolved by
simply inserting the upper bound of the density of $\phi$, conditioned
on $A_2 = a_2$ and $A_3 = a_3$.
This bound is provided to us by \Cref{cor:angle_distr_bound}.

\begin{lemma}\label{lemma:type1a_2change_2}
    Let $\Delta_2$ be the improvement yielded by $S_2$, and assume that $\X \subseteq [-D, D]^d$. Then
        \[
            \prob(\Delta_2 \in (0, \epsilon] \given A_2 = a_2) = O\left(
                \left(
                    \frac{d^{1/4}\sqrt{D}}{\sigma} + \sqrt{\frac{d}{a_2}} 
                \right) \cdot \sqrt{\epsilon}
            \right).
        \]
\end{lemma}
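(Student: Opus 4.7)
The plan is to decompose the improvement of $S_2$ as $\Delta_2 = \xi + \eta$, where $\xi := \|z_1 - z_4\| - \|z_1 - z_3\|$ depends only on $z_1, z_3, z_4$ and $\eta := \|z_3 - z_5\| - \|z_4 - z_5\|$ depends only on $z_3, z_4, z_5$. Since $z_1$ and $z_5$ are perturbed independently of each other and of $(z_3, z_4)$, conditioning on the positions of $z_3$ and $z_4$ (which in particular fixes $A_2 = a_2$) makes $\xi$ and $\eta$ independent. The event $\{\Delta_2 \in (0, \epsilon]\}$ is then just $\{\eta \in (-\xi, -\xi + \epsilon]\}$: $\eta$ falling into an interval of length $\epsilon$ determined by $\xi$. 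It thus suffices to bound $\prob(\eta \in (-\xi, -\xi + \epsilon] \given z_3, z_4)$ uniformly in $\xi$ and then take expectations, since the density of $\xi$ integrates to $1$.

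To bound this probability, I would condition further on $A_3 = \|z_4 - z_5\| = a_3$ and apply \Cref{lemma:feta_bound_englert}, which yields $f_{\eta \given A_2 = a_2, A_3 = a_3}(t) \leq M_\phi \sqrt{2/(a_2^2 - t^2)}$ when $a_3 \geq a_2$ and $f_{\eta \given A_2 = a_2, A_3 = a_3}(t) \leq M_\phi \sqrt{2/((a_2 + t)(2a_3 - a_2 - t))}$ when $a_3 < a_2$. A direct computation (e.g., substituting $u = a_2 + t$ in the second case) shows that each bound, integrated over any subinterval of length $\epsilon$ of the support of $\eta$, yields $O(\sqrt{\epsilon / \min\{a_2, a_3\}})$, the worst case being when the subinterval abuts a boundary of the support and the density diverges like $1/\sqrt u$. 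Combined with $M_\phi = O(\sqrt{d} + d^{1/4}\sqrt{D \min\{a_2, a_3\}}/\sigma)$ from \Cref{cor:angle_distr_bound} (using $\bar{a}_i = O(\sqrt{d}\,D)$ since all relevant means and positions lie in $[-D, D]^d$), this gives
\[
    \prob(\eta \in (-\xi, -\xi + \epsilon] \given z_3, z_4, A_3 = a_3) = O\bigl(\sqrt{d\epsilon / \min\{a_2, a_3\}} + d^{1/4}\sqrt{D\epsilon}/\sigma\bigr).
\]

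The final step is to integrate out $a_3$. The second term is $a_3$-independent and survives unchanged. For the first term I split on whether $a_3 \geq a_2$ (contributing $\sqrt{d\epsilon/a_2}$) or $a_3 < a_2$ (contributing $\sqrt{d\epsilon} \cdot \expect[1/\sqrt{a_3} \cdot \mathbf{1}_{a_3 < a_2}]$). Since the integrand is nonincreasing in $a_3$, \Cref{lemma:chi_stochdom} allows replacing the conditional density of $A_3$ by a central $\chi_d$ density, and \Cref{lemma:chi_expect} with $c = 1/2$ then yields $\expect[1/\sqrt{a_3}] = O(1/(d^{1/4}\sqrt{\sigma}))$. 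The resulting $O(d^{1/4}\sqrt{\epsilon/\sigma})$ is absorbed into $d^{1/4}\sqrt{D\epsilon}/\sigma$ because $D \geq 1 \geq \sigma$ implies $\sqrt{D}/\sigma \geq 1/\sqrt{\sigma}$, giving the claimed bound.

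The main obstacle will be verifying the singular-integral estimate: the bound $O(\sqrt{\epsilon/\min\{a_2, a_3\}})$ must hold uniformly over the position of the length-$\epsilon$ subinterval within the support of $\eta$, using that the density vanishes outside its support $[-a_2, a_2]$ (resp.\ $[-a_2, 2a_3 - a_2]$). A minor technical point is that \Cref{lemma:feta_bound_englert} requires $a_2, a_3 \in (0, 2\sqrt{d}D]$; this is guaranteed by the assumption $\X \subseteq [-D, D]^d$.
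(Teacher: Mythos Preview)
Your proposal is correct and takes essentially the same approach as the paper: fix $\xi = \|z_1 - z_4\| - \|z_1 - z_3\|$ adversarially, condition on $A_3 = a_3$, apply \Cref{lemma:feta_bound_englert} together with the $M_\phi$ bound from \Cref{cor:angle_distr_bound} to obtain $O\bigl(M_\phi\sqrt{\epsilon/\min\{a_2,a_3\}}\bigr)$ in each case, and then integrate out $a_3$ via \Cref{lemma:chi_stochdom,lemma:chi_expect}. The only difference is presentational: you make the conditional-independence structure (conditioning on $z_3,z_4$) explicit, whereas the paper phrases the same step as ``the adversary fixes $\|z_1-z_4\|-\|z_1-z_3\|$''.
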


\begin{proof}
    We obtain the density of $\eta$ from \Cref{lemma:feta_bound_englert}.
    As before, we need to subdivide
    into the cases $a_2 \leq a_3$ and $a_2 \geq a_3$.
    ~\paragraph*{Case 1: $a_3 \leq a_2$.} For this case, the conditional
    density of $\eta$ reads
    \[
        f_{\eta|A_2=a_2, A_3=a_3}(\eta) \leq M_{\phi} \cdot \begin{cases}
            \sqrt{\frac{2}{a_3(a_2 + \eta)}}, & \eta \leq a_3 - a_2, \\
            \sqrt{\frac{2}{a_3(2a_3 - a_2 - \eta}}, & \eta \geq a_3 - a_2.
        \end{cases}
    \]
    We assume that the random variable $\|z_1 - z_4\| - \|z_1 - z_3\|$ has
    been fixed by the adversary. This fixes an interval of size $\epsilon$ for
    $\eta$ to fall within, should $\Delta_2 \in (0,\epsilon]$ occur. Observe
    that $f_{\eta|A_2=a_2,A_3=a_3}$ integrated over any
    interval of size $\epsilon$ yields at most $O(M_\phi\sqrt{\epsilon/a_3})$.
    Since $a_3 \leq a_2$, we have $M_{\phi} = O(\sqrt{d} + d^{1/4}\sqrt{Da_3}/\sigma)$.
    Thus, for any interval $I$ of size $\epsilon$,
    \[
        \prob(\eta \in I \given A_2 = a_2, A_3 = a_3)
            = O\left(
                \left(\sqrt{\frac{d}{a_3}} + \frac{d^{1/4}\sqrt{D}}{\sigma}\right)
                    \cdot \sqrt{\epsilon}
            \right).
    \]
    ~\paragraph*{Case 2: $a_3 \geq a_2$.} For this case, we have
    \[
        f_{\eta|A_2 = a_2, A_3 = a_3}(\eta) = M_\phi \sqrt{\frac{2}{a_2}}
            \cdot \sqrt{\frac{1}{a_2 - |\eta|}}.
    \]
    Similarly as in Case 1, this function integrates to at most
    $O(M_\phi\sqrt{\epsilon/a_2})$. Here, we have $M_{\phi} = O(\sqrt{d}
    + d^{1/4}\sqrt{Da_2}/\sigma)$, so we obtain
    \[
        \prob(\eta \in I \given A_2 = a_2, A_3 = a_3)
            = O\left(
                \left(\sqrt{\frac{d}{a_2}} + \frac{d^{1/4}\sqrt{D}}{\sigma}\right)
                    \cdot \sqrt{\epsilon}
            \right).
    \]
    
    Combining the two cases above, we see that
    \[
        \prob(\Delta_2 \in (0,\epsilon]\given A_2 = a_2, A_3 = a_3)
            = O\left(
                \left(
                    \frac{d^{1/4}\sqrt{D}}{\sigma} + \sqrt{\frac{d}{a_2}}
                        + \sqrt{\frac{d}{a_3}}
                \right) \cdot \sqrt{\epsilon}
            \right).
    \]
    We can now integrate out $a_3$ using
    \Cref{lemma:chi_expect,lemma:chi_stochdom}. Then, using $D \geq 1$, $d \geq 2$ and
    $\sigma \leq 1$, we eventually arrive at the stated result.
\end{proof}

Using \Cref{lemma:chi_expect,lemma:type1a_2change_2},
we can easily prove the following statement about type 1a pairs of 2-changes.

\begin{lemma}\label{lemma:type1a_2change}  
    Let $\Delta^{\mathrm{link}}_\mathrm{min}$ denote the minimum improvement
    of any type 1a pair of 2-changes, and assume that $\X \subseteq [-D, D]^d$. Then
    \[
        \prob(\Delta^{\mathrm{link}}_\mathrm{min} \in (0,\epsilon])
            = O\left(
                \frac{n^5 d^{3/4}D^{3/2}}{\sigma^{3}} \epsilon^{3/2}
            \right).
    \]
\end{lemma}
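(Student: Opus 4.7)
The plan is to mirror the type~0 proof (\Cref{lemma:type0_2change}), exploiting that in a type~1a pair the two 2-changes $S_1$ and $S_2$ share the three vertices $z_1, z_3, z_4$ but involve \emph{disjoint} exclusive random points $z_2 \in S_1$ and $z_5 \in S_2$. Writing $\Delta^{\mathrm{link}} = \Delta_1 + \Delta_2$, we start from
\[
    \prob(\Delta^{\mathrm{link}} \in (0, \epsilon])
        \leq \prob\bigl(\Delta_1 \in (0, \epsilon] \wedge \Delta_2 \in (0, \epsilon]\bigr).
\]
Conditioning on the positions of $z_1, z_3, z_4$ (in particular fixing $A_2 = a_2 = \|z_3 - z_4\|$), the events $\{\Delta_1 \in (0, \epsilon]\}$ and $\{\Delta_2 \in (0, \epsilon]\}$ become independent, since conditionally $\Delta_1$ is a function only of the perturbation at $z_2$ and $\Delta_2$ only of the perturbation at $z_5$.

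Next, I would apply \Cref{lemma:single_2change_one_distance3} to $S_1$, using that $A_2$ is one of the edges $S_1$ removes, to obtain
\[
    \prob\bigl(\Delta_1 \in (0, \epsilon] \given A_2 = a_2\bigr)
        = O\!\left(\!\left(\frac{\sqrt{d}D}{\sigma^2} + \frac{d^{3/4}\sqrt{D}}{\sigma \sqrt{a_2}}\right)\!\epsilon\right),
\]
and invoke \Cref{lemma:type1a_2change_2} for the matching bound on $\Delta_2$. Multiplying these bounds and expanding yields four terms multiplying $\epsilon^{3/2}$: one term of order $d^{3/4}D^{3/2}/\sigma^3$ that is constant in $a_2$, two terms of order $dD/(\sigma^2\sqrt{a_2})$, and one of order $d^{5/4}\sqrt{D}/(\sigma a_2)$.

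Finally, I would integrate $a_2$ out against the density of $A_2$, which by \Cref{lemma:chi_stochdom} stochastically dominates a central $d$-dimensional chi distribution with variance $\sigma^2$; \Cref{lemma:chi_expect} gives $\expect[1/\sqrt{a_2}] = O(1/(d^{1/4}\sqrt{\sigma}))$ and $\expect[1/a_2] = O(1/(\sqrt{d}\,\sigma))$. Using $D \geq 1$ and $\sigma \leq 1$, a routine comparison shows that the $d^{3/4}D^{3/2}\epsilon^{3/2}/\sigma^3$ term dominates the others. A union bound over the $O(n^5)$ candidate type~1a pairs then delivers the claim. The only subtle step is the conditional independence: strictly one should condition on the positions of the three shared vertices rather than merely the scalar $A_2$. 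This causes no loss because \Cref{lemma:type1a_2change_2} is already proved uniformly over adversarial choices of the remaining quantity $\|z_1 - z_4\| - \|z_1 - z_3\|$, so its bound factors out of the inner conditional expectation, leaving exactly the marginal bound on $\Delta_1$ furnished by \Cref{lemma:single_2change_one_distance3}.
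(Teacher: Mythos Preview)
Your proposal is correct and follows essentially the same approach as the paper: condition on $A_2=a_2$, use \Cref{lemma:single_2change_one_distance3} for $\Delta_1$ and \Cref{lemma:type1a_2change_2} for $\Delta_2$, multiply, integrate out $a_2$ via \Cref{lemma:chi_expect,lemma:chi_stochdom}, and take a union bound over the $O(n^5)$ type~1a pairs. Your explicit remark on the conditional-independence subtlety is a welcome addition that the paper leaves implicit.
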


\begin{proof}
    As in the proof of \Cref{lemma:type0_2change}, we can simply
    use \Cref{lemma:single_2change_one_distance3,lemma:type1a_2change_2}
    to compute the probability that both $\Delta_1 \in (0, \epsilon]$
    and $\Delta_2 \in (0, \epsilon]$, which bounds the probability
    that $\Delta_1 + \Delta_2 \in (0,\epsilon]$:
    \begin{align*}
        \prob(\Delta_1, \Delta_2 \in (0, \epsilon] \given A_2 = a_2)
            &= O\left(
                \left(
                    \frac{d^{3/4}D^{3/2}}{\sigma^3} + \frac{dD}{\sigma^2\sqrt{a_2}}
                        + \frac{d^{5/4}\sqrt{D}}{\sigma a_2}
                \right)\cdot \epsilon^{3/2}
            \right).
    \end{align*}
    Using \Cref{lemma:chi_expect,lemma:chi_stochdom}, with $d \geq 2$, $D \geq 1$ and
    $\sigma \leq 1$ in conjunction with a union bound
    over the $O(n^5)$ pairs of type 1a yields the result.
\end{proof}

\subsubsection{Type 1b}\label{sec:type1b}

The final type of linked pair we consider is type 1b.
See \Cref{fig:2change_pairs} (right) for a graphical representation.

Let $S_1$ denote the 2-change replacing $\{z_1, z_3\}$ and $\{z_2, z_4\}$
with $\{z_2, z_3\}$ and $\{z_1, z_4\}$, and let $S_2$ denote the 2-change
replacing $\{z_2, z_5\}$ and $\{z_1, z_4\}$ with $\{z_1, z_5\}$ and $\{z_2, z_5\}$.
From \Cref{fig:2change_pairs}, it is evident that we can
treat $\Delta_1$ and $\eta = \|z_2 - z_5\| - \|z_1 - z_5\|$ as independent
variables, as long as we condition on $R = r$.

\begin{lemma}\label{lemma:type1b_2change}
    Let $\Delta^{\mathrm{link}}_\mathrm{min}$ denote the minimum improvement
    of any type 1b pair of 2-changes, and assume that $\X \subseteq [-D, D]^d$. Then
    \[
        \prob(\Delta^{\mathrm{link}}_\mathrm{min} \in (0,\epsilon])
            = O\left(
                \frac{n^5 d^{3/4}D^{3/2}}{\sigma^{3}} \epsilon^{3/2}
            \right).
    \]
\end{lemma}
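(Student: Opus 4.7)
The plan is to mirror the type 1a analysis (Lemma~\ref{lemma:type1a_2change}), with the shared-edge length $R=\|z_1-z_2\|=r$ taking the role previously played by $A_2$. Since $S_1$ and $S_2$ are both 2-opt improvements, $\Delta_1,\Delta_2>0$, so $\Delta^\mathrm{link}_\mathrm{min}\in(0,\epsilon]$ entails $\Delta_1,\Delta_2\in(0,\epsilon]$, and it suffices to bound this joint event. The key algebraic observation is that one may write $\Delta_2=\eta+\mu$, where $\eta=\|z_2-z_5\|-\|z_1-z_5\|$ depends only on $z_1,z_2,z_5$ and $\mu=\|z_1-z_4\|-\|z_2-z_4\|$ depends only on $z_1,z_2,z_4$. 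Conditioning on the positions of $z_1$ and $z_2$, $\eta$ becomes a function of $z_5$ alone, hence independent of both $\Delta_1$ and $\mu$ (which are functions of $z_3,z_4$). Treating $\mu$ as adversarial, the event $\Delta_2\in(0,\epsilon]$ then constrains $\eta$ to an interval of size $\epsilon$.

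To bound $\sup_I \prob(\eta\in I\mid R=r)$, the plan is to further condition on $\|z_1-z_5\|=a$, apply Lemma~\ref{lemma:feta_bound_englert} under the identification $(a_2,a_3)\leftrightarrow(r,a)$ together with \Cref{cor:angle_distr_bound} for the angle-density factor $M_\phi$, integrate the resulting density over any $\epsilon$-length interval, and then integrate out $a$ using \Cref{lemma:chi_stochdom,lemma:chi_expect}. This reproduces the computation in Lemma~\ref{lemma:type1a_2change_2} and should yield
\[
\sup_I\prob(\eta\in I \mid R=r)=O\!\left(\left(\tfrac{d^{1/4}\sqrt{D}}{\sigma}+\sqrt{\tfrac{d}{r}}\right)\sqrt{\epsilon}\right).
\]
For $\Delta_1$, Lemma~\ref{lemma:single_2change_one_distance_5} directly supplies $\prob(\Delta_1\in(0,\epsilon]\mid R=r)=O((\sqrt{d}D/\sigma^2+d/r+d^{3/4}\sqrt{D}/(\sigma\sqrt{r}))\epsilon)$.

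Since both bounds depend on $z_1,z_2$ only through $r$ (by rotational invariance), their product bounds $\prob(\Delta_1,\Delta_2\in(0,\epsilon]\mid R=r)$. Expanding the product term by term and integrating out $r$ with \Cref{lemma:chi_stochdom,lemma:chi_expect}---valid since the largest inverse power of $r$ appearing is $r^{-3/2}$, requiring $d\ge 2$---the dominant term (using $D\ge1$ and $\sigma\le1$) is $d^{3/4}D^{3/2}/\sigma^3\cdot\epsilon^{3/2}$. A union bound over the $O(n^5)$ type 1b pairs closes the argument.

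The main obstacle is making the independence/factorization step precise: one must verify that conditioning on $R=r$ (rather than on the full positions of $z_1,z_2$) is legitimate, which follows from the rotational invariance of the two constituent bounds. The remainder reduces to bookkeeping analogous to the type 1a proof, namely tracking the various $r$-dependent terms and confirming that each of them is dominated by $d^{3/4}D^{3/2}/\sigma^3$ after integrating $r$ against a chi density.
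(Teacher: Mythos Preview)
Your proposal is correct and follows essentially the same approach as the paper, which simply states that the proof proceeds along the exact same lines as \Cref{lemma:type1a_2change} with small modifications. You have correctly identified that the role of $A_2$ in the type~1a argument is played here by $R=\|z_1-z_2\|$, that \Cref{lemma:single_2change_one_distance_5} replaces \Cref{lemma:single_2change_one_distance3}, and that the $\eta$-analysis via \Cref{lemma:feta_bound_englert} carries over verbatim with the identification $(a_2,a_3)\leftrightarrow(r,\|z_1-z_5\|)$; your treatment of the independence issue and the integration over $r$ is also sound.
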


\begin{proof}
    The proof follows along the exact same lines as \Cref{lemma:type1a_2change}.
    small modifications.
\end{proof}

\Cref{lemma:type0_2change,lemma:type1a_2change,lemma:type1b_2change} enable us to prove
an upper bound to the smoothed complexity of 2-opt in the present
probabilistic model.

\begin{theorem}\label{thm:smoothed_complexity}
    The expected number of iterations performed by 2-opt for smoothed Euclidean instances
    of TSP in $d \geq 2$ dimensions is bounded from above by
    $
        O\left(
            d D^2 n^{4 + \frac{1}{3}}/\sigma^2
        \right)
    $.
\end{theorem}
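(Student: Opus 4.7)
The plan is to apply the black-box framework of Lemma~\ref{lemma:blackbox} to the minimum improvement $\Delta^{\mathrm{link}}_{\mathrm{min}}$ of any linked pair of 2-changes, combining the three probability bounds already obtained. First, I would invoke Lemma~\ref{lemma:box} to ensure $\X \subseteq [-D,D]^d$ with probability at least $1 - 1/n!$, where $D = c(1 + \sigma\sqrt{n\log n})$. Under this event each edge has length at most $2\sqrt{d}\,D$, so every tour has total length at most $L = O(n\sqrt{d}\,D)$, which supplies the ``longest tour'' hypothesis of Lemma~\ref{lemma:blackbox}. The right to restrict attention to pairs of type 0 and type 1 is in turn guaranteed by Lemma~\ref{lemma:number_of_pairs}, which gives $\Omega(t) - O(n^2)$ such disjoint pairs inside any sequence of $t$ 2-changes.

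Next, I would apply Lemma~\ref{lemma:blackbox} separately for each pair type, since the three bounds have different exponents of $\epsilon$, and take the worst resulting complexity. From Lemma~\ref{lemma:type0_2change}, type 0 contributes $P_0 = dD^2 n^6/\sigma^4$ with $\alpha = 2$, so
\[
    P_0^{1/2}\cdot L \;=\; \frac{\sqrt{d}\,D\, n^3}{\sigma^2}\cdot O\!\left(n\sqrt{d}\,D\right) \;=\; O\!\left(\frac{dD^2 n^4}{\sigma^2}\right).
\]
From Lemmas~\ref{lemma:type1a_2change} and~\ref{lemma:type1b_2change}, each type~1 subtype contributes $P_1 = n^5 d^{3/4} D^{3/2}/\sigma^3$ with $\alpha = 3/2$, so
\[
    P_1^{2/3}\cdot L \;=\; \frac{n^{10/3}\sqrt{d}\,D}{\sigma^2}\cdot O\!\left(n\sqrt{d}\,D\right) \;=\; O\!\left(\frac{dD^2 n^{13/3}}{\sigma^2}\right).
\]
Since $13/3 = 4 + \tfrac13 > 4$, the type~1 contribution dominates, which is exactly the bound stated in the theorem.

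Finally, I would verify the side condition $P^{1/\alpha} L = \Omega(n^2)$ demanded by Lemma~\ref{lemma:blackbox} for each of the three invocations. This is immediate for the dominant type~1 bound because $dD^2 n^{13/3}/\sigma^2 \geq n^{13/3} \gg n^2$, and an analogous check works for type 0; the additive $O(n^2)$ term from Lemma~\ref{lemma:number_of_pairs} is therefore absorbed. I do not foresee serious obstacles, since all the technical probabilistic work has been carried out in Sections~\ref{sec:single_2change} and~\ref{sec:linked_pairs}. The main point requiring care is the bookkeeping when combining three $P\epsilon^{\alpha}$ bounds with distinct exponents: one must apply Lemma~\ref{lemma:blackbox} once per type rather than summing the probabilities first, and then take the maximum of the resulting complexities to identify type~1 as the bottleneck that produces the $n^{4+1/3}$ factor.
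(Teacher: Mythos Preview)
Your proposal is correct and follows essentially the same route as the paper: invoke \Cref{lemma:box} to bound $L$, then combine \Cref{lemma:type0_2change,lemma:type1a_2change,lemma:type1b_2change} with \Cref{lemma:blackbox} separately for $\alpha=2$ and $\alpha=3/2$, identifying type~1 as the bottleneck that produces the $n^{4+1/3}$ factor. The only cosmetic difference is that the paper phrases the split via conditioning on the event $E$ that the least-improving linked pair is of type~0 (versus its complement $E^c$), whereas you describe it as applying the black-box lemma once per type and taking the maximum; the resulting computations and bounds are identical.
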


\begin{proof}
    We assume for this proof that the entire instance is contained
    within $[-D, D]^d$, with $D = \Theta(1 + \sigma\sqrt{n \log n})$.
    This occurs with probability at least $1 - 1/n!$. Thus, with
    probability at least $1 - 1/n!$, the longest tour in the instance
    has length at most $2\sqrt{d}Dn$.
    The assumption that the entire instance lies within
    this hypercube enables us to use
    \Cref{lemma:type0_2change,lemma:type1a_2change,lemma:type1b_2change},
    which were proved under this assumption.
    
    Let $E$ denote the event that, among all type 0 and type 1 linked pairs of 2-changes,
    the pair with the smallest
    improvement is of type 0, and let $E^c$ denote the event that this pair is
    of type 1a or type 1b. Let the random variable $T$ denote the number of
    iterations taken by 2-opt to reach a local optimum.
    
    We first compute $\expect(T\given E)$. We apply \Cref{lemma:blackbox}
    with $\alpha = 2$, which
    is feasible due to \Cref{lemma:type0_2change}. We then obtain
    immediately that $\expect(T\given E) = O(dD^2n^4/\sigma^2)$.
    
    Next, we compute $\expect(T\given E^c)$. In this case, we apply
    \Cref{lemma:blackbox} with $\alpha = 3/2$
    (cf.\ \Cref{lemma:type1a_2change,lemma:type1b_2change}). This yields
    $\expect(T\given E^c) = O(dD^2n^{4+\frac{1}{3}}/\sigma^2)$.
    
    Combining the bounds for $E$ and $E^c$ yields the result.
\end{proof}

\section{\boldmath Improving the Analysis for \texorpdfstring{$d \geq 3$}{TEXT}}
\label{sec:d3}

The bottleneck in \Cref{thm:smoothed_complexity} stems from
\Cref{lemma:type1a_2change,lemma:type1b_2change}, which bound
the probability that any linked pair of type 1a or type 1b improves
the tour by at most $\epsilon$. The probability given by these lemmas
is proportional to $\epsilon^{3/2}$, which yields an extra factor
of $n^{1/3}$ compared to type 0 linked pairs.

For $d \geq 3$, we can improve this to $\epsilon^2$, yielding improved
smoothed complexity bounds. The key to this improvement is to use
the second part of \Cref{cor:angle_distr_bound} to bound
the density of $\eta_i$ as in \Cref{lemma:eta_density}.
This immediately yields the following result
on $\eta_i = \|a - z_i\| - \|b - z_i\|$.

\begin{lemma}\label{lemma:eta_density_d3}
    Let $i \in [2]$, and assume that $\X \subseteq [-D, D]^d$. The density of $\eta_i$ in $d \geq 3$ dimensions, conditioned on $A_i = a_i$ and
    $R = r$, is bounded from above by
    \[
        O\left(\frac{a_i + r}{a_i r} \cdot \left(\sqrt{d} + \frac{D\min\{r, a_i\}}{\sigma^2}\right)\right).
    \]
\end{lemma}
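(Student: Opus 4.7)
The plan is to combine Lemma \ref{lemma:eta_density} with the second (i.e., $d \geq 3$) bound of Corollary \ref{cor:angle_distr_bound}. Lemma \ref{lemma:eta_density} already tells us that
\[
f_{\eta_i \mid R=r,\, A_i=a_i}(\eta) \leq \frac{a_i + r}{a_i r} \cdot \frac{f_{\phi_i \mid R=r,\, A_i=a_i}(\phi_i(\eta))}{|\sin\phi_i(\eta)|},
\]
so the task reduces to upper bounding the ratio $f_{\phi_i \mid R=r, A_i=a_i}(\phi)/\sin\phi$. This is exactly the quantity controlled by the second half of Corollary \ref{cor:angle_distr_bound} once we identify the base point $x$ with $a$, and the two Gaussian points $y, z$ with $z_i$ and $b$ respectively.

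Under this identification the corollary's $R$ becomes $a_i$, its $S$ becomes $r$, and (in its notation) $\bar{r} = \|a - \expect(z_i)\|$ and $\bar{s} = \|a - \expect(b)\|$. The corollary then gives
\[
\sup_{\phi \in (0,\pi)} \frac{f_{\phi_i \mid R=r,\, A_i=a_i}(\phi)}{\sin\phi}
= O\left(\sqrt{d} + \frac{\min\{a_i\bar{r},\, r\bar{s}\}}{\sigma^2\sqrt{d}}\right).
\]
The assumption $\X \subseteq [-D,D]^d$ forces $a$ into $[-D,D]^d$, and the unperturbed means lie in $[-1,1]^d \subseteq [-D,D]^d$; hence $\bar{r},\bar{s} = O(\sqrt{d}\,D)$. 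Substituting into the minimum, the factor $\sqrt{d}$ in the numerator cancels the $\sqrt{d}$ in the denominator, leaving $O\!\bigl(\sqrt{d} + D\min\{a_i,r\}/\sigma^2\bigr)$. Multiplying by the prefactor $(a_i+r)/(a_i r)$ from Lemma \ref{lemma:eta_density} yields the claimed bound.

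The only genuine subtlety, and thus the main point to get right, is that Corollary \ref{cor:angle_distr_bound} requires a deterministic base point, whereas $a$ is itself Gaussian. I would handle this by first conditioning on the realised location of $a$: this leaves $z_i$ and $b$ as independent Gaussians (since the original perturbations are jointly independent), so the corollary applies to the conditional density of $\phi_i$. Because the resulting upper bound depends on $a$ only through the quantities $\bar r, \bar s$, both uniformly $O(\sqrt{d}\,D)$ under the containment event, the same bound survives integration against the marginal law of $a$. Beyond this conditioning step and the careful bookkeeping of the $\sqrt{d}$ cancellation that makes the improvement over the $d = 2$ case possible, no further work is needed.
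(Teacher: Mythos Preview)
Your proof is correct and follows essentially the same approach as the paper: apply Lemma~\ref{lemma:eta_density} and then bound the ratio $f_{\phi_i}/\sin\phi_i$ via the $d\geq 3$ case of Corollary~\ref{cor:angle_distr_bound}, using the containment $\X\subseteq[-D,D]^d$ to control $\bar r,\bar s$. You are in fact more explicit than the paper about the conditioning on the realised location of $a$ and about the $\sqrt{d}$ cancellation, but these are just spelled-out versions of the same argument.
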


\begin{proof}
    We call the desired density $f_{\eta_i|A=a_i,R=r}$. From
    \Cref{lemma:eta_density}, we know that
    \[
        f_{\eta_i|A_i=a_i,R=r}(\eta) \leq \frac{a_i+r}{a_i r}
            \cdot \frac{f_{\phi_i|A_i=a_i,R=r}(\phi_i(\eta))}{|\sin \phi_i(\eta)|}.
    \]
    Since $d \geq 3$, we can use the second part of \Cref{cor:angle_distr_bound}
    to obtain the desired bound, making use of the
    assumption that all points fall within $[-D,D]^d$.
\end{proof}

\Cref{lemma:eta_density_d3} enables us to find an improved version of \Cref{lemma:single_2change_conditioned}.

\begin{lemma}\label{lemma:single_2change_d3}
    Let $\Delta$ denote the improvement of a 2-change in $d \geq 3$ dimensions. Let
    $i \in [2]$, and assume that $\X \subseteq [-D, D]^d$. Then
    \[
        \prob(\Delta \in (0, \epsilon] \given A_i = a_i, R = r) = O\left(
            \left(\frac{\sqrt{d}}{\min\{a_i, r\}} + \frac{D}{\sigma^2}\right) \cdot \epsilon
        \right).
    \]
\end{lemma}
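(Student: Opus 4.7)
The plan is to follow the same adversary argument used to prove \Cref{lemma:single_2change_conditioned}, but to plug in the sharper density bound of \Cref{lemma:eta_density_d3} in place of the generic angle-density estimate used there. By symmetry I will assume $i = 1$ and write $\Delta = \eta_1 + \eta_2$ as in \Cref{sec:single_2change}. Conditioning on $A_1 = a_1$ and $R = r$, I let the adversary reveal $\eta_2 = t$ at its worst-case value; then $\Delta \in (0, \epsilon]$ is equivalent to $\eta_1 \in (-t, -t + \epsilon]$, an interval of length $\epsilon$. Consequently
\[
    \prob(\Delta \in (0, \epsilon] \given A_1 = a_1, R = r) \leq \epsilon \cdot \sup_\eta f_{\eta_1 \given A_1 = a_1, R = r}(\eta).
\]

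I would then invoke \Cref{lemma:eta_density_d3} to bound the supremum by $O\!\left(\frac{a_1 + r}{a_1 r}\bigl(\sqrt{d} + D\min\{r, a_1\}/\sigma^2\bigr)\right)$. One line of algebra using
\[
    \frac{a_1+r}{a_1 r} = \frac{1}{a_1} + \frac{1}{r} \leq \frac{2}{\min\{a_1, r\}}
\]
distributes through and cancels the $\min\{r, a_1\}$ factor against one of the reciprocals, producing exactly the claimed bound
\[
    O\!\left(\left(\frac{\sqrt{d}}{\min\{a_1, r\}} + \frac{D}{\sigma^2}\right)\epsilon\right).
\]

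The single subtlety I anticipate is justifying that the adversarial conditioning on $\eta_2$ does not inflate the supremum of $f_{\eta_1}$, since the density bound of \Cref{lemma:eta_density_d3} was stated only with respect to $A_1$ and $R$. This should reduce to a standard conditional-independence argument: once the positions of the shared endpoints $a$ and $b$ are fixed, $\eta_1$ and $\eta_2$ are functions of the independent Gaussian perturbations $z_1$ and $z_2$ respectively, so $\eta_1 \perp \eta_2 \mid (a, b, A_1 = a_1)$. Hence the conditional density of $\eta_1$ given $(a, b, A_1, R)$ is unaffected by additional conditioning on $\eta_2$, and the bound of \Cref{lemma:eta_density_d3} applies uniformly in $(a, b)$; integrating over $(a, b)$ preserves it. I do not expect a genuine obstacle here, as all the substantive work --- including the use of the sine-weighted bound from \Cref{cor:angle_distr_bound}, which is precisely what forces $d \geq 3$ --- has already been discharged in \Cref{lemma:eta_density_d3}.
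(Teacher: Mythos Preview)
Your proposal is correct and follows essentially the same route as the paper: fix $\eta_j$ adversarially, bound the density of $\eta_i$ via \Cref{lemma:eta_density_d3}, and simplify. Your one-line algebra using $\frac{a_1+r}{a_1 r} \le \frac{2}{\min\{a_1,r\}}$ is in fact slightly cleaner than the paper's case split on $a_i \lessgtr r$, and your explicit independence justification is more careful than the paper's terse adversary remark, but the argument is the same.
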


\begin{proof}
    Let $j = 3 - i$. We assume that $\eta_j = t$ is fixed by the adversary. Then
    $\Delta \in (0, \epsilon]$ iff $\eta_i \in (-t, -t + \epsilon] =: I$, an interval
    of size $\epsilon$. By \Cref{lemma:eta_density_d3}, we have a bound for
    the density of $\eta_i$. Thus, we find
    \[
        \prob(\Delta \in (0, \epsilon] \given A_i = a_i, R=r)
            = O\left(\frac{a_i + r}{a_i r} \cdot \left(\sqrt{d} + D\min\{r, a_i\}/\sigma^2\right) \cdot \epsilon\right).
    \]
    Considering the cases $a_i \leq r$ and $a_i < r$ separately and using the assumptions that
    all points lie within $[-D, D]^d$ and that
    $D \geq 1$ and $\sigma \leq 1$ yields the stated result.
\end{proof}

The following lemma now yields the probability that any linked pair of 2-changes
improves the tour by at most $\epsilon$. We omit the proof, since it follow easily
from \Cref{lemma:single_2change_d3} along the same lines as
the lemmas in \Cref{sec:linked_pairs}.

\begin{lemma}\label{lemma:linked_pairs_d3}
    Let $\Delta^{\mathrm{link}}_\mathrm{min}$ denote the minimum improvement
    of any linked pair of 2-changes of type 0 or type 1 for $d \geq 3$,
    and assume that $\X \subseteq [-D, D]^d$. Then
    \[
        \prob(\Delta^\mathrm{link}_{\mathrm{min}} \in (0, \epsilon])
            = O\left(
                \frac{D^2 n^6 \epsilon^2}{\sigma^4}
            \right).
    \]
\end{lemma}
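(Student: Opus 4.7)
My plan is to follow the template of the proofs of \Cref{lemma:type0_2change,lemma:type1a_2change,lemma:type1b_2change}, but to swap in the improved single-2-change bound from \Cref{lemma:single_2change_d3} in place of the $d$-generic bounds from \Cref{sec:single_2change}. Because \Cref{lemma:single_2change_d3} yields a probability linear in $\epsilon$ (rather than the $\sqrt{\epsilon}$ that forced the $\epsilon^{3/2}$ loss for type 1 linked pairs in \Cref{sec:linked_pairs}), multiplying two such bounds and integrating out the remaining distances should give the desired $\epsilon^2$ scaling uniformly across types 0, 1a, and 1b. A union bound over the $O(n^6)$ linked pairs then yields the stated expression.

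For type 0, I would condition on the length $A_1 = a_1$ of the shared edge $\{z_3, z_6\}$, under which $\Delta_1$ and $\Delta_2$ are independent since their remaining randomness comes from disjoint Gaussian perturbations. For each individual 2-change I would further condition on the appropriate diagonal distance $R_k$, apply \Cref{lemma:single_2change_d3}, and then integrate $R_k$ out against its noncentral chi density using \Cref{lemma:chi_stochdom,lemma:chi_expect}, yielding $\prob(\Delta_k \in (0,\epsilon] \given A_1 = a_1) = O((\sqrt{d}/a_1 + D/\sigma^2)\cdot \epsilon)$ after collapsing lower-order terms via $D \geq 1$ and $\sigma \leq 1$. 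Squaring, expanding, and integrating out $a_1$ gives $O(D^2\epsilon^2/\sigma^4)$ per pair. For type 1a, I condition on $A_2 = \|z_3 - z_4\|$ and $A_3 = \|z_4 - z_5\|$; $\Delta_1$ is again handled via \Cref{lemma:single_2change_d3}, while for $\Delta_2$ I would observe that $\eta = \|z_3 - z_5\| - \|z_4 - z_5\|$ is precisely an $\eta_i$ random variable as in \Cref{lemma:eta_density_d3}, with $A_i$ playing the role of $A_3$ and $R$ that of $A_2$. Its density is therefore bounded by $O((1/a_2 + 1/a_3)(\sqrt{d} + D\min\{a_2,a_3\}/\sigma^2))$, which after a case split on the ordering of $a_2, a_3$ simplifies to $O(\sqrt{d}/\min\{a_2,a_3\} + D/\sigma^2)$. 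Integrating this density over the length-$\epsilon$ interval pinned down by the adversary's choice of $\|z_1 - z_4\| - \|z_1 - z_3\|$ gives the analogous bound for $\Delta_2$. Conditional independence of the two events holds because, once the positions of $z_1, z_3, z_4$ are fixed, $\Delta_1$ depends only on $z_2$ and $\eta$ only on $z_5$, and these perturbations are independent; integrating the product of the two bounds over $a_2, a_3$ then yields $O(D^2\epsilon^2/\sigma^4)$ per pair. Type 1b is treated identically, conditioning on $R = \|z_1 - z_2\|$ and using $\eta = \|z_2 - z_5\| - \|z_1 - z_5\|$.

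The main obstacle — and the reason the argument is restricted to $d \geq 3$ — is the integrability of the $1/a^2$ term that appears once the single-2-change bound is squared: by \Cref{lemma:chi_expect} one needs $d > 2$ to obtain $\int x^{-2}\chi_d(x) \dd x = \Theta(1/(d\sigma^2))$. Everything else is routine bookkeeping, where the main care is to argue conditional independence through the independence of the underlying Gaussian perturbations rather than only through the conditioned distances, and to repeatedly invoke $D \geq 1$ and $\sigma \leq 1$ to absorb lower-order terms after each integration.
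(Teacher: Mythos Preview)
Your proposal is correct and follows essentially the same approach the paper intends: the paper omits the proof, stating only that it follows from \Cref{lemma:single_2change_d3} along the lines of \Cref{sec:linked_pairs}, and your outline carries out precisely that program---conditioning on the shared distance to gain independence, applying \Cref{lemma:single_2change_d3} (or \Cref{lemma:eta_density_d3} directly for the $\eta$ variable in type 1), multiplying the two linear-in-$\epsilon$ bounds, and integrating out the remaining distances via \Cref{lemma:chi_stochdom,lemma:chi_expect}. Your observation that the $1/a^2$ term requires $d>2$ for integrability is exactly the restriction the paper flags.
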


We then obtain our result for $ d \geq 3$.

\begin{theorem}\label{thm:smoothed_complexity_d3}
    The expected number of iterations performed by 2-opt for smoothed Euclidean instances
    of TSP in $d \geq 3$ dimensions is bounded from above by
    $
        O\left(
            \sqrt{d}D^2 n^4/\sigma^2
        \right)
    $.
\end{theorem}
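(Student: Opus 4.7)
The plan is to mirror the structure of the proof of \Cref{thm:smoothed_complexity}, but with two simplifications made possible by \Cref{lemma:linked_pairs_d3}. First, the uniform bound over all type 0 and type 1 linked pairs makes the case distinction on the type of the minimum-improving pair unnecessary. Second, the $\epsilon^{3/2}$ barrier from \Cref{lemma:type1a_2change,lemma:type1b_2change} is replaced by $\epsilon^2$, so we can apply \Cref{lemma:blackbox} with $\alpha = 2$ across the board.

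Concretely, I would first invoke \Cref{lemma:box} to condition on $\X \subseteq [-D,D]^d$ with $D = \Theta(1 + \sigma\sqrt{n\log n})$; this holds with probability at least $1 - 1/n!$ and ensures that the longest tour has length at most $L = 2\sqrt{d} D n$. This is also the regime in which \Cref{lemma:linked_pairs_d3} applies, so under this event we have
\[
    \prob\bigl(\Delta^{\mathrm{link}}_{\mathrm{min}} \in (0, \epsilon]\bigr) = O\!\left(\frac{D^2 n^6}{\sigma^4} \cdot \epsilon^2\right),
\]
which is precisely the hypothesis of \Cref{lemma:blackbox} with $P = D^2 n^6/\sigma^4$ and $\alpha = 2$.

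Applying the linked-pair version of \Cref{lemma:blackbox} then yields smoothed complexity at most $O(P^{1/\alpha} L) = O(\sqrt{d}\, D^2 n^4 / \sigma^2)$. I would then verify the side condition $P^{1/\alpha} L = \Omega(n^2)$: since $D \geq 1$, $\sigma \leq 1$, and $d \geq 3$, we have $P^{1/\alpha} L = \Omega(\sqrt{d}\, n^4) = \Omega(n^2)$, so the condition is satisfied. Finally, the contribution from the complementary event $\X \not\subseteq [-D,D]^d$, which has probability at most $1/n!$, is handled by a trivial worst-case bound on the number of 2-opt iterations (e.g.\ the number of tours, $n!/2$), giving only an $O(1)$ additive term in expectation.

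I do not foresee a genuine obstacle: the whole proof is essentially bookkeeping built on the nontrivial work already done in \Cref{lemma:linked_pairs_d3} (which in turn rests on the $d \geq 3$ part of \Cref{cor:angle_distr_bound}). The only points requiring some care are the verification of the side condition in \Cref{lemma:blackbox} and the disposal of the low-probability escape event, both of which are standard.
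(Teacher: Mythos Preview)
Your proposal is correct and follows essentially the same approach as the paper, which simply cites \Cref{lemma:box}, \Cref{lemma:linked_pairs_d3}, and \Cref{lemma:blackbox} in a two-line proof. Your version is in fact more explicit than the paper's, spelling out the side condition $P^{1/\alpha}L = \Omega(n^2)$ and the handling of the escape event, both of which the paper leaves implicit.
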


\begin{proof}
    The theorem follows immediately from applying \Cref{lemma:blackbox,lemma:linked_pairs_d3},
    since by \Cref{lemma:box} any tour in our smoothed instance has length at most
    $2\sqrt{d}Dn$ with probability at least $1 - 1/n!$.
\end{proof}

\section{Discussion}

\begin{table}
\centering
\begin{tabular}{lccc}
             & Englert, R\"oglin \& V\"ocking \cite{englertWorstCaseProbabilistic2014} & Manthey \& Veenstra \cite{mantheySmoothedAnalysis2Opt2013} & This paper \\ \hline\hline
     $d = 2$ & $O\left(n^{4+\frac{1}{3}}/\sigma^{5+\frac{1}{3}}\cdot\log\frac{n}{\sigma}\right)$ & -
         & $O\left(n^{4+\frac{1}{3}}/\sigma^2\right)$ \\ \hline
     $d = 3$ & $O\left(n^{4+\frac{1}{3}}/\sigma^8 \cdot \log \frac{n}{\sigma}\right) $ & -
         & $O\left(n^4/\sigma^2 \right)$ \\ \hline
     $d \geq 4$ & $O\left(c_d \cdot n^{4+\frac{1}{3}}/\sigma^{8d/3}\right)$ & $O\left(\sqrt{d}n^4/\sigma^4 \right)$
         & $O\left(\sqrt{d}n^4/\sigma^2\right)$
\end{tabular}
\caption{Previous and current smoothed complexity bounds for Gaussian noise, for $\sigma = O(1/\sqrt{n\log n})$.
Note that for $d \geq 4$, the bounds of Englert et al.\ include a factor $c_d$ which is super-exponential in $d$.
\label{table_smallsigma}}
\end{table}

\begin{table}
\centering
\begin{tabular}{lccc}
             & Englert, R\"oglin \& V\"ocking \cite{englertWorstCaseProbabilistic2014} & Manthey \& Veenstra \cite{mantheySmoothedAnalysis2Opt2013} & This paper \\ \hline\hline
     $d = 2$ & $O\left(n^7 \log^{3+\frac{2}{3}} n \right)$ & -
         & $O\left(n^{5+\frac{1}{3}}\log n\right)$ \\ \hline
     $d = 3$ & $O\left(n^{8 + \frac{1}{3}}\log^5 n \right) $ & -
         & $O\left(n^5\log n \right)$ \\ \hline
     $d \geq 4$ & $O\left(c_d \cdot n^{4+\frac{1 + 4d}{3}} \log^{1 + \frac{4d}{3}}n\right)$ & $O\left(\sqrt{d}n^6\log^2 n \right)$
         & $O\left(\sqrt{d}n^{5} \log n\right)$
\end{tabular}
\caption{Previous and current smoothed complexity bounds for Gaussian noise, for $\sigma = \Omega(1/\sqrt{n\log n})$.
Note that for $d \geq 4$, the bounds of Englert et al.\ include a factor $c_d$ which is super-exponential in $d$.
\label{table_largesigma}}
\end{table}

For convenience, we provide comparisons of the previous smoothed complexity
bounds with our bound from \Cref{thm:smoothed_complexity} in
\Cref{table_smallsigma,table_largesigma}. These bounds
are provided both for small values of $\sigma$ and for large values, meaning
$\sigma = \Omega(1/\sqrt{n\log n})$ and $\sigma = O(1/\sqrt{n \log n})$.

Observe from \Cref{table_largesigma,table_smallsigma} that the bound for
$d = 2$ has a worse dependence on $n$ compared to the bound for $d \geq 3$. The
technical reasons for this difference can be understood from \Cref{sec:d3}.
A more intuitive explanation for the difference is that our analysis benefits
from large angles between edges in the smoothed TSP instance. In $d = 2$,
the density of these angles is maximal when they are small, while
for $d \geq 3$ it is maximal when the angles are large. In effect, this means that the
adversary has less power to specify these angles to our detriment when
$d \geq 3$.

From these tables, the greatest improvement is made
for $d = 3$, where we improve by $n^{3+\frac{1}{3}}\log^{4}n$ in
the large $\sigma$ case, and by $\sqrt[3]{n}\log(n/\sigma)/\sigma^6$ for small $\sigma$.
For $d = 2$, the improvement is more
modest at $n^{1+\frac{2}{3}}\log^{2+\frac{2}{3}}n$ for large $\sigma$
and $\log(n/\sigma)/\sigma^{3 + \frac{1}{3}}$
for small $\sigma$.
For $d \geq 4$, we improve by $n\log n$
for large $\sigma$,
and by $\sigma^{-2}$ for small $\sigma$.

Note that we improve upon previous
bounds mainly in the dependence on the perturbation strength. In an intuitive sense, this
is most substantial for instances that are
weakly perturbed from the adversarial instance, or in other words,
that are close to worst case. In addition, the small-$\sigma$ case
is considered more interesting for a smoothed analysis,
since small $\sigma$ model the intuition of smoothed analysis of a small perturbation, 
while large $\sigma$ reduce the analysis basically to an average-case analysis
In order to improve the explicit dependence on $n$,
which is the same as for Manthey \& Veenstra \cite{mantheySmoothedAnalysis2Opt2013}, we
believe new techniques are necessary.

As a final comment, we note that the techniques we employed in 
\Cref{sec:single_2change,sec:d3}
can also be used to improve and significantly simplify the analysis of the
one-step model used by Englert et al \cite{englertWorstCaseProbabilistic2014}.
For $d \geq 3$, the improvement amounts to a factor of $n^{1/3}\phi^{1/6}\log(n\phi)$,
while for $d = 2$, the improvement is just $\log(n\phi)$,
where $\phi$ denotes the upper bound of the density functions
used in the one-step model.

\printbibliography

\end{document}